\newtheorem{proposition}{Proposition}
\title{Multipartite Non-local Magic and SYK Model}
\author[a,b]{Vinay Malvimat,}
\author[c]{Matthieu Sarkis,}
\author[a]{Yena Suk,}
\author[a,b,d]{Junggi Yoon}
\affiliation[a]{Department of Physics, College of Science, Kyung Hee University, Seoul 02447, Republic of Korea}
\affiliation[b]{Research Institute for Basic Sciences, Kyung Hee University, Seoul 02447, Republic of Korea}
\affiliation[c]{Department of Physics and Materials Science, University of Luxembourg, L-1511 Luxembourg City, Luxembourg}
\affiliation[d]{International Center for Quantum Matter, Kyung Hee University, Seoul 02447, Republic of Korea}
\emailAdd{vinaymalvimat@khu.ac.kr}
\emailAdd{matthieu.sarkis@uni.lu}
\emailAdd{yenahapphy@khu.ac.kr}
\emailAdd{junggi.yoon@khu.ac.kr}
\date{}
\Crefname{figure}{Fig.}{Figs.}
\abstract{We investigate the structure of quantum magic in interacting disordered fermionic systems, quantifying non-stabilizerness via the fermionic stabilizer R\'enyi entropy (SRE). To resolve the distribution of magic across different scales, we introduce a \emph{multipartite non-local magic} functional, constructed from an inclusion-exclusion combination of subsystem contributions. This measure serves as a fine-grained diagnostic, isolating genuinely global contributions and revealing nontrivial interactions between local and collective supports of magic. 
We illustrate the measure on paradigmatic multipartite states and apply these diagnostics to the Sachdev-Ye-Kitaev model and its variants. Crucially, for thermal/typical ensembles, we observe a marked disparity between Thermal Pure Quantum (TPQ) states and the thermal density matrix. This reveals a \textit{concealed complexity}: the immense computational hardness characterizing the unitary evolution is encoded in the specific microstructure of the black hole microstates, while being washed out in the coarse-grained thermodynamic description. Furthermore, in $\mathcal N=2$ supersymmetric SYK, we show that while fortuitous BPS states exhibit intermediate stabilizer complexity, the multipartite measure unveils a rich, sector-dependent pattern of global correlations, distinguishing them from generic chaotic states.

}
\begin{document} 
	\maketitle
	\flushbottom

\pagebreak
\section{Introduction}\label{sec_intro}

Quantifying the classical simulability of quantum states lies at the heart of understanding the extent to which quantum computers can outperform classical devices. As quantum platforms move steadily toward scalable architectures, identifying the precise features that render quantum states difficult to simulate becomes increasingly crucial. States that evade such efficient simulation possess a form of computational ``nonclassicality'' collectively referred to as \emph{quantum magic} \cite{Veitch2014,HowardCampbell2017}. A foundational result in this direction is the Gottesman-Knill theorem, which shows that stabilizer states and all states reachable from them by Clifford operations can be efficiently simulated on a classical computer \cite{Gottesman:1998hu,PhysRevA.70.052328}. . Various measures have been developed to characterize this magic, each aiming to quantify how far a quantum state lies beyond the classically tractable stabilizer framework \cite{Kenfack:2004ges,Nystrom:2024oeq,park2025localwignermassmapsintegrated}.

For the present work, we focus on one such measure, the \emph{Stabilizer Rényi Entropy} (SRE)~\cite{LeonePRL2022,LeonePRA2024}. This quantity provides a quantitative assessment of how far a quantum state deviates from the set of stabilizer states. Concretely, it is defined by expanding the density matrix of a state in the Pauli operator basis or Majorana basis and examining the distribution of its stabilizer-compatible components. States with a highly concentrated Pauli/Majorana distribution as in stabilizer states possess low SRE and are therefore classically simulable. In contrast, non-stabilizer states spread their weight over many Pauli/Majorana strings, yielding a larger SRE and signaling a greater degree of ``magic'' or nonclassical complexity. The SRE thus serves as a powerful indicator of the classical intractability of a quantum state. 

In practice, the SRE has become a broadly used tool of non-stabilizerness across many systems. In quantum optics, it can track the time evolution of atomic magic in the Jaynes-Cummings model~\cite{shuangshuang2022dynamics}, while in quantum chemistry it has been used to measure non-stabilizerness in molecular bonding~\cite{sarkis2025molecules}. In many-body settings it has been applied to permutationally invariant systems and kinetically constrained Rydberg-atom arrays~\cite{passarelli2024nonstabilizerness,smith2406non}, to fermionic problems, including strongly interacting models and fermionic Gaussian states~\cite{Bera:2025pfp,collura2412quantum}, as well as to hybrid boson-fermion systems~\cite{sarkis2025magic}. It also quantifies magic growth and spreading in random circuits and generic ergodic dynamics~\cite{turkeshi2025magic,tirrito2024anticoncentration}, and extends naturally to non-Hermitian regimes where it helps design protocols for producing highly magic states~\cite{martinezazcona2025magic}.

In this broader context, it is natural to ask whether black hole microstates or, more generally, quantum states arising in quantum field theories with holographic black hole duals are easy or hard to simulate classically. Put differently, do such states carry substantial quantum magic \cite{White:2020zoz,Cao:2024nrx,Basu:2025mmm,Basu:2025uxw}? The SYK model and its numerous variants provide an ideal setting in which to explore these questions. The SYK$_4$ model, in particular, has yielded deep insights into holography and quantum chaos due to its maximally chaotic dynamics and emergent gravitational features \cite{Sachdev_1993,Polchinski:2016xgd,Jevicki:2016bwu,Maldacena:2016hyu}. Its mass-deformed extension \cite{Banerjee:2016ncu,Garcia-Garcia:2017bkg,Nosaka:2018iat,Nandy:2022hcm} and sparse variants \cite{Xu:2020shn,Orman:2024mpw} further enrich this landscape by exhibiting transitions between chaotic and integrable behavior.
 These properties make SYK-type models a compelling playground for studying how quantum magic behaves in systems that mimic aspects of black holes. In this work, we therefore investigate the stabilizer Rényi entropy of states generated by these models and analyze how their magic content evolves across different dynamical and thermodynamic regimes (see also \cite{Bera:2025pfp,Jasser:2025myz,Zhang:2025rky}).  At the same time, growing evidence points to a deep connection between chaos and quantum magic \cite{Passarelli:2024lpm,PhysRevD.106.126009}, suggesting that non-classical resources may play a fundamental role in quantum chaotic dynamics. Furthermore, to refine the notion of magic, it is essential to distinguish contributions arising from genuinely global correlations from those generated by local or few-body structures. Measures that quantify only the total non-stabilizerness of a state may obscure how magic is distributed across its multipartite degrees of freedom. A multipartite non-local extension of the SRE is therefore required to isolate the component of magic that originates solely from long-range, collective correlations beyond any local stabilizer deviations.

In the present work, we take a step toward addressing these questions by introducing a multipartite non-local extension of the stabilizer Rényi entropy, designed to isolate the component of quantum magic that originates from genuinely global correlations. This refined measure allows us to distinguish between local non-stabilizerness and the collective, long range structure that plays a central role in holographic and chaotic systems. We then apply both the conventional SRE and our multipartite extension to a range of SYK models including the canonical SYK$_4$, its mass-deformed and sparse variants. We investigate how mass deformation and sparseness influence the generation and structure of quantum magic in these models.
Subsequently, we examine these quantities in the ${\cal N}=2$ supersymmetric extension to probe how magic is generated and distributed in systems featuring fortuitous BPS states, which are proposed black-hole microstate candidates \cite{Fu:2016vas,Chang:2024lxt}. By comparing the SRE and its multipartite counterpart across BPS, $Q$-exact, and typical states within fixed charge sectors, we reveal how supersymmetric cohomology and fortuity influence their nonclassical complexity. This, in turn, allows us to assess the classical simulability of states emerging in chaotic, integrable, and supersymmetric regimes and to elucidate the structure of quantum magic in these models.

In \Cref{sec 2}, we review the conventional stabilizer Rényi entropy and introduce a multipartite non-local extension designed to capture global contributions to magic. In \Cref{sec 3} we apply these measures to the SYK$_4$ model, analyzing their behavior under real-time evolution and in thermal ensembles. \Cref{sec 4} investigates the impact of mass deformation on quantum magic, while \Cref{sec 5} explores how sparseness modifies the structure of magic in sparse SYK variants. In \Cref{sec 6} we examine the distribution of magic in the ${\cal N}=2$ supersymmetric SYK model, highlighting features associated with fortuitous BPS states. Finally, \Cref{sec 7} summarizes our findings and outlines potential future directions.

\section{Non-Stabilizerness and Stabilizer Renyi Entropy}\label{sec 2}

\subsection{Non-stabilizerness in finite Majorana systems and the stabilizer Renyi entropy}
\label{sec:majorana_sre}

    The resource theory of non-stabilizerness (or ``magic'') formalizes the gap between classically simulable operations and universal quantum computation. In its canonical form for qubits, the free states are the \emph{stabilizer states}, the free unitaries are the \emph{Clifford group}, and free measurements and classical randomness complete the classically simulable operations~\cite{Veitch2014,HowardCampbell2017}. In this framework, any deviation from the stabilizer realm quantifies a computational resource non-stabilizerness captured by a hierarchy of monotones such as robustness of magic and the \emph{stabilizer Rényi entropy} (SRE)~\cite{LeonePRL2022,LeonePRA2024}.
    
    In the fermionic setting, particularly for systems composed of a finite number of Majorana modes, one can formulate a parallel structure where the free operations are the \emph{fermionic Clifford group} which is the stabilizer group of the \emph{Majorana group} and the free states are the corresponding \emph{Majorana stabilizer states}. In this section, we present the stabilizer Rényi entropy defined via the \emph{Majorana spectrum}, which quantifies non-stabilizerness for the fermionic systems to be studied in the rest of the paper.

\subsection{Clifford Majorana group and Majorana stabilizer states}
\label{sec:stab_majorana}

    Consider a system of $2n$ Majorana operators
    \begin{equation}
    \label{eq:majorana-frame}
        \Gamma \equiv \{\gamma_1,\dots,\gamma_{2n}\},\qquad \gamma_j^\dagger=\gamma_j,\qquad \{\gamma_j,\gamma_k\}=2\delta_{jk}\,,
    \end{equation}
    acting on a $2^n$-dimensional Hilbert space, with even-parity superselection imposed. 
    The choice of the ordered set $\Gamma$ will be referred to as the \emph{Majorana frame}; in what follows, all notions are defined relative to this fixed frame. Denote by
    \begin{equation}
    \label{eq:strings}
        \Gamma_S = i^{\frac{|S|(|S|-1)}{2}} \prod_{j\in S}\gamma_j,\qquad S\subseteq [2n],\qquad \Gamma_\varnothing=1\,,
    \end{equation}
    the Hermitian Majorana strings. They furnish an orthonormal operator basis for the Hilbert–Schmidt inner product $\langle A,B\rangle=2^{-n}\text{Tr}(A^\dagger B)$.
    
    The \emph{Majorana group} $\mathcal{M}_\Gamma$ is generated by all $\Gamma_S$ together with overall phases $\{\pm 1,\pm i\}$. Its normalizer inside $\mathrm{U}(2^n)$ defines the \emph{Majorana Clifford group}
    \begin{equation}
    \label{eq:clifford}
        \mathcal{C}_\Gamma = \{U\in \mathrm{U}(2^n)\,|\, U\mathcal{M}_\Gamma U^\dagger=\mathcal{M}_\Gamma\}\,.
    \end{equation}
    Equivalently, $U\in \mathcal{C}_\Gamma$ iff $U$ acts as a signed permutation on the fixed Majoranas:
    \begin{equation}
    \label{eq:permutation}
        U\,\gamma_j\,U^\dagger = \sigma_j\,\gamma_{\pi(j)},\qquad \sigma_j\in\{\pm 1\},\ \ \pi\in S_{2n}\,.
    \end{equation}
    
    A \emph{stabilizer subgroup} is a maximal Abelian subgroup $\mathcal{S}\subset \mathcal{M}_\Gamma$ generated by $n$ independent, commuting, \emph{even-parity} strings\footnote{Physical states are assumed to respect fermionic parity superselection.}. The unique common $+1$ eigenstate of $\mathcal{S}$ is a \emph{stabilizer state}. With the frame fixed, the set of pure stabilizer states $\mathrm{STAB}_\Gamma$ is finite\footnote{For qubits, for which no superselection rule applies, the number of stabilizer states is given by $
    |\mathrm{STAB}_\Gamma|= 2^n \prod_{k=1}^n (2^k+1)$.}:
    \begin{equation}
    \label{eq:stab-count}
        |\mathrm{STAB}_\Gamma| = 2^n\,(2^n-1)\,\prod_{k=1}^{n-1} \big(2^{\,k}+1\big)\,.
    \end{equation}
    For instance for $2$ modes with frame $\{\gamma_1,\gamma_2\}$, the even-parity nontrivial string is $i\gamma_1\gamma_2$. The two maximal Abelian subgroups are
    \begin{equation}
        \mathcal{S}_\pm = \{1,\pm i\gamma_1\gamma_2\}\,,
    \end{equation}
    with stabilizer states $|0\rangle$ and $|1\rangle=b^\dagger|0\rangle$ in the Fock basis $b=(\gamma_1+i\gamma_2)/2$. 
    
    The free operations of the resource theory are:
    (i) unitaries in $\mathcal{C}_\Gamma$; 
    (ii) projective measurements of commuting even-parity strings from $\mathcal{M}_\Gamma$; and 
    (iii) classical randomness/post-processing. These are efficiently simulable and mirror the qubit stabilizer setting.

\subsection{Majorana spectrum and the stabilizer Rényi entropy}

    For a parity-even density operator $\rho$, expand it in the orthonormal basis $\{\Gamma_S\}$:
    \begin{equation}
        \rho = \frac{1}{2^n}\sum_{S\subseteq[2n]} c_S \,\Gamma_S,\qquad c_S =\text{Tr}(\rho\,\Gamma_S)\,.
    \end{equation}
    Define the \emph{Majorana spectrum} relative to the frame $\Gamma$,
    \begin{equation}
    \label{eq:maj-spectrum}
        p_S = \frac{|c_S |^2}{2^n},\qquad \sum_{S} p_S=1\,.
    \end{equation}
    For $\alpha\geq 1$, the Rényi-$\alpha$ entropy of the spectrum defines the \emph{stabilizer Rényi entropy}:
    \begin{equation} 
    \label{eq:def_SRE}
        \mathrm{SRE}_\alpha(\rho)=\frac{1}{1-\alpha}\log \sum_S p_S^\alpha  -  n\log 2.
    \end{equation}
    where in analogy with the qubit case, we normalized by the stabilizer baseline $n\log 2$ to ensures that all stabilizer states have $\mathrm{SRE}_\alpha=0$, so SRE quantifies non-stabilizerness. We will focus on the case $\alpha=2$ in the rest of the paper.
    
    To extend the definition to mixed states in a resource-theoretic sense, one needs to adopt the convex-roof-like construction. This approach involving a computationally heavy optimization step over the space of purifications, one typically adopts the following definition for mixed states:
    \begin{equation}
    \label{eq:def_SRE_mixed}
        \mathrm{M}_2 \equiv \mathrm{SRE}_2 - S_2 \ ,
    \end{equation}
    by simply substracting away the contribution of the Rényi entropy $S_2(\rho)=-\log\mathrm{Tr}(\rho^2)$. Though not a magic monotone per se, this quantity proves to be a good proxy of magic for mixed states. More explicitely, we have:
    \begin{equation}
    \label{eq:def_SRE_mixed_simple}
        \mathrm{M}_2(\rho) = -\log\left[\frac{\sum_S c_S^4}{\sum_S c_S^2}\right]\ .
    \end{equation}

\subsection{Multipartite Non-Local Magic}

    \subsubsection{Definition and Properties}

    Drawing inspiration from the multipartite mutual information, given an $n$-partite system, we define the \textit{multipartite non-local magic} as:
    \begin{equation}\label{MNLDefin}
        \mathrm{M}^{(n)}_\textsc{nl}(\rho_{12\dots n}) := \sum_{a=1}^n(-1)^{n-a}\sum_{i_1<\dots<i_a=1}^n \mathrm{M}(\rho_{i_1\dots i_a})\ ,
    \end{equation}
    where $\mathrm{M}$ denotes $\mathrm{M}_2$ (from now on we omit the subscript for notational simplicity), and where we used obvious notations for the partial density matrices of the subsystems. The definition follows a very natural pattern avoiding multiple counting of parts contribution. 
    For instance in the tripartite case, we have:
    \begin{equation}
        \mathrm{M}^{(3)}_\textsc{nl}(\rho_{123}) := \mathrm{M}(\rho_{123}) - \mathrm{M}(\rho_{12}) - \mathrm{M}(\rho_{13}) - \mathrm{M}(\rho_{23}) + \mathrm{M}(\rho_{1}) + \mathrm{M}(\rho_{2}) + \mathrm{M}(\rho_{3})
    \end{equation}
    Slightly more compact expression in terms of subsets of $[n]\equiv\{1,2,\dots, n\}$ reads:
    \begin{equation}
    \label{eq:def-InM-new}
        \mathrm{M}^{(n)}_\textsc{nl}(\rho_{[n]}) =  \sum_{\emptyset \neq S \subseteq [n]} (-1)^{\,n-|S|}\,\mathrm{M}\left(\rho_S\right).
    \end{equation}
    with again obvious subset notations for partial density matrices.
    
    Let us prove here a few properties satisfied by the non-local magic. The properties naturally derive from the properties of the SRE and combinatorics. Let us therefore list some basic properties satisfied by $\mathrm{M}$:
    \begin{proposition}
    $\mathrm M$ satisfies the following properties:
    \begin{enumerate}
        \item (Nonnegativity) $\mathrm{M}(\rho)\ge 0$ for all states $\rho$.
        \item (Additivity) $\mathrm{M}(\rho\otimes\sigma) = \mathrm{M}(\rho) + \mathrm{M}(\sigma)$. \label{prop:additivity}
        \item (Invariance under Clifford unitaries) $\mathrm{M}(U\rho U^\dagger) = \mathrm{M}(\rho)$ for all  Clifford unitaries $U$. \label{prop:invariance}
        \item (Existence of zero-magic reference states) There exist states $\tau$ with $\mathrm{M}(\tau)=0$. For the concrete choice $\mathrm{M}=\mathrm{SRE}_2-S_2$, any pure stabilizer state has $\mathrm{M}(\tau)=0$, and some mixed states (e.g. maximally mixed) also satisfy $\mathrm{M}(\tau)=0$.
    \end{enumerate}
    \end{proposition}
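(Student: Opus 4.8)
The plan is to base every argument on the explicit closed form $\mathrm{M}(\rho) = -\log\bigl[\sum_S c_S^4 / \sum_S c_S^2\bigr]$ from \eqref{eq:def_SRE_mixed_simple}, together with two elementary facts about the coefficients $c_S = \Tr(\rho\,\Gamma_S)$: they are real (since $\rho$ and each $\Gamma_S$ are Hermitian) and bounded, $|c_S|\le 1$. The bound follows because each Hermitian string $\Gamma_S$ is an involution with spectrum $\{\pm1\}$, so $\|\Gamma_S\|_\infty = 1$ and $|c_S| = |\Tr(\rho\Gamma_S)| \le \|\rho\|_1\,\|\Gamma_S\|_\infty = 1$ for any density matrix. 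I would also record $c_\varnothing = \Tr\rho = 1$, which guarantees the denominator $\sum_S c_S^2 \ge 1 > 0$ is strictly positive, so $\mathrm{M}$ is always well defined.

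For nonnegativity, since $|c_S|\le 1$ we have $c_S^4 = (c_S^2)^2 \le c_S^2$ for every $S$, hence $\sum_S c_S^4 \le \sum_S c_S^2$; the argument of the logarithm is then $\le 1$, giving $\mathrm{M}(\rho)\ge0$.

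Additivity and Clifford invariance both reduce to how the coefficient vector $(c_S)$ behaves under the respective operation. For a product $\rho\otimes\sigma$ of two Majorana subsystems, the even-parity strings factorize as $\Gamma_{S} = \Gamma_{S_1}\Gamma_{S_2}$ with $S = S_1\sqcup S_2$, so $c_{S}^{\rho\otimes\sigma} = c_{S_1}^{\rho}\,c_{S_2}^{\sigma}$; consequently $\sum_S (c_S^{\rho\otimes\sigma})^k = \bigl(\sum_{S_1}(c_{S_1}^{\rho})^k\bigr)\bigl(\sum_{S_2}(c_{S_2}^{\sigma})^k\bigr)$ for $k=2,4$, and the logarithm splits into $\mathrm{M}(\rho)+\mathrm{M}(\sigma)$. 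For a Clifford $U$, \eqref{eq:permutation} implies $U\Gamma_S U^\dagger = \pm\,\Gamma_{\pi(S)}$, so the coefficients of $U\rho U^\dagger$ are a signed permutation $c_S \mapsto \pm c_{\pi^{-1}(S)}$ of those of $\rho$. Because $\mathrm{M}$ involves only the even powers $c_S^2$ and $c_S^4$, the signs drop out and each sum is merely reindexed, leaving $\mathrm{M}$ invariant.

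Finally, for the zero-magic reference states I would exhibit the coefficient vectors explicitly. The maximally mixed state $\rho = 2^{-n}\mathbf 1$ has $c_\varnothing = 1$ and $c_S = 0$ otherwise, so both sums equal $1$ and $\mathrm{M} = 0$. A pure stabilizer state is $\rho = 2^{-n}\sum_{g\in\mathcal S} g$, whose expansion assigns $c_S = \pm1$ to exactly the $2^n$ strings underlying the stabilizer group $\mathcal S$ and $0$ to all others; then $\sum_S c_S^2 = \sum_S c_S^4 = 2^n$ and again $\mathrm{M} = 0$. The one genuinely delicate point — and the step I would treat most carefully — is the factorization $c_S^{\rho\otimes\sigma}=c_{S_1}^\rho c_{S_2}^\sigma$ underlying additivity: combining two fermionic systems requires the graded, parity-respecting tensor product rather than a naive one, and one must verify that restricting to even-parity strings makes the relevant strings on the two factors commute so that their expectation values multiply. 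Once this is established within the even-parity superselection sector assumed throughout, the remaining manipulations are purely combinatorial.
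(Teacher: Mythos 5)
Your proof is correct, but note that the paper does not actually prove this proposition: the four properties are stated as a bare list of ``basic properties satisfied by $\mathrm{M}$'' (implicitly deferred to the stabilizer-R\'enyi-entropy literature) and are then invoked as inputs in the proofs of the subsequent propositions. Your argument is therefore a self-contained verification that the paper omits, and it is the right one: everything reduces to the coefficient vector $c_S=\mathrm{Tr}(\rho\,\Gamma_S)$, with nonnegativity following from $|c_S|\le 1$ (hence $c_S^4\le c_S^2$), Clifford invariance from the signed-permutation action $U\Gamma_S U^\dagger=\pm\Gamma_{\pi(S)}$ of the Majorana Clifford group together with the evenness of the powers appearing in $\mathrm{M}$, additivity from the factorization $c^{\rho\otimes\sigma}_{S_1\sqcup S_2}=c^{\rho}_{S_1}c^{\sigma}_{S_2}$, and the reference states exhibited explicitly. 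This is genuinely worth doing here, because $\mathrm{M}_2=\mathrm{SRE}_2-S_2$ is, as the paper concedes, not a magic monotone, so its properties cannot simply be quoted from pure-state qubit SRE results; in particular, nonnegativity of the mixed-state proxy requires exactly your H\"older-type bound $|\mathrm{Tr}(\rho\Gamma_S)|\le\|\rho\|_1\|\Gamma_S\|_\infty=1$, and the paper's later propositions rest entirely on properties 2 and 3. You were also right to single out the graded-tensor-product issue in additivity as the one delicate step: with even-parity superselection imposed on both factors, strings with odd support on either factor have vanishing coefficients, while even-even strings commute and their phase conventions match (since $i^{|S_1||S_2|}=1$ when $|S_1|,|S_2|$ are both even), which is what legitimizes the factorization---and hence the whole additivity claim---in the fermionic setting.
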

    Equipped with these properties, we have
    \begin{proposition}[Tripartite identity]
    For three parties,
    \begin{equation}
      \mathrm{M}^{(3)}_\textsc{nl}(\rho_{123})
      = -\mathrm{M}^{(2)}_\textsc{nl}(\rho_{12}) - \mathrm{M}^{(2)}_\textsc{nl}(\rho_{13}) + \mathrm{M}^{(2)}_\textsc{nl}(\rho_{1(23)}).
    \end{equation}
    Analogous identities hold if one singles out
    parts $2$ or $3$ instead of $1$, where the notation $\rho_{1(23)}$ means that the bipartition $(1)$ vs. $(23)$ of the system is considered.
    \end{proposition}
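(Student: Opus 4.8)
The plan is to treat this as a direct consequence of the definition \eqref{eq:def-InM-new} specialized to $n=2$ and $n=3$, combined with the single observation that regrouping parts leaves the full-system magic unchanged. First I would record the bipartite instance of the definition, namely $\mathrm{M}^{(2)}_\textsc{nl}(\rho_{ij}) = \mathrm{M}(\rho_{ij}) - \mathrm{M}(\rho_i) - \mathrm{M}(\rho_j)$, and write it out for the three pairs occurring on the right-hand side: $(12)$, $(13)$, and the coarse-grained pair $1|(23)$.

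The key structural point---and the only place where anything beyond arithmetic enters---is that $\rho_{1(23)}$ denotes exactly the same operator as $\rho_{123}$: the bipartition $1|(23)$ is merely a relabeling of the same Hilbert space. Since $\mathrm{M}$ is a function of the state relative to the fixed global Majorana frame, and not of any chosen tensor factorization, we have $\mathrm{M}(\rho_{1(23)}) = \mathrm{M}(\rho_{123})$. Likewise the reduced operators $\rho_{23}$, $\rho_1$, etc.\ appearing on the right are the ordinary marginals, identical to those entering the definition of $\mathrm{M}^{(3)}_\textsc{nl}$. With this identification the coarse-grained term becomes $\mathrm{M}^{(2)}_\textsc{nl}(\rho_{1(23)}) = \mathrm{M}(\rho_{123}) - \mathrm{M}(\rho_1) - \mathrm{M}(\rho_{23})$.

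I would then substitute the three expansions into the right-hand side and collect coefficients. The single-part term $\mathrm{M}(\rho_1)$ accumulates a net coefficient $+1+1-1 = +1$, while $\mathrm{M}(\rho_2)$ and $\mathrm{M}(\rho_3)$ each pick up $+1$; the pair terms $\mathrm{M}(\rho_{12})$, $\mathrm{M}(\rho_{13})$, $\mathrm{M}(\rho_{23})$ each appear with coefficient $-1$, and the full state $\mathrm{M}(\rho_{123})$ with $+1$. Comparing against the $n=3$ instance of \eqref{eq:def-InM-new}, these coefficients coincide termwise, which establishes the identity.

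Finally, for the analogous identities that single out part $2$ or $3$ instead, I would simply invoke the permutation symmetry of the statement: relabeling $1\leftrightarrow 2$ (or $1\leftrightarrow 3$) leaves the left-hand side $\mathrm{M}^{(3)}_\textsc{nl}(\rho_{123})$ invariant, since the defining sum over subsets in \eqref{eq:def-InM-new} is symmetric in the three labels, while it carries the right-hand side to the correspondingly regrouped expression. I do not expect a genuine obstacle here: the content is entirely the inclusion--exclusion bookkeeping of \eqref{eq:def-InM-new}, and the only conceptual input is the frame-independence remark identifying $\rho_{1(23)}$ with $\rho_{123}$.
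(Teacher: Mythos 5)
Your proposal is correct and coincides with what the paper intends: the paper's own ``proof'' is simply the word ``Obvious,'' i.e.\ exactly the direct expansion you carry out, expanding each $\mathrm{M}^{(2)}_\textsc{nl}$ via the $n=2$ case of \eqref{eq:def-InM-new}, identifying $\rho_{1(23)}$ with $\rho_{123}$, and matching coefficients against the $n=3$ case. Your coefficient bookkeeping and the permutation-symmetry remark for the variants singling out parts $2$ or $3$ are both accurate, so the proposal is a faithful, spelled-out version of the paper's argument.
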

    
    \begin{proof}
        Obvious.
    \end{proof}
        
    \begin{proposition}[States with vanishing local magic]
    Suppose $\rho_{[n]}$ is such that
    \begin{equation}
      \mathrm{M}(\rho_S) = 0
      \qquad\text{for all nonempty } S\subseteq[n].
    \end{equation}
    Then
    \begin{equation}
      \mathrm{M}^{(n)}_\textsc{nl}(\rho_{[n]}) = 0.
    \end{equation}
    \end{proposition}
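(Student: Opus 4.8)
The plan is to read the result off directly from the defining formula \eqref{eq:def-InM-new}. The non-local magic $\mathrm{M}^{(n)}_\textsc{nl}(\rho_{[n]})$ is, by construction, a finite linear combination of the single-subsystem quantities $\mathrm{M}(\rho_S)$, one for each nonempty $S\subseteq[n]$, weighted by the fixed integer coefficients $(-1)^{\,n-|S|}$. No deeper structural property of $\mathrm{M}$ — not even nonnegativity, additivity, or Clifford invariance from the preceding proposition — is needed at this stage: the statement is purely that a signed sum vanishes when every one of its summands is assumed to vanish.

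Concretely, I would first observe that the index set over which \eqref{eq:def-InM-new} sums is exactly $\{\,S : \emptyset \neq S \subseteq [n]\,\}$, which coincides verbatim with the collection of subsets appearing in the hypothesis $\mathrm{M}(\rho_S)=0$. Substituting the hypothesis term by term, each contribution becomes $(-1)^{\,n-|S|}\,\mathrm{M}(\rho_S) = (-1)^{\,n-|S|}\cdot 0 = 0$, so the whole sum collapses and $\mathrm{M}^{(n)}_\textsc{nl}(\rho_{[n]}) = 0$.

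There is essentially no obstacle here; the only point worth verifying is the bookkeeping that the hypothesis covers \emph{every} subset the definition actually invokes — in particular the full set $S=[n]$, all intermediate subsets, and all singletons — which it does, since both the hypothesis and the defining sum range over all nonempty $S\subseteq[n]$. The proposition is therefore immediate, and I would present it as such. Its real content is interpretive rather than technical: it confirms that the inclusion-exclusion construction assigns no spurious global magic to a state that is magic-free on every sub-collection of parties, so that any nonzero value of $\mathrm{M}^{(n)}_\textsc{nl}$ genuinely signals non-stabilizerness residing in the collective correlations rather than in any proper subsystem.
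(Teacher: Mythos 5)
Your proof is correct and matches the paper's, which simply declares the result ``Obvious'' --- the content being exactly your term-by-term substitution of $\mathrm{M}(\rho_S)=0$ into the signed sum \eqref{eq:def-InM-new}. Nothing further is needed.
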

    
    \begin{proof}
        Obvious.
    \end{proof}
    
    This property applies, for example, to certain special families such as the three-qubit GHZ state.
    
    \begin{proposition}[Product across a nontrivial partition, general $n$]
    Let $[n]$ be partitioned into $k\ge 2$ nonempty disjoint blocks
    \begin{equation}
      [n] = B_1 \,\cup\, B_2 \,\cup\, \cdots \,\cup\, B_k.
    \end{equation}
    Suppose
    \begin{equation}
      \rho_{1\cdots n} = \bigotimes_{j=1}^k \rho_{B_j}.
    \end{equation}
    Then
    \begin{equation}
      \mathrm{M}^{(n)}_\textsc{nl}(\rho_{[n]}) = 0.
    \end{equation}
    \end{proposition}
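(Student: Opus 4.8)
The plan is to leverage the additivity of $\mathrm{M}$ (Property~\ref{prop:additivity}) to turn the multipartite functional into a product of combinatorial factors, one of which is an alternating binomial sum that vanishes precisely because the partition is nontrivial.

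First I would observe that the assumed factorization of the global state descends to every marginal: tracing out the complement of an arbitrary $S\subseteq[n]$ gives
\[
\rho_S = \bigotimes_{j=1}^k \rho_{S\cap B_j},
\]
with the convention that a block with $S\cap B_j=\varnothing$ contributes a trivial one-dimensional factor carrying $\mathrm{M}=0$. Additivity then upgrades this tensor structure into the additive relation $\mathrm{M}(\rho_S)=\sum_{j=1}^k \mathrm{M}(\rho_{S\cap B_j})$. This is the only step where the product hypothesis enters, so it is worth stating carefully.

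Next I would substitute this into the definition \eqref{eq:def-InM-new}, freely extend the sum to include $S=\varnothing$ (harmless since $\mathrm{M}(\rho_\varnothing)=0$), and interchange the order of summation to obtain $\mathrm{M}^{(n)}_\textsc{nl}(\rho_{[n]}) = \sum_{j=1}^k \sum_{S\subseteq[n]}(-1)^{n-|S|}\,\mathrm{M}(\rho_{S\cap B_j})$. Fixing $j$, I would decompose each subset uniquely as $S=T\sqcup R$ with $T=S\cap B_j\subseteq B_j$ and $R=S\setminus B_j\subseteq[n]\setminus B_j$, so that $|S|=|T|+|R|$ and the summand depends on $T$ alone. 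The inner sum therefore factorizes:
\[
\sum_{S\subseteq[n]}(-1)^{n-|S|}\,\mathrm{M}(\rho_{S\cap B_j}) = (-1)^n\Bigl(\sum_{T\subseteq B_j}(-1)^{|T|}\,\mathrm{M}(\rho_T)\Bigr)\Bigl(\sum_{R\subseteq[n]\setminus B_j}(-1)^{|R|}\Bigr).
\]

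Finally I would evaluate the rightmost factor. Writing $m:=|[n]\setminus B_j|=n-|B_j|$, the binomial theorem gives $\sum_{R\subseteq[n]\setminus B_j}(-1)^{|R|}=\sum_{r=0}^{m}\binom{m}{r}(-1)^r=(1-1)^m$, which vanishes whenever $m\ge 1$. The hypothesis $k\ge 2$ guarantees a block distinct from $B_j$, so $[n]\setminus B_j\neq\varnothing$ and thus $m\ge 1$ for every $j$; consequently every term in the sum over $j$ is zero and $\mathrm{M}^{(n)}_\textsc{nl}(\rho_{[n]})=0$. I do not expect a genuine obstacle here: the argument is purely combinatorial once additivity is in hand, and the only points demanding care are the bookkeeping of the empty-intersection convention and the observation that $k\ge2$ is exactly what forces $m\ge1$, so that a partition into a single block would not suffice.
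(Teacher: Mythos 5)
Your proof is correct and takes essentially the same route as the paper's: factorization of the marginals $\rho_S = \bigotimes_j \rho_{S\cap B_j}$ plus additivity of $\mathrm{M}$, followed by the decomposition $S = T \sqcup R$ with $R \subseteq [n]\setminus B_j$ and the vanishing alternating binomial sum $\sum_{R}(-1)^{|R|} = (1-1)^{\,n-|B_j|} = 0$, which is exactly where $k\ge 2$ enters. The only (cosmetic) difference is that you extend the sum to include $S=\emptyset$ and factorize the inner sum cleanly into a product, making the convention $\mathrm{M}(\rho_\emptyset)=0$ explicit, whereas the paper keeps the sum over nonempty $S$, groups terms by nonempty $T = S\cap B_j$, and leaves that convention implicit.
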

    
    \begin{proof}
    For any subset $S\subseteq [n]$, we have
    \begin{equation}
      \rho_S = \bigotimes_{j=1}^k \rho_{S\cap B_j}.
    \end{equation}
    By additivity (\ref{prop:additivity}),
    \begin{equation}
      \mathrm{M}(\rho_S) = \sum_{j=1}^k \mathrm{M}(\rho_{S\cap B_j}).
    \end{equation}
    Hence
    \begin{equation}
    \begin{aligned}
      \mathrm{M}^{(n)}_\textsc{nl}(\rho_{[n]})
      &= \sum_{\emptyset\neq S\subseteq [n]} (-1)^{\,n-|S|} \,\mathrm{M}(\rho_S) = \sum_{j=1}^k \sum_{\emptyset\neq S\subseteq [n]} (-1)^{\,n-|S|}
         \,\mathrm{M}(\rho_{S\cap B_j}).
    \end{aligned}
    \end{equation}
    Fix a block $B_j$. For any $\emptyset\neq T\subseteq B_j$, group all subsets
    $S$ such that $S\cap B_j = T$. These are of the form
    $S = T\cup U$ with $U\subseteq \bar{B}_j := [n]\setminus B_j$, and $\bar{B}_j$ is
    nonempty because $k\ge 2$. For such $S$, $|S| = |T| + |U|$. Thus the contribution from a fixed nonempty $T\subseteq B_j$ is
    \begin{equation}
      M(\rho_T) \sum_{U\subseteq \bar{B}_j} (-1)^{\,n-|T|-|U|}
      = M(\rho_T)(-1)^{\,n-|T|} \sum_{U\subseteq \bar{B}_j} (-1)^{|U|}.
    \end{equation}
    But one has
    \begin{equation}
      \sum_{U \subseteq C_j} (-1)^{|U|} = \sum_{k=0}^m \;\sum_{\substack{U \subseteq C_j\\ |U|=k}} (-1)^{|U|} = \sum_{k=0}^m \;\sum_{\substack{U \subseteq C_j\\ |U|=k}} (-1)^k = \sum_{k=0}^m \binom{m}{k} (-1)^k = 0,
    \end{equation}
    where we used the binomial identity in the last equality.  Therefore every such grouped contribution vanishes, and the whole sum is zero.
    \end{proof}
    
    Thus $\mathrm{M}^{(n)}_\textsc{nl}$ detects only magic that is genuinely shared among all $n$ parties simultaneously, in the sense that it vanishes whenever the global state factorizes across any nontrivial partition.
    
    \begin{proposition}[Local free-unitary invariance]
    Let $U = \bigotimes_{i=1}^n U_i$ where each $U_i$ is a stabilizer (Clifford)
    unitary acting on party $i$, and define $\rho' = U\rho U^\dagger$.
    Then
    \begin{equation}
      \mathrm{M}^{(n)}_\textsc{nl}\left(\rho'_{[n]}\right) = \mathrm{M}^{(n)}_\textsc{nl}\left(\rho_{[n]}\right).
    \end{equation}
    \end{proposition}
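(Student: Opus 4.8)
The plan is to reduce the claim, via the alternating-sum definition \eqref{eq:def-InM-new}, to a single-subsystem statement: for every nonempty $S\subseteq[n]$ the reduced density matrix transforms by the \emph{restriction} of $U$ to $S$. Concretely, I would first prove the key lemma
\begin{equation}
  \rho'_S \;=\; U_S\,\rho_S\,U_S^\dagger,\qquad U_S := \bigotimes_{i\in S} U_i,
\end{equation}
and then observe that $U_S$ is itself a Clifford unitary on the block $S$ (it is a signed permutation of the Majoranas supported on $S$, fixing all others, by \eqref{eq:permutation}). Granting the lemma, Property~\ref{prop:invariance} applied on the subsystem $S$ gives $\mathrm{M}(\rho'_S)=\mathrm{M}(\rho_S)$ termwise, so
\begin{equation}
  \mathrm{M}^{(n)}_\textsc{nl}\!\left(\rho'_{[n]}\right)
  = \sum_{\emptyset\neq S\subseteq[n]} (-1)^{\,n-|S|}\,\mathrm{M}(\rho'_S)
  = \sum_{\emptyset\neq S\subseteq[n]} (-1)^{\,n-|S|}\,\mathrm{M}(\rho_S)
  = \mathrm{M}^{(n)}_\textsc{nl}\!\left(\rho_{[n]}\right),
\end{equation}
which is the desired identity. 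Thus the entire content of the proposition is concentrated in the lemma.

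To establish the lemma I would split the global unitary as $U = U_S\,U_{\bar S}$ with $\bar S := [n]\setminus S$ and $U_{\bar S}=\bigotimes_{i\in\bar S}U_i$, and compute
\begin{equation}
  \rho'_S = \mathrm{Tr}_{\bar S}\!\left(U_S\,U_{\bar S}\,\rho\,U_{\bar S}^\dagger\,U_S^\dagger\right)
          = U_S\,\mathrm{Tr}_{\bar S}\!\left(U_{\bar S}\,\rho\,U_{\bar S}^\dagger\right)U_S^\dagger
          = U_S\,\mathrm{Tr}_{\bar S}(\rho)\,U_S^\dagger
          = U_S\,\rho_S\,U_S^\dagger.
\end{equation}
The second equality pulls $U_S$ out of the partial trace because it acts trivially on the traced-out factor, and the third uses invariance of the partial trace under a unitary supported entirely on the factor being traced out (cyclicity of the trace within that factor).

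The main obstacle, and the step deserving genuine care rather than being declared ``obvious,'' is justifying these manipulations in the \emph{fermionic} setting, where the block algebras do not tensor-factorize as naively as for qubits and where parity superselection is in force. The point to nail down is that each local Clifford $U_i$ must be \emph{even} (a product of an even number of Majoranas, consistent with the even-parity superselection of physical states), so that Cliffords on disjoint blocks commute and the formal factorization $U=U_S U_{\bar S}$ with $U_S$ acting as the identity on the $\bar S$-Majoranas is legitimate. With evenness in hand, $U_{\bar S}$ lies in the even subalgebra supported on $\bar S$, the fermionic partial trace over $\bar S$ is invariant under conjugation by it, and the reduced state on $S$ is genuinely acted on by the block Clifford $U_S$ defined relative to the Majorana frame restricted to $S$. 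Once this is argued, Property~\ref{prop:invariance} applies to each term and the alternating sum closes the proof.
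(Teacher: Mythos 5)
Your proof follows essentially the same route as the paper's: the termwise identity $\rho'_S = U_S\,\rho_S\,U_S^\dagger$ with $U_S=\bigotimes_{i\in S}U_i$, Clifford invariance of $\mathrm{M}$ applied to each subsystem, and substitution into the alternating-sum definition. The only difference is that you explicitly justify the reduced-state lemma and the fermionic parity/evenness caveat, both of which the paper simply asserts; this is added rigor, not a different approach.
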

    
    \begin{proof}
    For any $S\subseteq[n]$,
    \begin{equation}
      \rho'_S = U_S \rho_S U_S^\dagger,
      \qquad
      U_S := \bigotimes_{i\in S} U_i.
    \end{equation}
    By (\ref{prop:invariance}),
    \begin{equation}
    \mathrm{M}\left(\rho'_S\right) = \mathrm{M}\left(\rho_S\right).
    \end{equation}
    Substituting into \eqref{eq:def-InM-new},
    \begin{equation}
      \mathrm{M}^{(n)}_\textsc{nl}\left(\rho'_{[n]}\right)
      = \sum_{\emptyset\neq S\subseteq[n]} (-1)^{\,n-|S|} \,\mathrm{M}\left(\rho'_S\right)
      = \sum_{\emptyset\neq S\subseteq[n]} (-1)^{\,n-|S|} \,\mathrm{M}(\rho_S)
      = \mathrm{M}^{(n)}_\textsc{nl}\left(\rho_{[n]}\right).
    \end{equation}
    \end{proof}
    
    \begin{proposition}[Additivity over independent $n$-partite states]
    Let $\rho^{(A)}_{[n]}$ and $\rho^{(B)}_{[n]}$ be two independent
    $n$-partite states. Consider the combined state
    \begin{equation}
      \rho^{(AB)}_{[n]}
      = \rho^{(A)}_{[n]}\otimes\rho^{(B)}_{[n]},
    \end{equation}
    where each part's Hilbert space is enlarged but the number of parts remains $n$. Then
    \begin{equation}
      \mathrm{M}^{(n)}_\textsc{nl}\left(\rho^{(AB)}\right) 
      = \mathrm{M}^{(n)}_\textsc{nl}\left(\rho^{(A)}\right) + \mathrm{M}^{(n)}_\textsc{nl}\left(\rho^{(B)}\right).
    \end{equation}
    \end{proposition}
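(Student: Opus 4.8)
The plan is to follow exactly the same template as the proof of local free-unitary invariance: reduce the $n$-partite functional to its defining inclusion–exclusion sum, establish that each partial density matrix of the combined system factorizes appropriately, and then invoke additivity of $\mathrm{M}$ together with linearity of the alternating sum. The only genuinely new ingredient compared to the invariance proposition is that here I use Property~\ref{prop:additivity} (additivity of $\mathrm{M}$ under tensor products) rather than Property~\ref{prop:invariance} (Clifford invariance).

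First I would fix the Hilbert-space bookkeeping. Since the number of parts is kept at $n$ while each part is enlarged, party $i$ of the combined system carries the Hilbert space $\mathcal{H}_i^{A}\otimes\mathcal{H}_i^{B}$, and the global state $\rho^{(AB)}=\rho^{(A)}\otimes\rho^{(B)}$ factorizes across the $A/B$ split. For any $\emptyset\neq S\subseteq[n]$, the reduced state on $S$ is obtained by tracing out all parts in $[n]\setminus S$; because this partial trace acts on the $A$-factors and $B$-factors independently while the global state is a product across the $A/B$ split, the trace commutes with the factorization. I would therefore claim the key structural identity
\begin{equation}
  \rho^{(AB)}_S = \rho^{(A)}_S \otimes \rho^{(B)}_S .
\end{equation}

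With this in hand, additivity gives $\mathrm{M}\!\left(\rho^{(AB)}_S\right)=\mathrm{M}\!\left(\rho^{(A)}_S\right)+\mathrm{M}\!\left(\rho^{(B)}_S\right)$ for every $S$, and substituting into the definition \eqref{eq:def-InM-new} and splitting the sum term by term yields
\begin{equation}
  \mathrm{M}^{(n)}_\textsc{nl}\!\left(\rho^{(AB)}\right)
  = \sum_{\emptyset\neq S\subseteq[n]} (-1)^{\,n-|S|}\,\mathrm{M}\!\left(\rho^{(A)}_S\right)
  + \sum_{\emptyset\neq S\subseteq[n]} (-1)^{\,n-|S|}\,\mathrm{M}\!\left(\rho^{(B)}_S\right),
\end{equation}
and the two sums are precisely $\mathrm{M}^{(n)}_\textsc{nl}(\rho^{(A)})$ and $\mathrm{M}^{(n)}_\textsc{nl}(\rho^{(B)})$ by definition.

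The main (and really the only) obstacle is justifying the factorization of the reduced state $\rho^{(AB)}_S$. This is conceptually trivial but easy to botch notationally, because the tensor product here is \emph{per party} rather than a single global $A\otimes B$ split: one must be careful that tracing out a part $j\notin S$ removes both $\mathcal{H}_j^{A}$ and $\mathcal{H}_j^{B}$, and that these two partial traces can be performed in either order on the product state $\rho^{(A)}\otimes\rho^{(B)}$. Once this commutation of the partial trace with the $A/B$ factorization is made explicit, the remainder of the argument is identical to the invariance proof and the result follows immediately.
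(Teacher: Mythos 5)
Your proposal is correct and matches the paper's own proof essentially line for line: both establish the per-subset factorization $\rho^{(AB)}_S = \rho^{(A)}_S \otimes \rho^{(B)}_S$, apply additivity of $\mathrm{M}$, and split the alternating sum. The only difference is that you spell out why the partial trace commutes with the per-party $A/B$ factorization, a detail the paper leaves implicit.
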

    
    \begin{proof}
    For any $S\subseteq[n]$,
    \begin{equation}
      \rho^{(AB)}_S
      = \rho^{(A)}_S \otimes \rho^{(B)}_S.
    \end{equation}
    Additivity (\ref{prop:additivity}) gives
    \begin{equation}
      \mathrm{M}\left(\rho^{(AB)}_S\right) = \mathrm{M}\left(\rho^{(A)}_S\right) + \mathrm{M}\left(\rho^{(B)}_S\right).
    \end{equation}
    Therefore
    \begin{equation}
    \begin{aligned}
      \mathrm{M}^{(n)}_\textsc{nl}\left(\rho^{(AB)}\right)
      &= \sum_{\emptyset\neq S\subseteq[n]} (-1)^{\,n-|S|} \,\mathrm{M}\left(\rho^{(AB)}_S\right) = \sum_{\emptyset\neq S\subseteq[n]} (-1)^{\,n-|S|} 
         \left[ \mathrm{M}\left(\rho^{(A)}_S\right) + \mathrm{M}\left(\rho^{(B)}_S\right) \right] \\
      &= \mathrm{M}^{(n)}_\textsc{nl}\left(\rho^{(A)}\right) + \mathrm{M}^{(n)}_\textsc{nl}\left(\rho^{(B)}\right).
    \end{aligned}
    \end{equation}
    \end{proof}
    
    \subsubsection{Examples}

    Let us introduce two tripartites states, the GHZ and W states defined respectively as:
    \begin{equation}
            |\mathrm{GHZ}\rangle = \frac{|000\rangle + |111\rangle}{\sqrt{2}}, \qquad |W\rangle = \frac{|001\rangle + |010\rangle + |100\rangle}{\sqrt{3}}\,.
    \end{equation}
    By symmetry, we only need to worry about, say, $\rho_1$ and $\rho_{12}$.

    \paragraph{GHZ:} 

    $|{\rm GHZ}\rangle$ is a pure stabilizer state, hence
    \begin{equation}
      \mathrm{SRE}_2(\rho_{123}) = 0, \qquad S_2(\rho_{123}) = 0, \qquad \mathrm{M}(\rho_{123}) = 0.
    \end{equation}
    By tracing out, one finds
    \begin{equation}
      \rho_1 = \frac{\mathbb{I}_2}{2},
    \end{equation}
    \begin{equation}
      \rho_{12} = \frac{1}{2}\big( |00\rangle\langle 00| + |11\rangle\langle 11| \big).
    \end{equation}
    For a single-qubit maximally mixed state $\rho = \mathbb{I}_2/2$,
    \begin{equation}
      \mathrm{Tr}(\rho^2) = \frac{1}{2} \quad\Rightarrow\quad S_2(\rho) = -\log\left(\frac{1}{2}\right) = \log 2.
    \end{equation}
    Only $\mathrm{Tr}(\rho I)=1$ is nonzero among Pauli expectations, so
    \begin{equation}
      \frac{1}{2}\sum_{P\in\{I,X,Y,Z\}} \mathrm{Tr}(\rho P)^4 = \frac{1}{2} \quad\Rightarrow\quad \mathrm{SRE}_2(\rho) = -\log\left(\frac{1}{2}\right) = \log 2.
    \end{equation}
    Hence
    \begin{equation}
      \mathrm{M}(\rho_1) = 0.
    \end{equation}
    For $\rho_{12}$ the eigenvalues are $\{1/2,1/2,0,0\}$, so
    \begin{equation}
      \mathrm{Tr}\big[(\rho_{12})^2\big] = \frac{1}{2} \quad\Rightarrow\quad S_2(\rho_{12}) = \log 2.
    \end{equation}
    Among two-qubit Pauli operators, only $I\otimes I$ and $Z\otimes Z$ have nonzero expectation equal to $1$, hence
    \begin{equation}
      \frac{1}{4}\sum_{P}\mathrm{Tr}(\rho_{12} P)^4 = \frac{1}{2} \quad\Rightarrow\quad \mathrm{SRE}_2(\rho_{12}) = \log 2.
    \end{equation}
    Thus
    \begin{equation}
      \mathrm{M}(\rho_{12}) = \mathrm{M}(\rho_{13}) = \mathrm{M}(\rho_{23}) = 0.
    \end{equation}
    Putting everything together,
    \begin{equation}
      \mathrm{M}^{(3)}_{\textsc{nl}}(\rho_{123}) = 0.
    \end{equation}

    \paragraph{W state:}
    
   For the bipartite mixed-state SRE, the contributions arise from two-qubit Pauli operators. Among the single-site terms, only $Z\otimes I$ and $I\otimes Z$ have nonvanishing expectation values, each equal to $1/9$. In addition, the correlators $X\otimes X$, $Y\otimes Y$, and $Z\otimes Z$ contribute with expectation values $4/9$, $4/9$, and $1/9$, respectively. Consequently, the bipartite stabilizer R\'enyi entropies are
\begin{equation}
    \mathrm{M}(\rho_{12}) 
    = \mathrm{M}(\rho_{13})
    = \mathrm{M}(\rho_{23})
    = \log\!\left[\frac{45}{29}\right].
\end{equation}

Similarly, the single-party SREs evaluate to
\begin{equation}
    \mathrm{M}(\rho_1)
    = \mathrm{M}(\rho_2)
    = \mathrm{M}(\rho_3)
    = \log\!\left[\frac{45}{41}\right].
\end{equation}

The three-party stabilizer R\'enyi entropy is found to be
\begin{align}
    \mathrm{M}(\rho_{123})
    = \log\!\left[\frac{9}{5}\right].
\end{align}

Putting these results together, as in the earlier example, the non-local stabilizer R\'enyi entropy (or non-local magic) is
\begin{align}
    \mathrm{M}_{\mathrm{NL}}(\rho_{123})
    = -0.451.
\end{align}

\begin{table}[h]
\centering
\begin{tabular}{|c|c|}
\hline
Number of  qubits &   W state \\
\hline
 3 &$- 0.451$  \\
 4  &$-0.4274$ \\
 5 &$0.0922$ \\
 6  & $0.2261$ \\
 7  & $0.05627$ \\
 8  & $-0.02696$ \\
 9  & $-0.00194$ \\

\hline
\end{tabular}
\caption{Non-local SRE for W states.}
\label{tab:entropy-GHZW}
\end{table}
    Note that for any $n$-partite GHZ and cluster states this quantity is zero as they are stabilizer states. However quite interestingly for generalized GHZ state it turns out to be nonzero.
    \begin{align}
        \ket{gGHZ}=\cos \theta\ket{000..}+\sin \theta\ket{111...}
    \end{align}
    In this case when the SRE of the full state is always
    \begin{align}
        M(  \ket{gGHZ})=\log\left[\frac{8}{7+\cos (8\theta)}\right]
    \end{align}
    And all the lower party SRE is given by
    \begin{align}
     M(\rho)  = \log \left[\frac{2 \left(\cos ^4(2 \theta )+1\right)}{\cos (4 \theta )+3}\right]
    \end{align}
    Therefore non-local magic or SRE for a generalized GHZ state differs based on whether the total number of qubits is odd or even 
    \begin{align}
        \mathrm{M}^{\mathrm{odd}}_{\textsc{nl}}(\ket{gGHZ})&=\log\left[\frac{8}{7+\cos(8\theta)}\right]\\
          \mathrm{M}^{\mathrm{even}}_{\textsc{nl}}(\ket{gGHZ})&=\log\left[\frac{8}{7+\cos(8\theta)}\right]+2 \log \left[\frac{2 \left(1+\cos ^4(2 \theta )\right)}{3+\cos (4 \theta )}\right]
    \end{align}
    The behavior with theta is plotted in Fig.~\ref{NLSREgGHZ}.

    \begin{figure}[H]
\centering\includegraphics[width=0.6\linewidth]{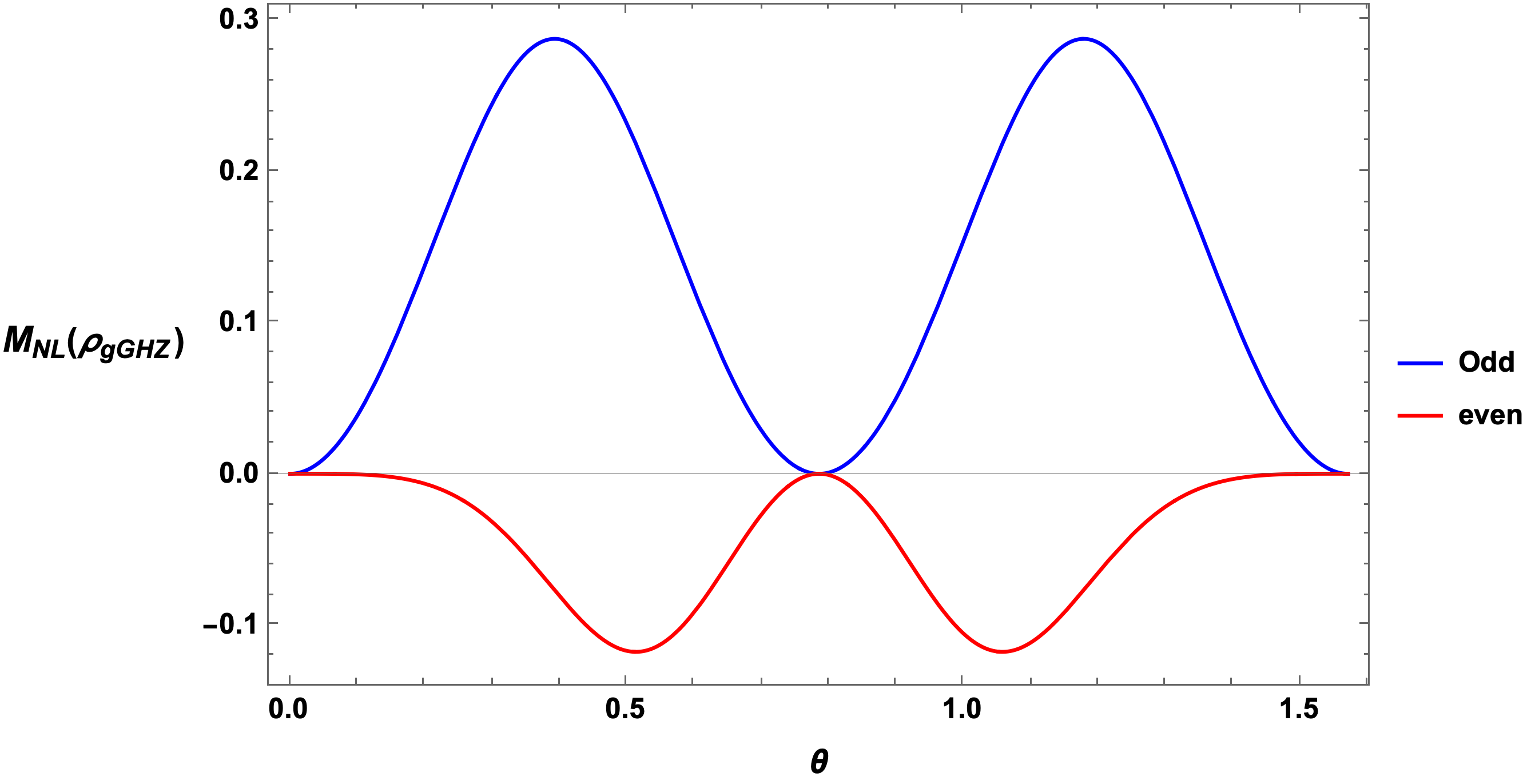}
    \caption{Non-local SRE for a generalized GHZ state}\label{NLSREgGHZ}
    \end{figure}

In the bipartite setting, it is well known that maximally entangled states such as the Bell states can nevertheless carry zero magic, since they belong to the class of stabilizer states. This phenomenon becomes even more striking in the multipartite context. For example, GHZ states, which are regarded as maximally genuinely tripartite entangled in the three-qubit setting, as quantified by measures such as the concurrence triangle and the $L$-entropy \cite{Xie:2021hsy,Basak:2024uwc,Ahn:2025bdm}, remain stabilizer states and therefore possess no magic. A similar situation arises for certain highly entangled four-party states, such as cluster states, which also exhibit maximal multipartite entanglement but zero magic due to their stabilizer nature. These examples underscore the fact that entanglement and magic quantify distinct forms of nonclassicality and need not align.

    \subsubsection{Comments on negativity}

    Even though the underlying magic $\mathrm{M}(\rho)$ is nonnegative for every state $\rho$, the combination defining $\mathrm{M}^{(n)}_{\textsc{nl}}$ uses alternating signs. In other words, $\mathrm{M}^{(n)}_{\textsc{nl}}$ is not a simple sum of positive contributions; it is an interference pattern of magical contributions from different subsystem scales, in which lower-order contributions are systematically subtracted to isolate a genuine $n$-body piece. As a consequence, cancellations can be strong enough to make $\mathrm{M}^{(n)}_{\textsc{nl}}$ negative.

    Let us illustrate in the tripartite case. For $n=3$, we can rewrite the definition as
    \begin{equation}
        \begin{aligned}
              \mathrm{M}^{(3)}_{\textsc{nl}}(\rho_{123})
              &= \mathrm{M}(\rho_{123})
               - \Big[
                  \mathrm{M}(\rho_{12}) + \mathrm{M}(\rho_{13}) + \mathrm{M}(\rho_{23})
                  - \mathrm{M}(\rho_1) - \mathrm{M}(\rho_2) - \mathrm{M}(\rho_3)
                 \Big].
        \end{aligned}
    \end{equation}
    The bracketed term contains all contributions that can be constructed from local and pairwise magic. The difference $\mathrm{M}^{(3)}_{\textsc{nl}}$ is then the part of the magic that cannot be explained by such lower-order contributions. If $\mathrm{M}^{(3)}_{\textsc{nl}}(\rho_{123})>0$, then the tripartite magic of $\rho_{123}$ is larger than what one would predict from single and two-body magics. There is a genuinely \emph{synergistic} component of magic that only appears when all three parties are considered together. If instead $\mathrm{M}^{(3)}_{\textsc{nl}}(\rho_{123})<0$, then the combination of single- and two-body magics overestimates the magic that is truly available in the tripartite system. In this case, the lower-order magics are redundant: they contain overlapping information about the same underlying non-stabilizer structure. When these overlaps are subtracted away, the net connected three-body magic becomes negative.

    The two examples above illustrate the meaning of the sign of the multipartite non-local magic. For the W state, $\mathrm{M}^{(3)}_{\textsc{nl}}(\rho_{123})<0$. The system possesses magic at all scales (global, pairwise and local), but the lower-order magic is highly redundant and overlaps strongly with the global magic. After subtracting these overlaps, the connected three-body contribution is negative. This describes a redundant or \emph{frustrated} distribution of magic. Instead, the generalized GHZ state exhibits non-negative $\mathrm{M}^{(3)}_{\textsc{nl}}$ indicating that the non-stabilizer resource is genuinely tripartite and cannot be reconstructed from any combination of one and two-qubit reduced states.

    In general, the multipartite non-local magic $\mathrm{M}^{(n)}_{\textsc{nl}}$ should be viewed not as a basic magic monotone (which it is not, as we saw), but as a tool revealing how the magic resource is distributed across different scales. Its magnitude captures the strength of the genuinely $n$-body connected magic, while its sign distinguishes between synergistic (positive), factorizable (zero) and redundant (negative) patterns of non-stabilizer correlations. The subtle allocation of magic among the different scales of the system being strongly system size dependent, one should not be surprised in the sign of $\mathrm{M}^{(n)}_{\textsc{nl}}$ being strongly sensitive to $n$ in general, as can already be seen at the level of the two families of states we considered above.

Note that there exists a multipartite generalization of mutual information analogous to that of SRE in \cref{MNLDefin}, obtained by replacing $M$ with entanglement entropies of respective subsystems. As an illustrative example, the tripartite information can be rewritten in terms of bipartite mutual informations as  
\begin{align}\label{Idef}
I_3(A:B:C) &:= S(A)+S(B)+S(C) - S(AB) - S(BC) - S(AC) + S(ABC) \notag\\
&= I(A:B) + I(A:C) - I(A:BC)\, .
\end{align}
The negativity of the tripartite information admits a natural interpretation: it represents a strengthened form of monogamy of entanglement, a property that is not satisfied by generic quantum states. Remarkably, holographic states are known to exhibit strictly negative tripartite information \cite{Hayden:2011ag,Gharibyan:2013aha}. By contrast, it remains unclear whether the negativity of multipartite quantum magic admits a similar monogamous interpretation. This is an interesting issue that we leave for further exploration in the near future.

    \section{The SYK model} \label{sec 3}
   Let us begin with a brief review of the Sachdev--Ye--Kitaev (SYK) model and the specific variants that will be relevant in the forthcoming sections. The SYK model describes a system of $N$ Majorana fermions with all-to-all random interactions. The interaction strengths are drawn independently from a Gaussian distribution, making the model a paradigmatic example of a strongly interacting quantum system. For a general $q$-body interaction, the SYK Hamiltonian is given by \cite{Sachdev_1993,Polchinski:2016xgd,Jevicki:2016bwu,Maldacena:2016hyu}. 
\begin{align}
    H_{\mathrm{SYK}_q}
    = i^{q/2} 
    \sum_{1 \le i_1 < i_2 < \cdots < i_q \le N}
    J_{i_1 i_2 \cdots i_q}\,
    \chi_{i_1}\chi_{i_2}\cdots\chi_{i_q},
\end{align}
where the $\chi_i$ are Majorana fermion operators satisfying $\{\chi_i,\chi_j\}=\delta_{ij}$. The couplings $J_{i_1 i_2 \cdots i_q}$ are drawn from a Gaussian distribution with zero mean and variance 
\begin{align}
    \left\langle J_{i_1 i_2 \cdots i_q}^2\right\rangle=\frac{(q-1)!J^2}{N^{q-1}} .
\end{align}
In the remainder of this section, we will analyze the time-evolution dynamics and finite-temperature properties of both the stabilizer R\'enyi entropy and the multipartite non-local SRE in the SYK$_4$ model. The analogous study for the various deformations of the SYK$_4$ model will be carried out in the sections that follow.

\subsection{Time Evolution: SRE and multipartite non-local SRE}

We begin our numerical analysis by examining the time evolution of the stabilizer R\'enyi entropy for the SYK$_4$ model, initialized in a product state,
\begin{equation}
   \ket{\psi(t)} = e^{-i H_{\mathrm{SYK}_4} t}\ket{000\cdots 0}.
\end{equation}
To gain insight into the contributions of Majorana strings of different lengths, we consider the \emph{restricted} SRE, defined as
\begin{equation}\label{res SRE}
    M_{\textrm{res}}(\rho)
    = -
      \log\!\left( \sum_{|S|=0}^{S_{\max}} p_S^2 \right)
      - n \log 2-S_2.
\end{equation}

In \cref{ResSRESYK4T}, we present the restricted SRE in \cref{RestSREtime7}, SRE for restricted number of qubits in \cref{SREtime7}, Multipartite non-local SRE in \cref{NLSREtime7} for the time-evolved state under the SYK$_4$ Hamiltonian, illustrating how the results depend on the cutoff $S_{\max}$, indicated on the right-hand side of the figure. Following this, we present the time evolution of the SRE and the multipartite non-local SRE for subsystem sizes ranging from one qubit to the full $N$-qubit system.

Physically, the restriction in \eqref{res SRE} should be read as a string-length resolved diagnostic of magic generation. In the qubit representation, the Majorana monomials organize into Pauli strings whose support size tracks how many degrees of freedom participate in the state’s non-stabilizer structure. The quantities $p_S$ therefore play a role analogous to a size distribution: at early times one expects the weight to remain concentrated at small $S$ (magic created by few-body strings), while chaotic dynamics transfers weight to larger $S$ (magic becomes genuinely many-body).  In this sense, the restricted SRE directly probes the scrambling of non-stabilizerness across the Hilbert space.

\begin{figure}[H]
  \centering
  
   \begin{subfigure}{.32\linewidth}
    \includegraphics[height=2.8cm,width=\linewidth]{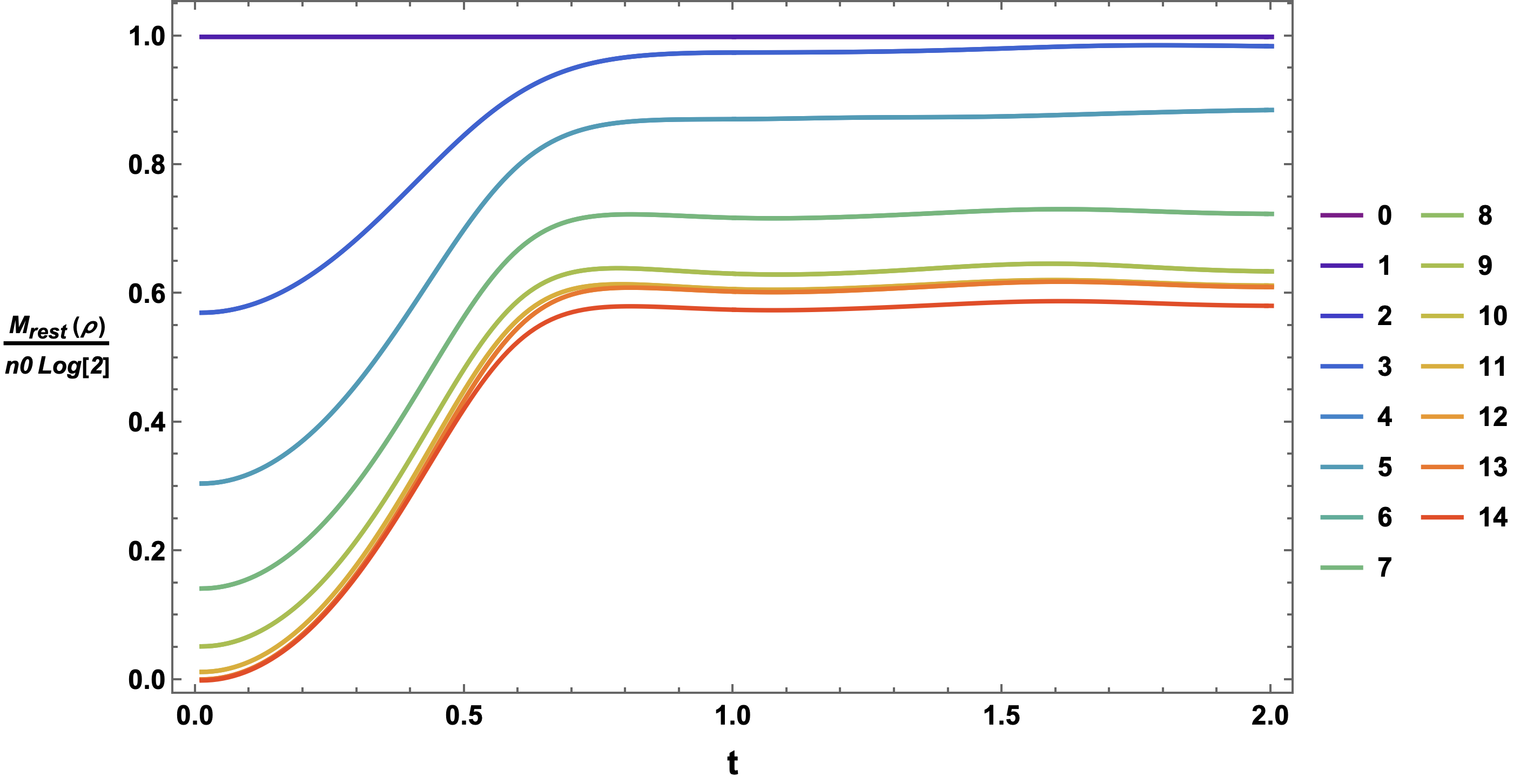}
      \caption{Restricted SRE}\label{RestSREtime7}
  \end{subfigure}\quad
  \begin{subfigure}{.32\linewidth}
    \includegraphics[width=\linewidth]{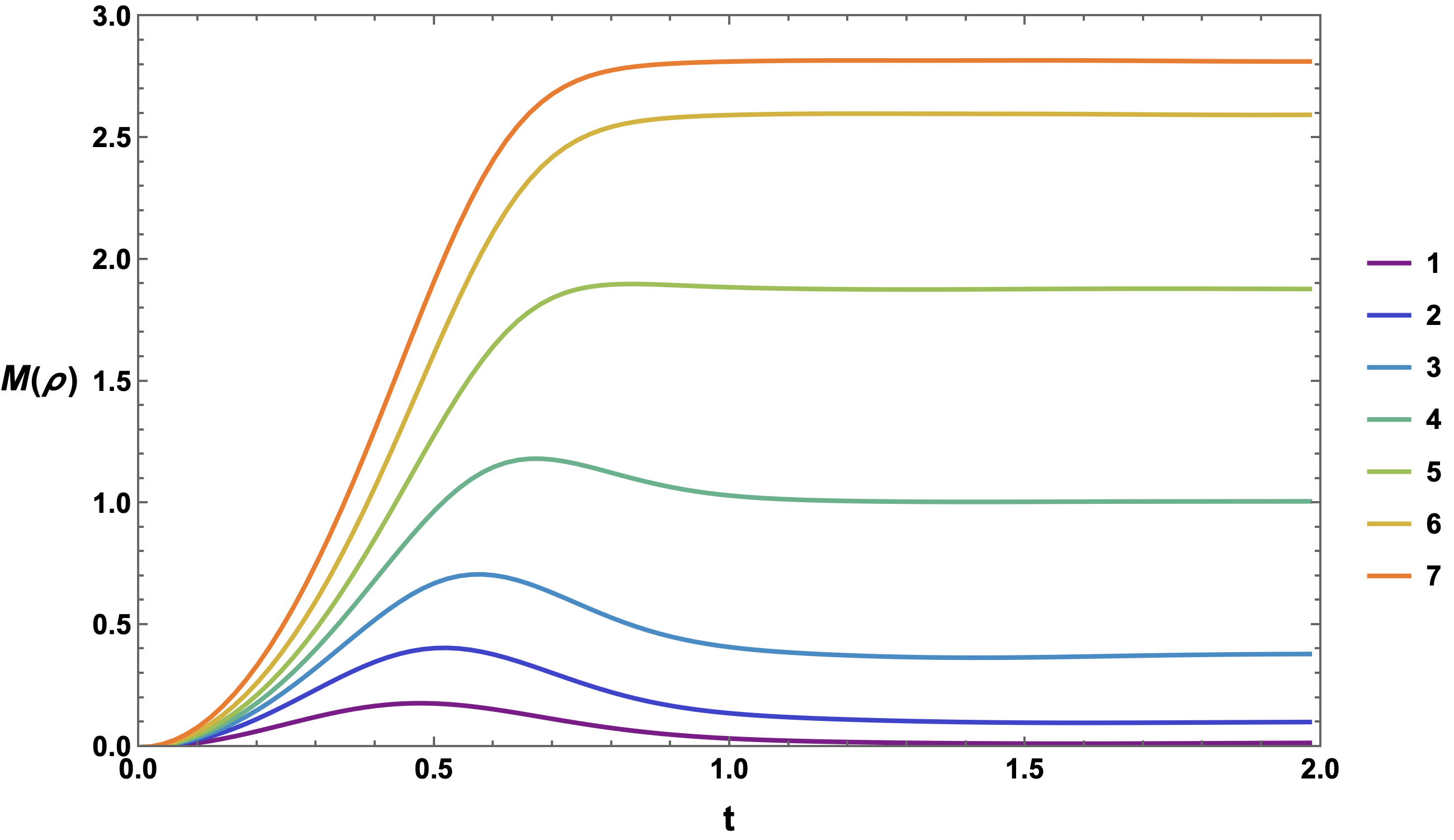}
      \caption{SRE}\label{SREtime7}
  \end{subfigure}
  \begin{subfigure}{.32\linewidth}
    \includegraphics[height=2.8cm,width=\linewidth]{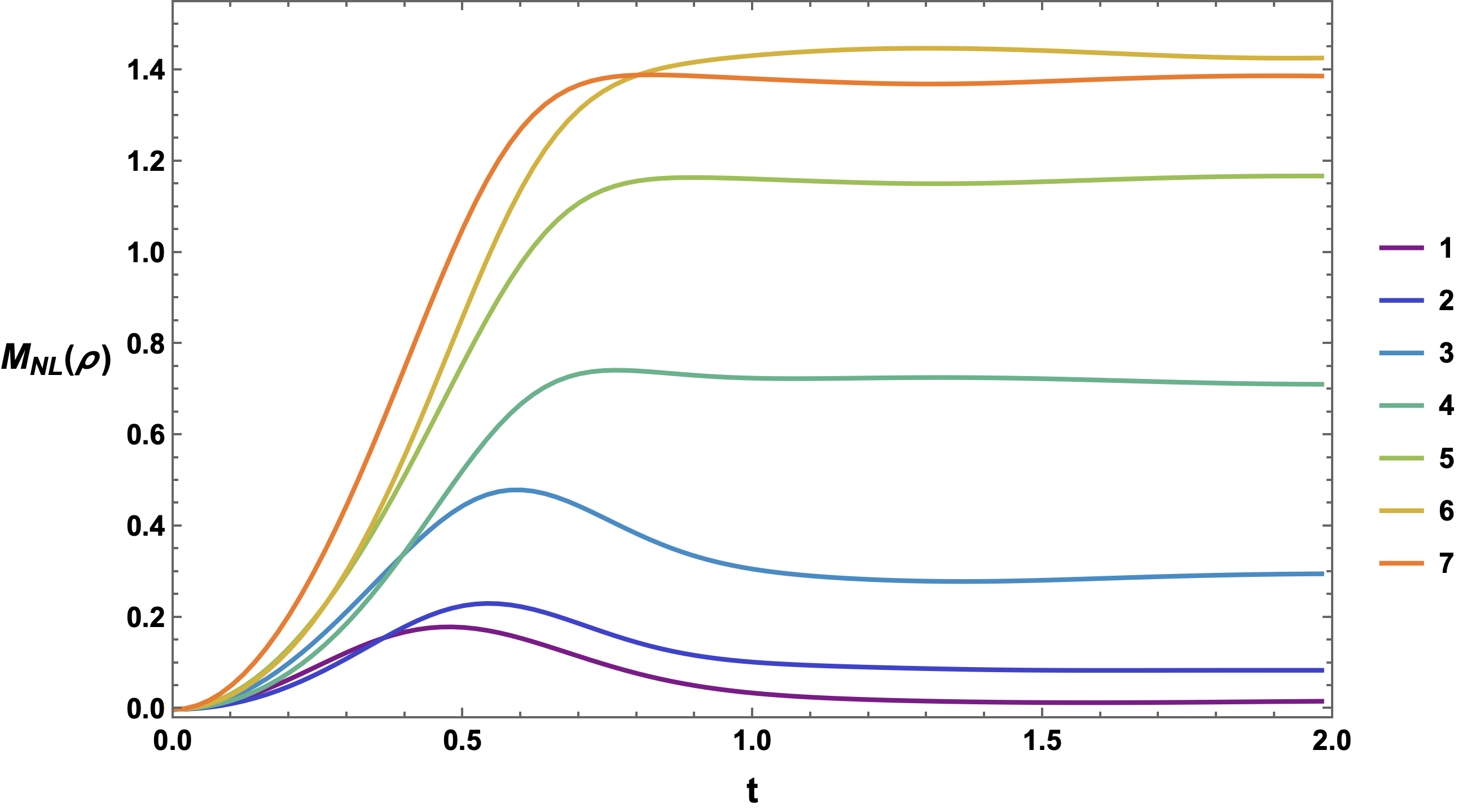}
      \caption{Multipartite non-local SRE}\label{NLSREtime7}
  \end{subfigure}
\caption{\footnotesize{Time evolution of the  SRE  for  $N=14$ in the SYK$_4$ model. }}\label{ResSRESYK4T}
\end{figure}

\subsection{Finite Temperature}
Having explored the time evolution of the restricted SRE, the full SRE, and the multipartite non-local SRE, we now turn to their behavior at finite temperature. In this context, we introduce temperature through two different constructions. The first is the thermal state,
\begin{equation}
    \rho_{\mathrm{th}} = e^{-\beta H_{\mathrm{SYK}_4}},
\end{equation}
which, in the present setting, is simply the Gibbs ensemble associated with the SYK$_4$ Hamiltonian. The second approach employs \emph{canonical thermal pure quantum} (TPQ) states, as introduced in Ref.~\cite{Sugiura:2013pla}. The unnormalized TPQ states are defined as
\begin{equation}
    \ket{\Psi(\beta)} = e^{-\frac{\beta}{2} H} \ket{\psi},
\end{equation}
where $\ket{\psi}$ is an unstructured reference state. A key property of TPQ states is that the ensemble-averaged expectation values of simple observables reproduce thermal behavior at inverse temperature $\beta$
\begin{equation}
    \frac{\overline{\langle \Psi_\beta | \mathcal{O} | \Psi_\beta \rangle}}
    {\overline{\langle \Psi_\beta | \Psi_\beta \rangle}}
    =
    \frac{1}{Z(\beta)} \operatorname{tr} \!\left( \mathcal{O} \, e^{-\beta H} \right).
\end{equation}
In practice, averaging over all possible reference states is computationally prohibitive. Therefore, in this work we take $\ket{\psi}$ to be a single random state, which is known to provide an accurate approximation of the TPQ state in typical many-body systems. We compare the restricted SRE defined in \cref{res SRE} for the TPQ state and the thermal state in \Cref{SRETPQthi} and \Cref{SRETPQthi2}. For small $\beta \ll 1$, corresponding to very high temperatures, we observe that the restricted SRE of the TPQ state is significantly larger than that of the thermal state. As the temperature is lowered, the difference between the SRE values of the two states decreases and eventually saturates to a constant value. We observe that the constant vanishes only when all the Majorana or Pauli strings are included.

At high temperatures, we observe a marked contrast: the TPQ state exhibits significantly larger SRE compared to the thermal state. This gap stems from their structural differences at infinite temperature. The thermal state approaches the maximally mixed state, which is computationally trivial and possesses zero magic. In contrast, the TPQ state at $\beta\to 0$ behaves as a random state, which is known to be highly entangled and magically complex. This implies that while the ensemble average is classical, individual typical microstates representing the ensemble are maximally hard to simulate. As temperature decreases, the thermal state acquires non-trivial structure from low-lying energy eigenstates, leading to a rapid growth in SRE. Simultaneously, the TPQ state projects onto the same low-energy subspace, causing the magic of both descriptions to converge toward that of the ground state. The persistence of high magic in TPQ states highlights their potential as robust candidates for demonstrating quantum advantage, as they encode thermal fluctuations into complex, non-stabilizer wavefunctions that evade efficient classical simulation.

This disparity admits an intriguing interpretation in the context of holography. If we regard the TPQ states as representative black hole microstates and the thermal state as the coarse-grained description of the black hole geometry, our results reveal a \textit{concealed complexity}: the immense computational hardness (magic) is carried by the specific microstructure of the black hole, while its thermodynamic description remains relatively simple and low-magic. This suggests that the \textit{quantumness} or non-stabilizerness of a black hole is washed out in the ensemble average, yet it is intrinsically present in the individual microstates responsible for the underlying unitary dynamics.

\begin{figure}[H]
  \centering
  \begin{subfigure}{.31\linewidth}
    \includegraphics[height=3.2cm,width=\linewidth]{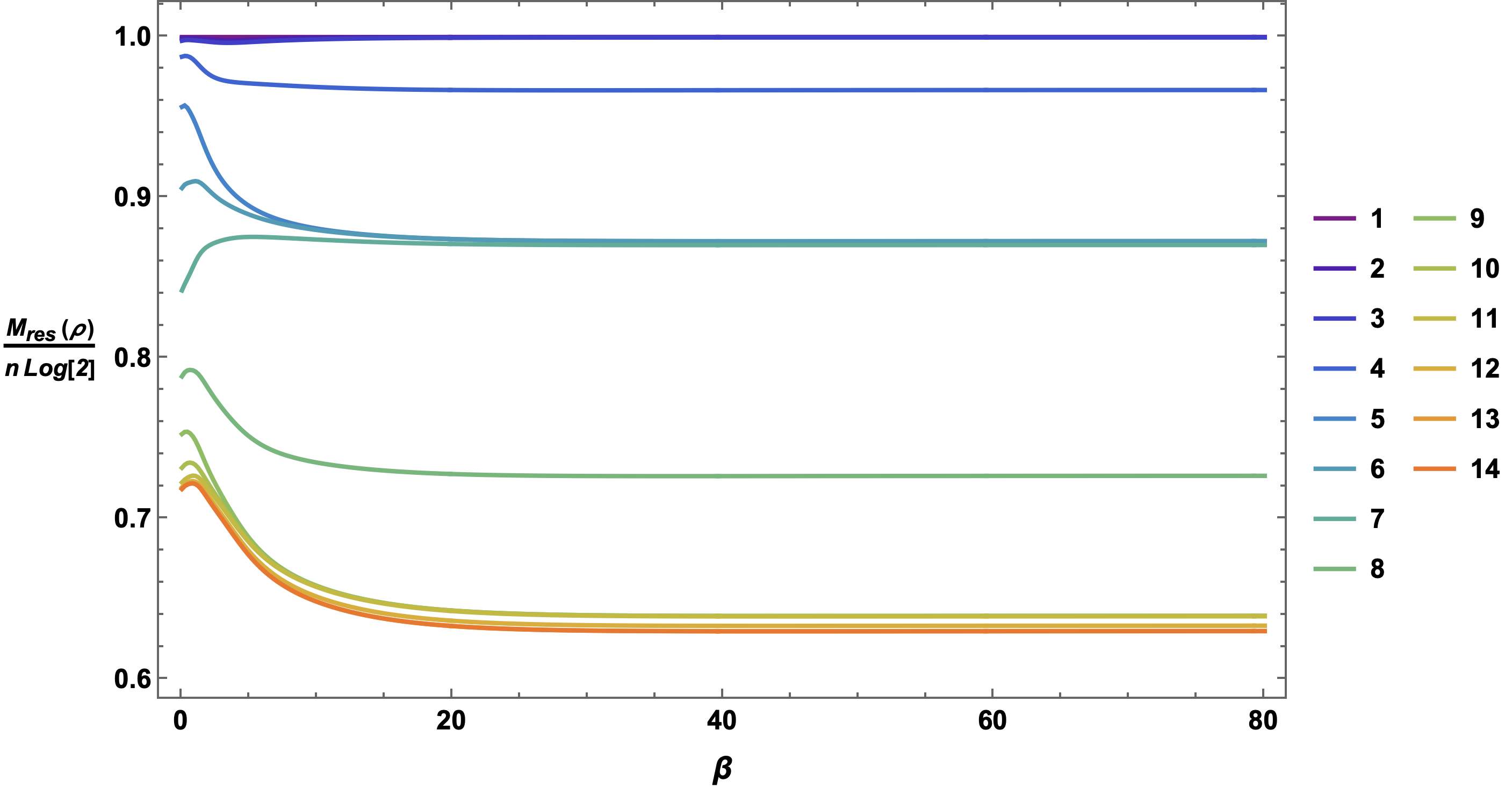}
    \caption{TPQ}\label{TPQvsThermal00}
  \end{subfigure}\quad
  \begin{subfigure}{.31\linewidth}
    \includegraphics[height=3.2cm,width=\linewidth]{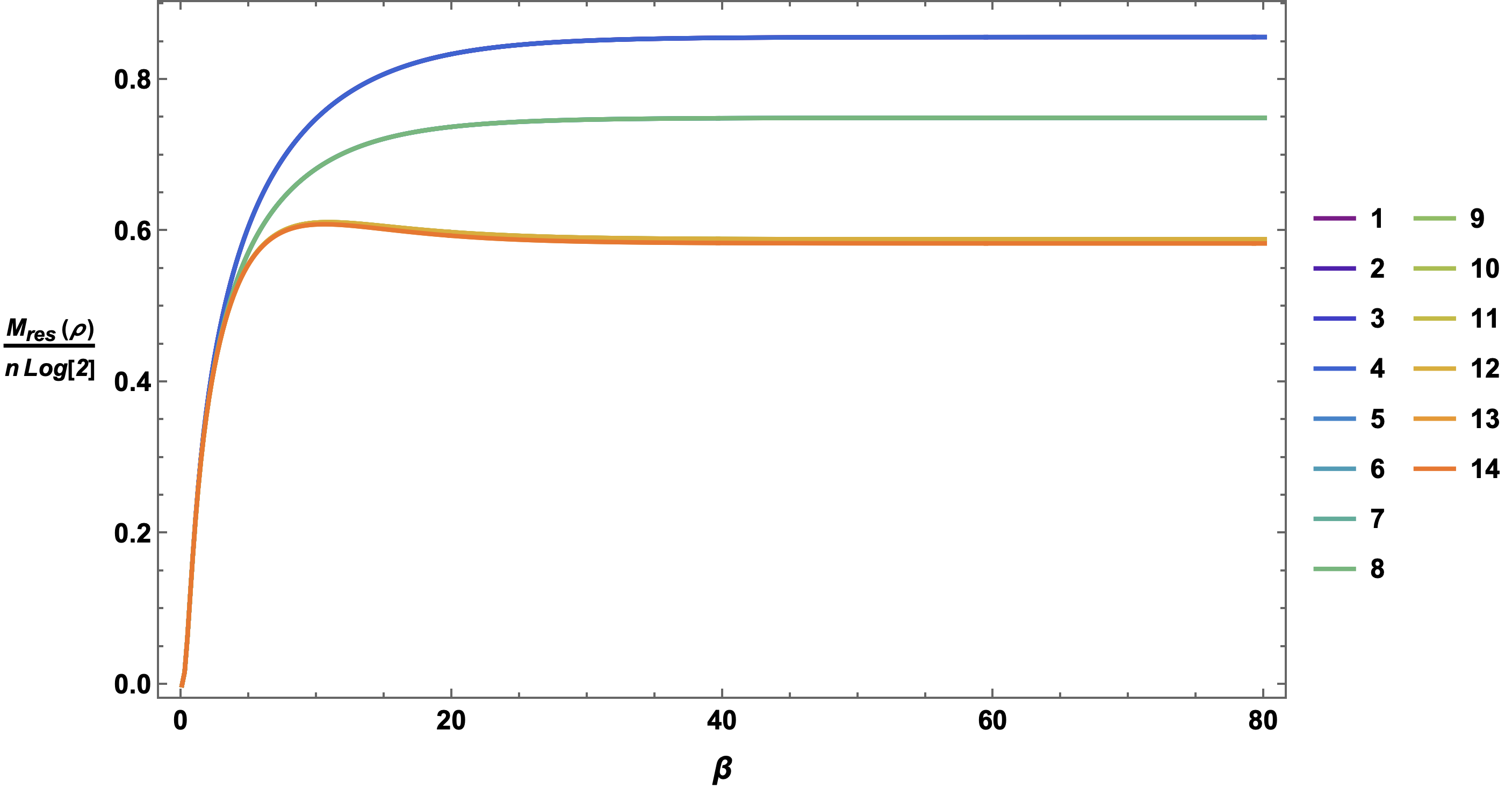}
    \caption{Thermal state}\label{TPQvsThermal01}
  \end{subfigure}\quad
  \begin{subfigure}{.31\linewidth}
    \includegraphics[height=3.2cm,width=\linewidth]{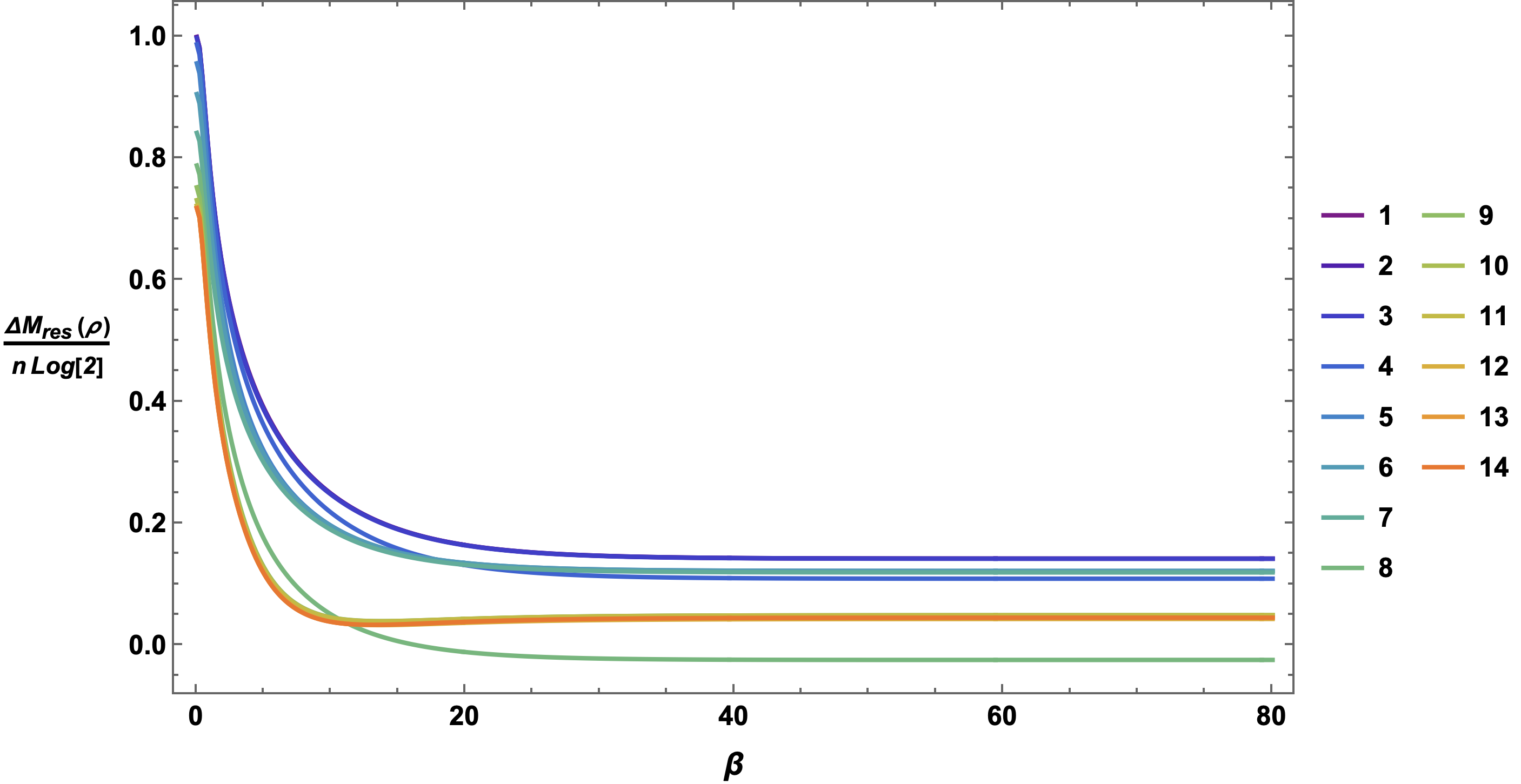}
    \caption{Diff b/w TPQ \& thermal}\label{TPQvsThermal02}
  \end{subfigure}
  \caption{\footnotesize{Stabilizer Rényi entropy as a function of temperature, restricted by the length of Majorana strings. The corresponding values of length of the string are indicated on the right.}}
  \label{SRETPQthi}
\end{figure}

\begin{figure}[H]
  \centering
  \begin{subfigure}{.32\linewidth}
\includegraphics[width=\linewidth]{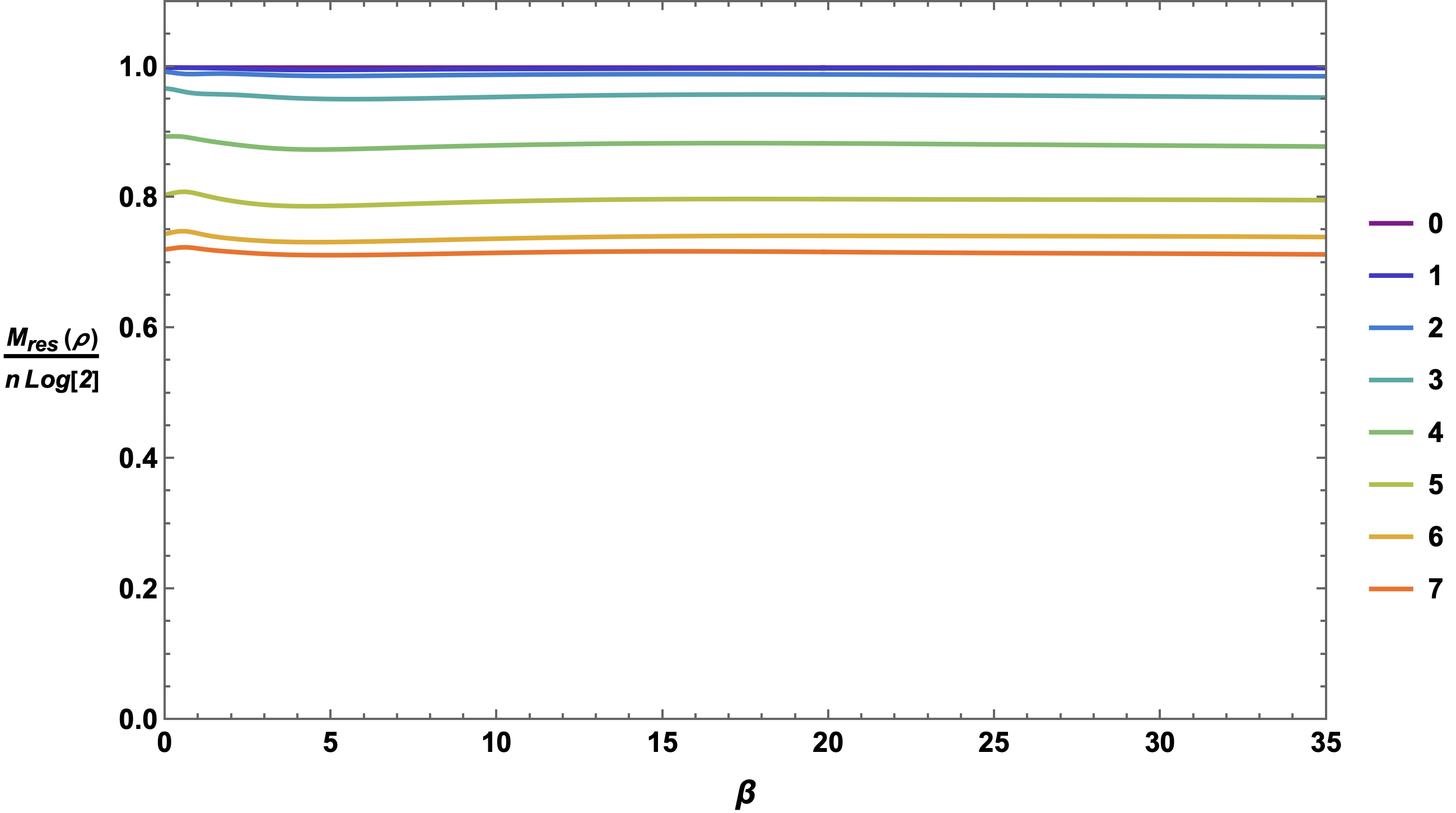}
\caption{TPQ}\label{TPQvsThermal5}
  \end{subfigure}
  \begin{subfigure}{.33\linewidth}
\includegraphics[width=\linewidth]{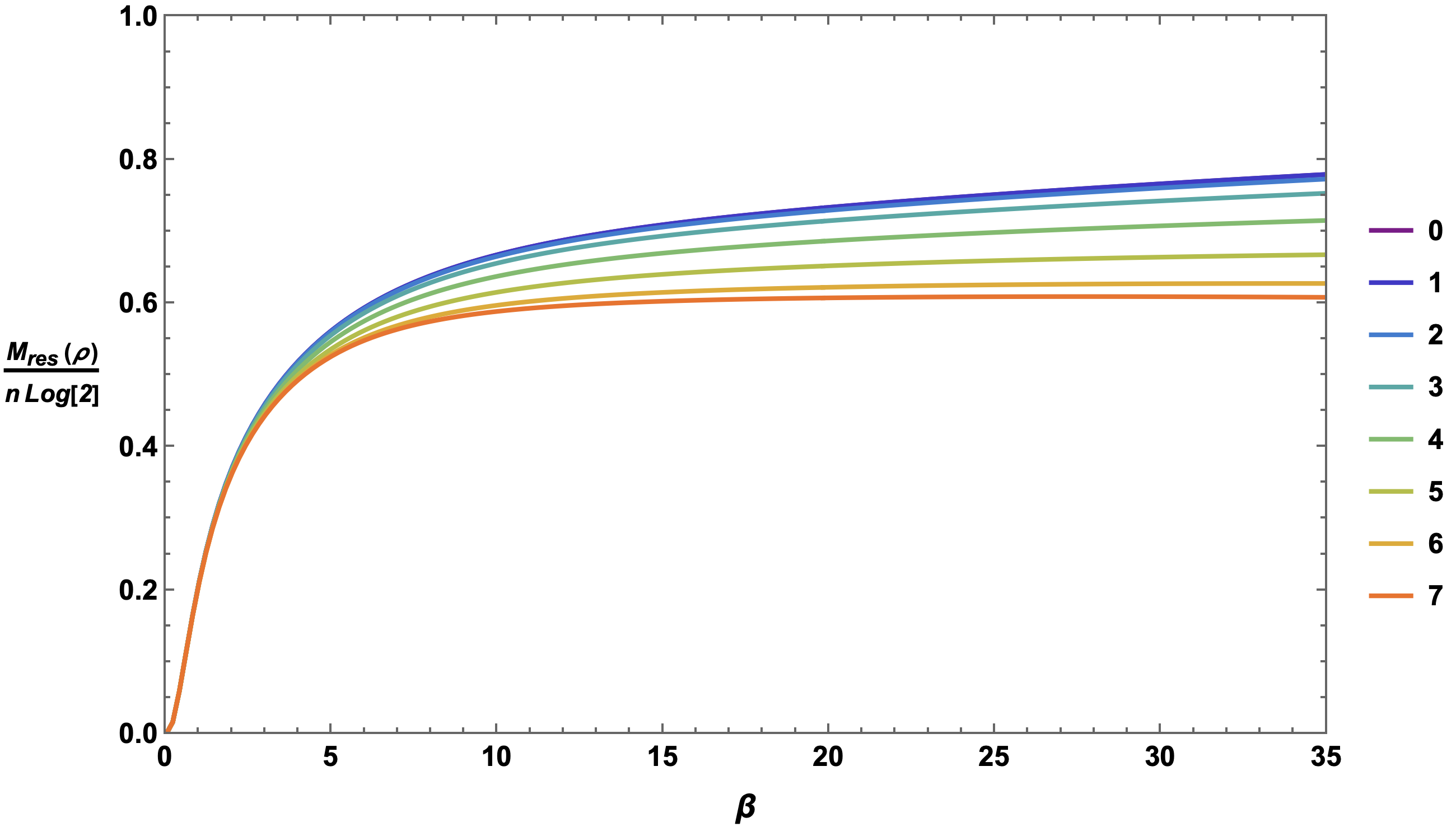}
    \caption{Thermal state}\label{TPQvsThermal6}
  \end{subfigure}
  \begin{subfigure}{.33\linewidth}
\includegraphics[width=\linewidth]{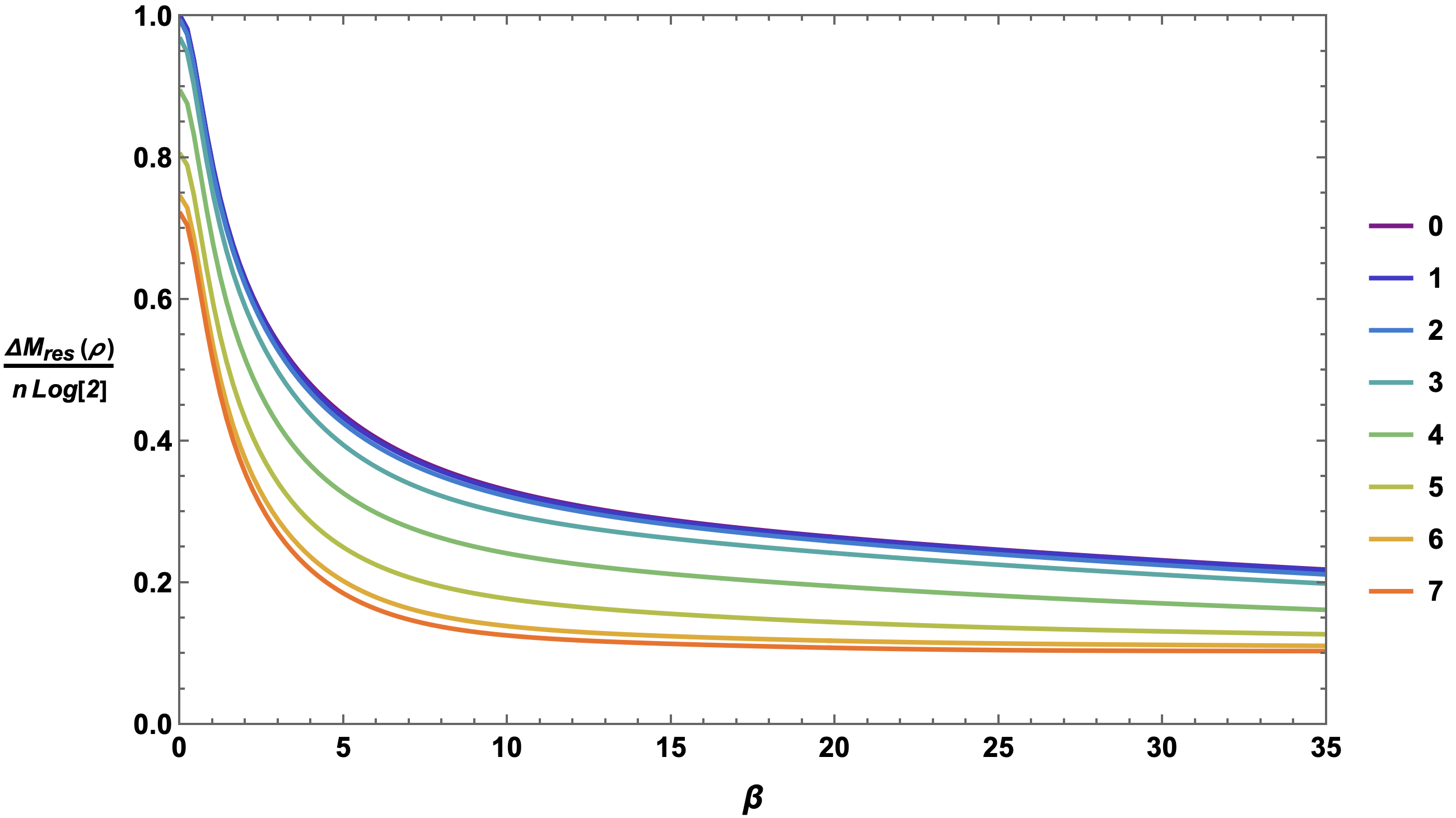}
    \caption{Diff b/w TPQ \& thermal}\label{TPQvsThermal7}
  \end{subfigure}
\caption{\footnotesize{Stabilizer Rényi entropy as a function of temperature, restricted by  the support of Pauli strings on the number of qubits indicated on the right.}}
\label{SRETPQthi2}
\end{figure}
Finally, in \Cref{SRETPQthi3} we present the multipartite non-local SRE for the TPQ and thermal states of the SYK$_4$ model as a function of inverse temperature, obtained by varying the number of qubits  up to the total system size. Once again, we find that the difference between the two rapidly decreases and saturates to a small value, which vanishes only for the case where all qubits are included.
\begin{figure}[H]
  \centering
  \begin{subfigure}{.32\linewidth}
\includegraphics[width=\linewidth]{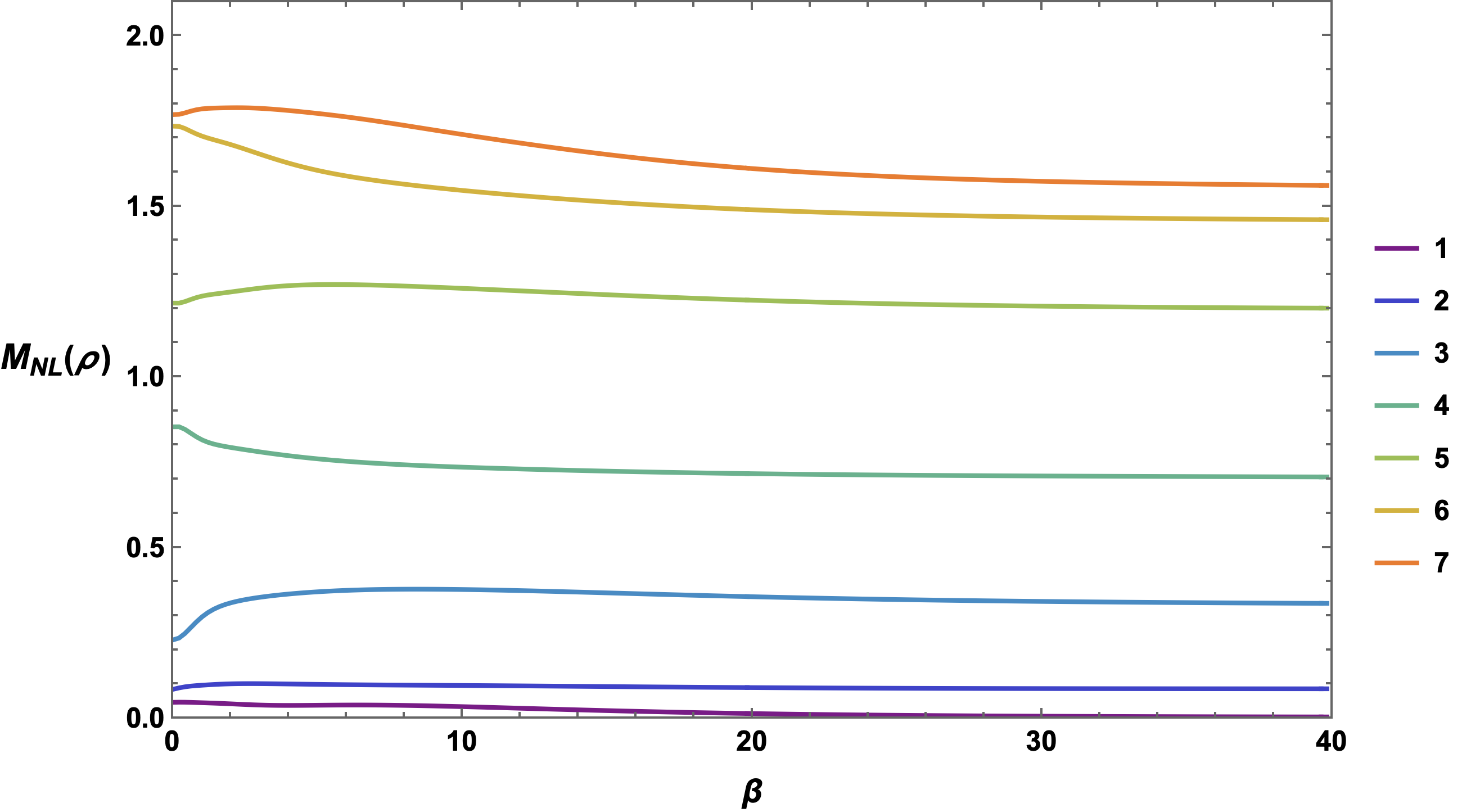}
\caption{TPQ}\label{NLTPQvsThermal00}
  \end{subfigure}
  \begin{subfigure}{.33\linewidth}
\includegraphics[width=\linewidth]{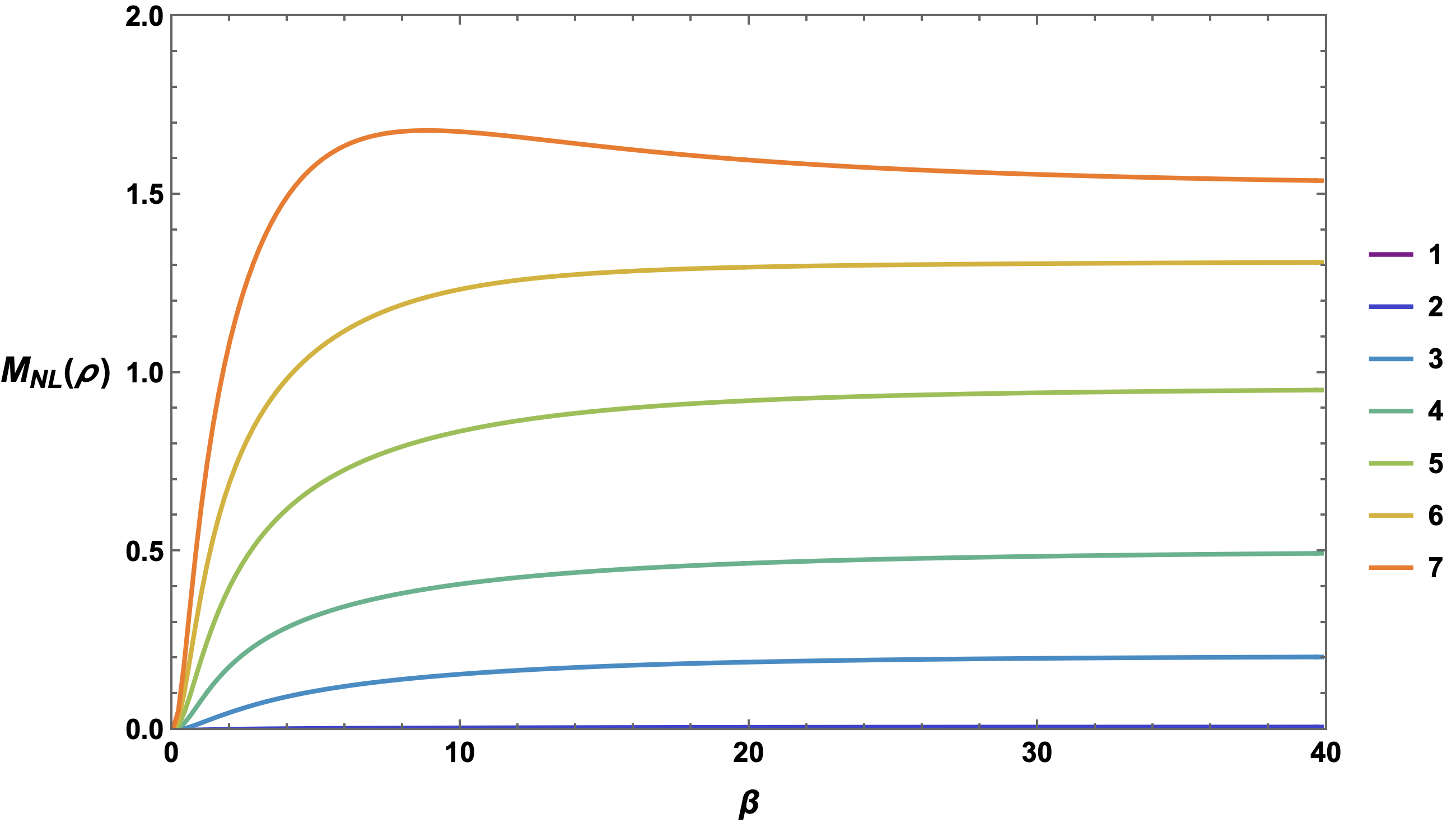}
    \caption{Thermal state}\label{NLTPQvsThermal01}
  \end{subfigure}
  \begin{subfigure}{.33\linewidth}
\includegraphics[width=\linewidth]{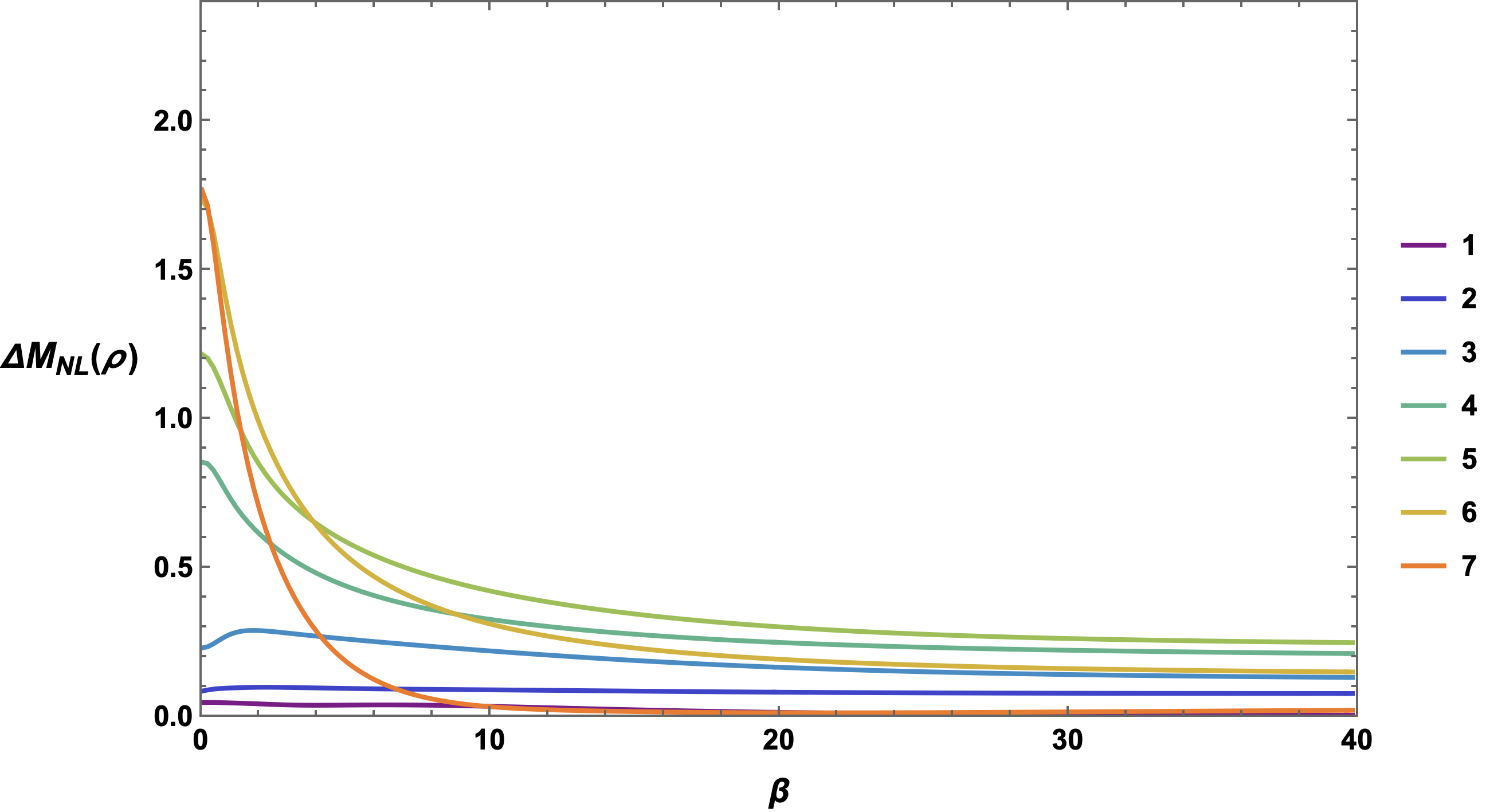}
    \caption{Diff b/w TPQ \& thermal}\label{NLTPQvsThermal02}
  \end{subfigure}
\caption{\footnotesize{For $N=14$ ($n=7$ qubits), Multipartite non-local Stabilizer Rényi entropy as a function of inverse temperature. The list on right indicates the number of qubits considered for the computation of multipartite SRE.}}
\label{SRETPQthi3}
\end{figure}
\subsubsection{Probability Distribution}
\subsubsection*{Based on the length of the Majorana string}
To further investigate the finite-temperature behavior of the SRE for both the TPQ and thermal states, we compute the average probability contribution to the SRE as a function of fixed Majorana string length. The results are shown in \Cref{ProbMN8}. At $\beta = 0$, the thermal state receives its entire contribution solely from the identity operator, whereas the TPQ state exhibits a distribution peaked at intermediate string lengths. As $\beta$ increases, the contribution of the identity operator to the SRE probability of the thermal state diminishes, and nontrivial strings restricted to lengths which are multiples of $4$ begin to contribute. The vanishing behavior of the probability for Majorana strings of lengths given by multiples of $4$ can be attributed to the structure of the thermal state governed by the SYK$_4$ Hamiltonian, which contains only quartic interactions. As a consequence, only operator strings whose lengths are multiples of four contribute appreciably. As we will show later, introducing a quadratic term through a mass deformation alters this pattern: once the deformation is added, probability contributions from Majorana strings of length equal to any even integer begin to appear, reflecting the modified operator content of the theory. Upon further increasing $\beta$, the probability distributions of the TPQ and thermal states gradually converge, and their profiles begin to closely resemble one another.

\begin{figure}[H]
  \centering
  \begin{subfigure}{.4\linewidth}
    \includegraphics[height=3.5cm,width=\linewidth]{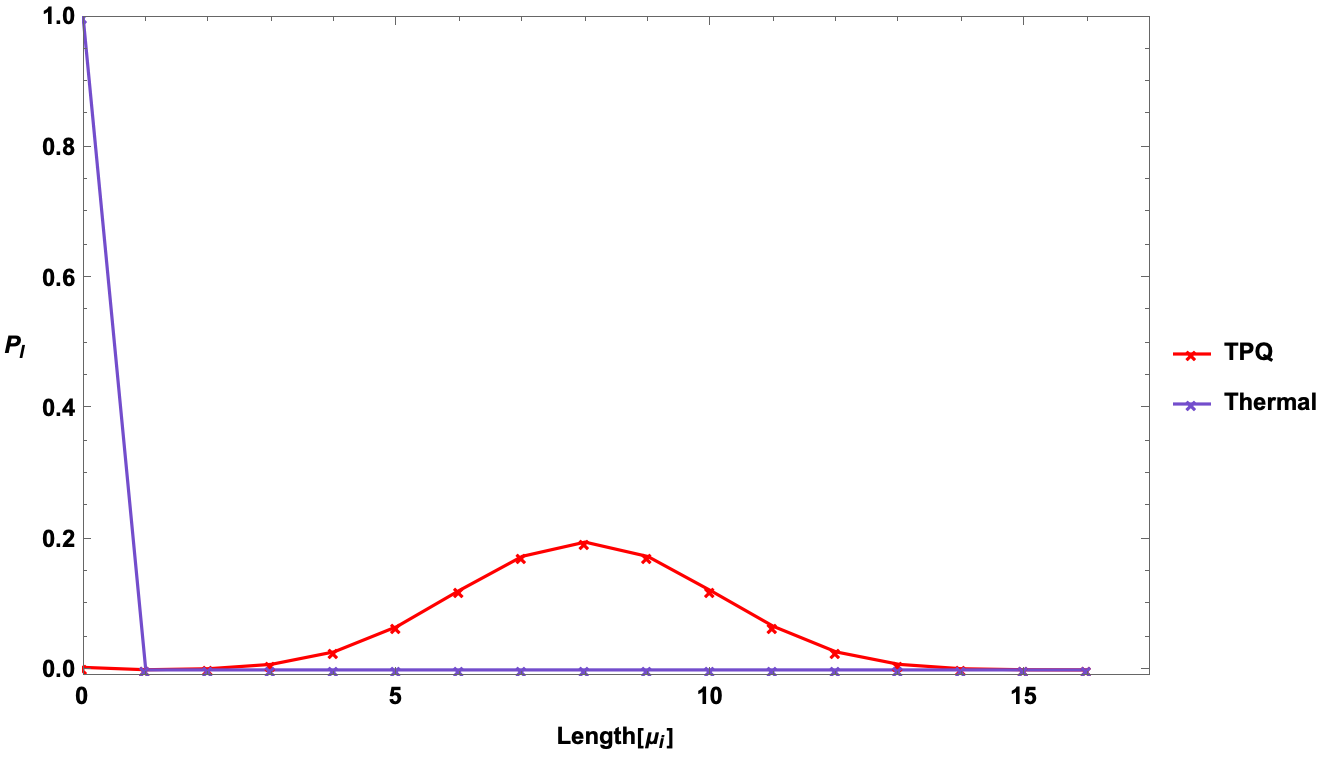}
    \caption{$\beta=0$}\label{}
  \end{subfigure}
  \begin{subfigure}{.4\linewidth}
    \includegraphics[height=3.5cm,width=\linewidth]{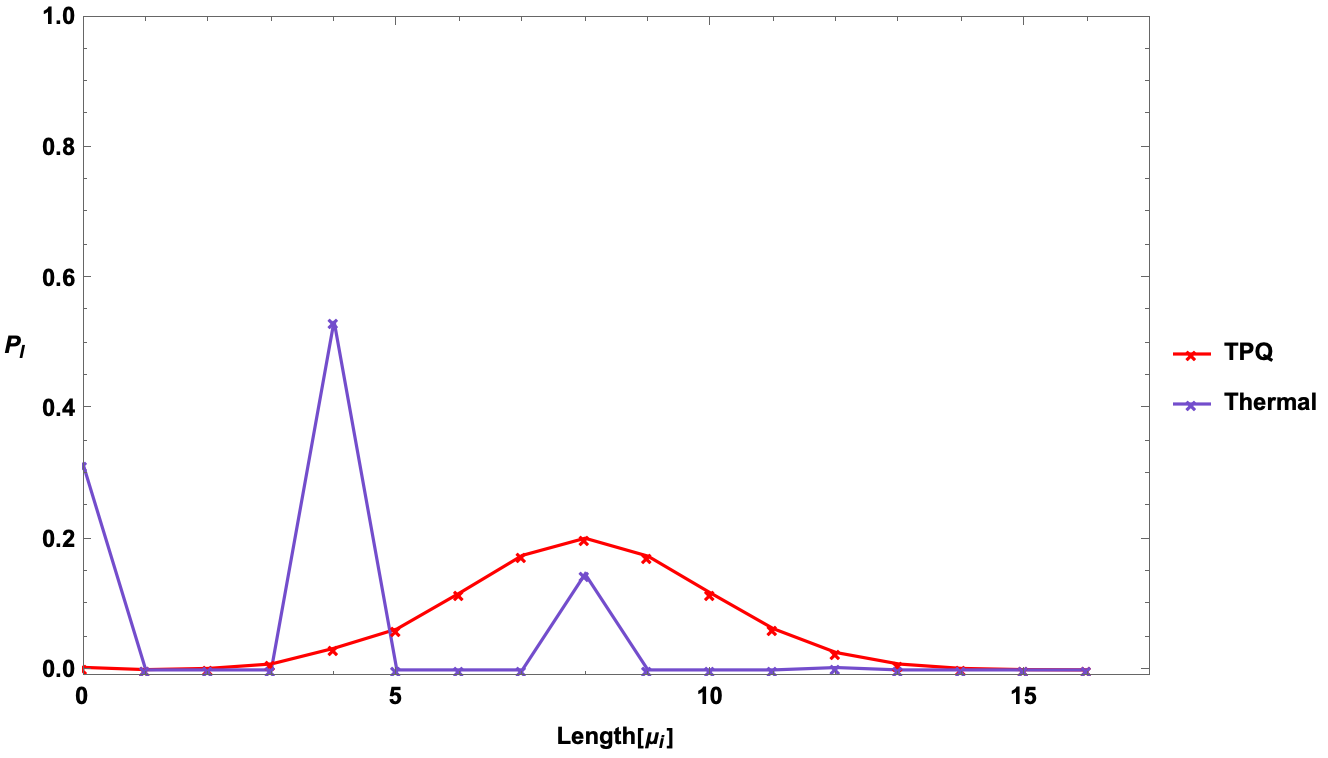}
    \caption{$\beta=1$}\label{}
  \end{subfigure}
  \begin{subfigure}{.4\linewidth}
\includegraphics[height=3.5cm,width=\linewidth]{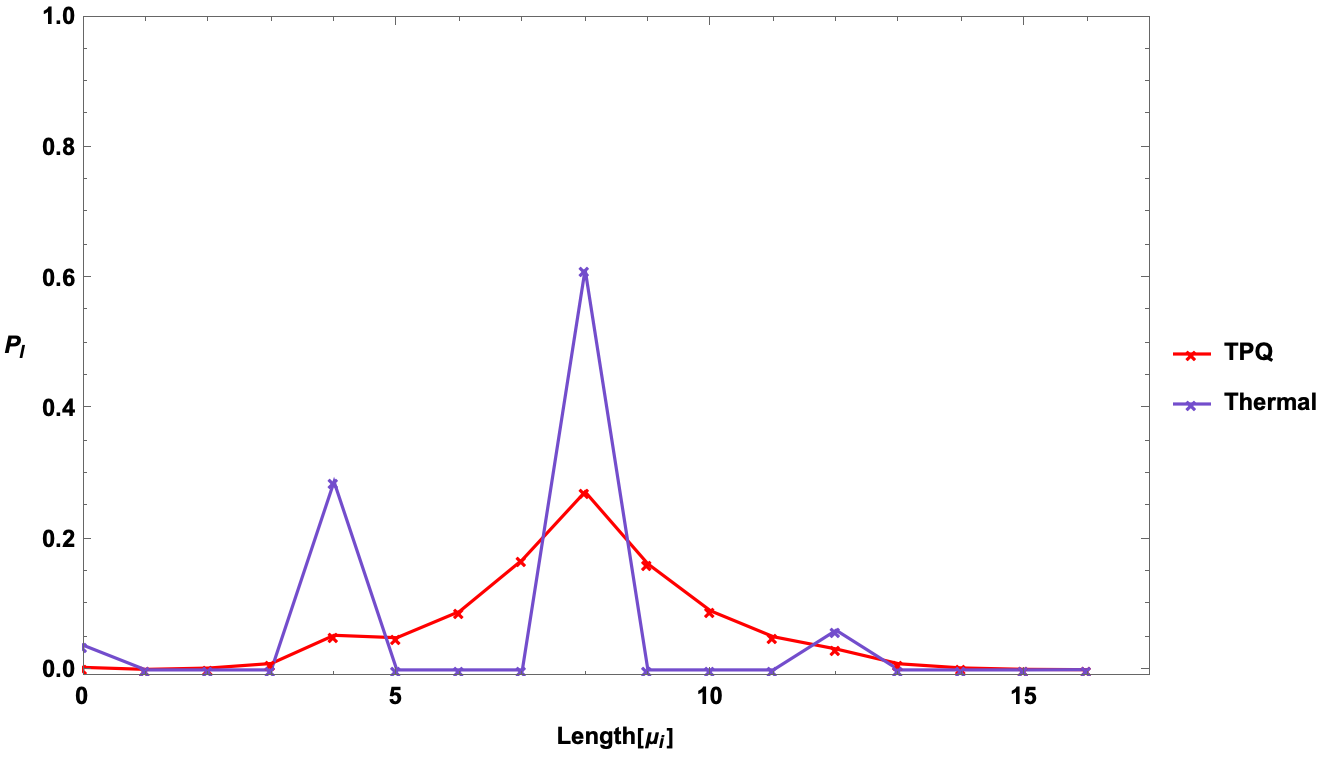}
    \caption{$\beta=5$}\label{}
  \end{subfigure}
   \begin{subfigure}{.4\linewidth}
\includegraphics[height=3.5cm,width=\linewidth]{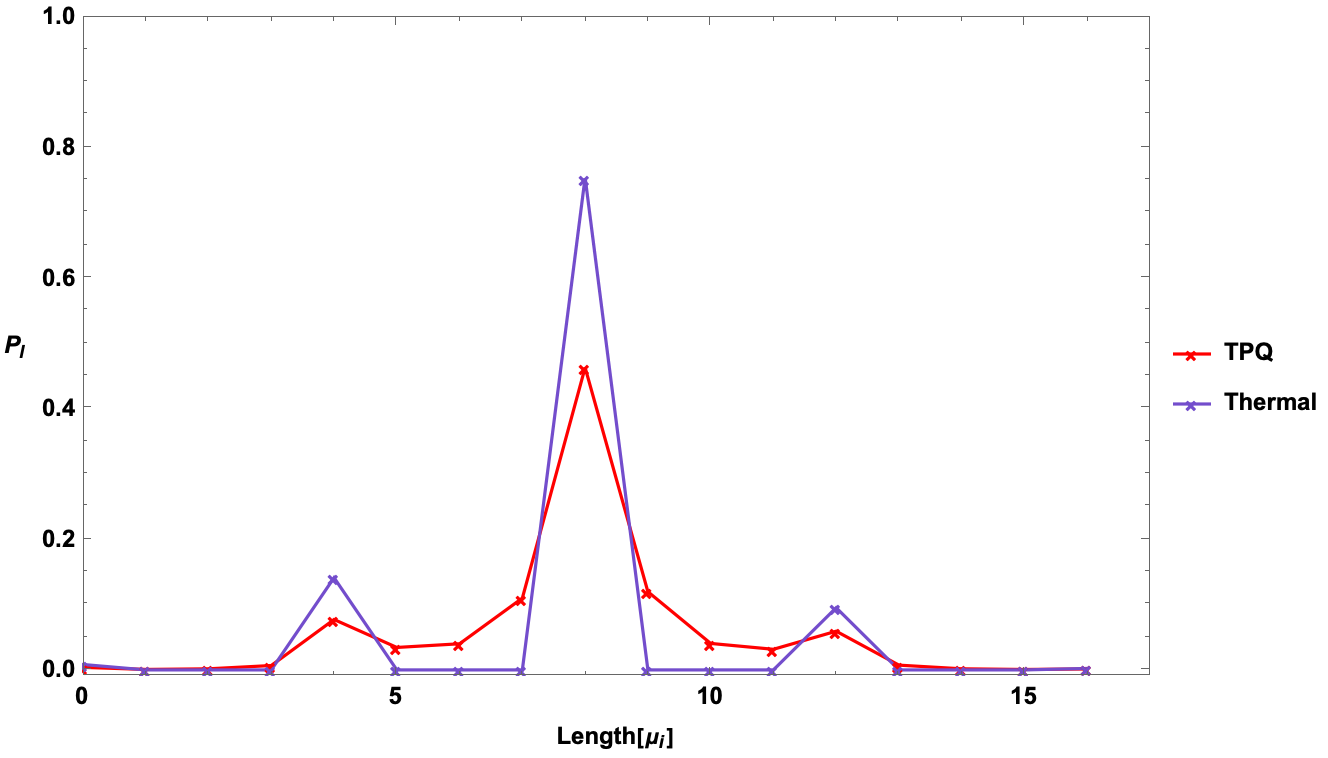}
    \caption{$\beta=20$}\label{}
  \end{subfigure}
  \caption{\footnotesize{For $N = 16$, $P_l = \sum_i \frac{\mathrm{Tr}(\rho \mu_i)^2}{2^{n}\mathrm{Tr}(\rho^2)}$ (summation is over all Majorana strings which have equal length $l$) quantifies the total probability distribution over Majorana strings of varying length for \textbf{SYK$_4$} model. In the thermal ensemble, nonzero contributions are observed only from the identity component and strings of lengths given by multiples of 4 whereas in the TPQ state this behavior emerges only at sufficiently large $\beta$.
  }}
  \label{ProbMN8}
\end{figure}

\subsubsection*{Based the support on the number of qubits (Pauli basis) }

After understanding how Majorana string length influences the probability contributions to the SRE, we now shift our focus to how these contributions are organized according to the number of qubits on which the strings have support.
Different Majorana strings act on varying numbers of qubits, allowing us to analyze the SRE contributions from strings that share the same support size. This behavior is illustrated in \Cref{ProbPauliSYK4} for both the TPQ and thermal states. As in the previous case, at infinite temperature the probabilities contributing to the SRE of the thermal state originate solely from the identity operator. In contrast, once $\beta$ becomes nonzero, the thermal state begins to receive contributions from Majorana strings with support on any number of qubits. On further increasing $\beta$ its probability distribution quickly approaches that of the TPQ state.

\begin{figure}[H]
  \centering
  \begin{subfigure}{.4\linewidth}
    \includegraphics[height=3.5cm,width=\linewidth]{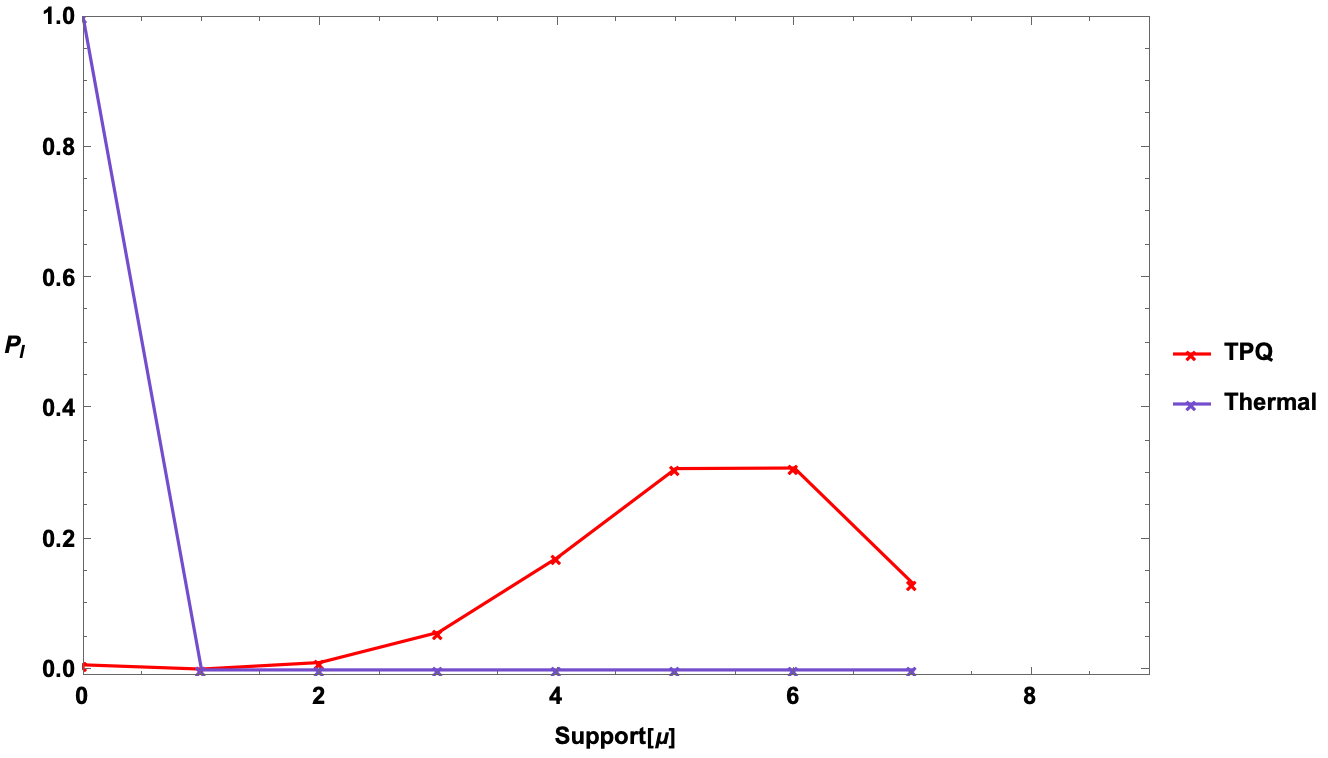}
    \caption{$\beta=0$}\label{}
  \end{subfigure}
  \begin{subfigure}{.4\linewidth}
    \includegraphics[height=3.5cm,width=\linewidth]{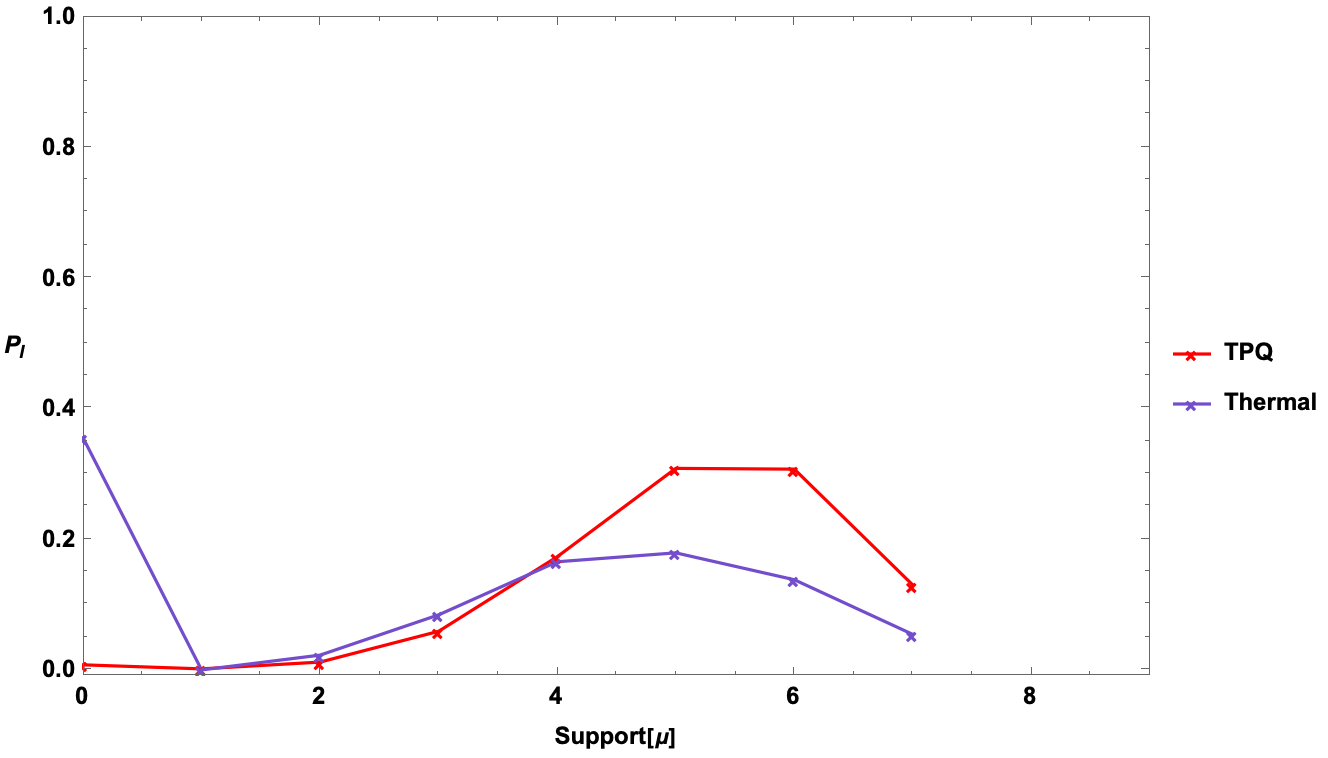}
    \caption{$\beta=1$}\label{}
  \end{subfigure}
  \begin{subfigure}{.4\linewidth}
\includegraphics[height=3.5cm,width=\linewidth]{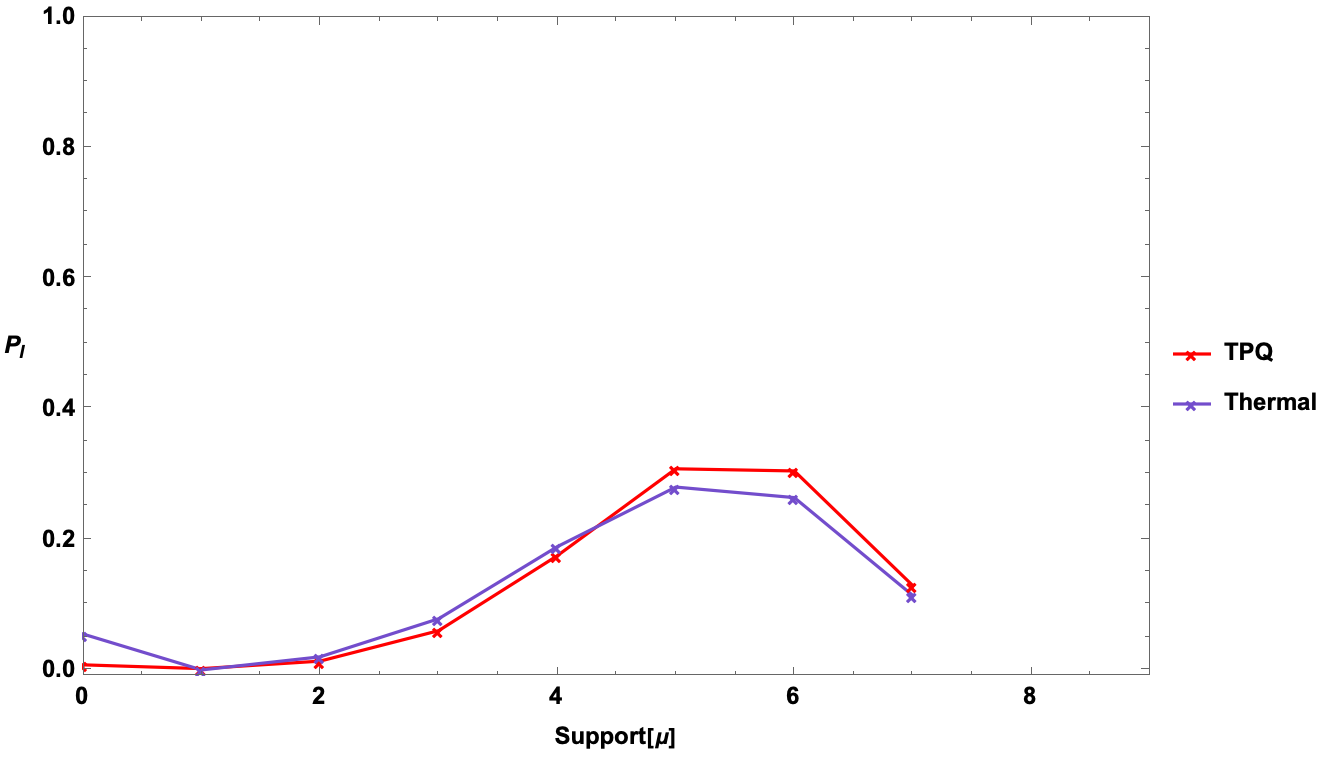}
    \caption{$\beta=5$}\label{}
  \end{subfigure}
   \begin{subfigure}{.4\linewidth}
\includegraphics[height=3.5cm,width=\linewidth]{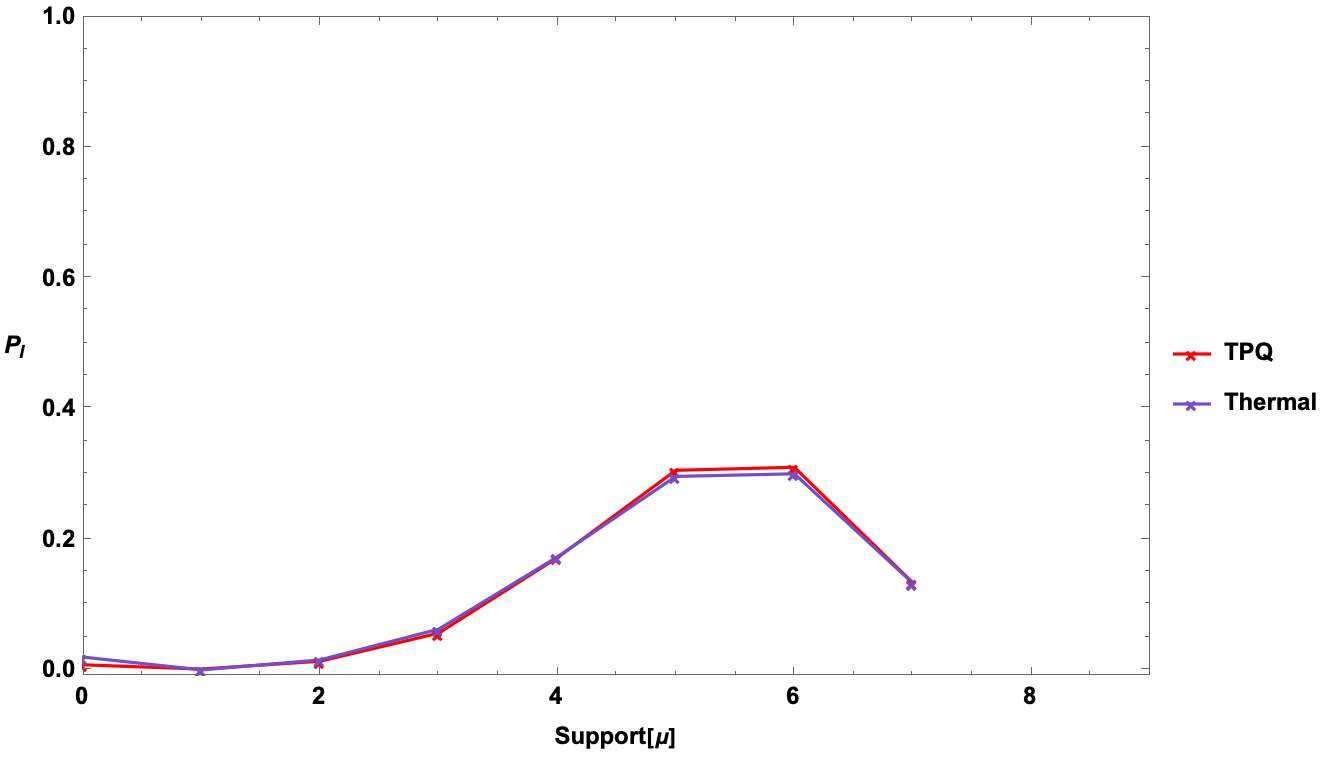}
    \caption{$\beta=20$}\label{}
  \end{subfigure}
  \caption{\footnotesize{For $N = 14$, $P_l = \sum_i \frac{\mathrm{Tr}(\rho \mu_i)^2}{\mathrm{Tr}(\rho^2)}$ (summation is over all Majorana strings which have support on equal number of the qubits) quantifies the total probability distribution over Majorana strings of varying support size for \textbf{SYK$_4$} model. 
  }}
  \label{ProbPauliSYK4}
\end{figure}

\section{Mass Deformed SYK}\label{sec 4}
It is well known that the SYK model with quartic interactions ($q=4$) exhibits maximal quantum chaos, whereas the quadratic model ($q=2$) is integrable. By taking a weighted linear combination of these two limits, one obtains a Hamiltonian that interpolates between integrable and chaotic behavior. This deformation is commonly referred to as the \emph{mass-deformed SYK model}  \cite{Banerjee:2016ncu,Garcia-Garcia:2017bkg,Nosaka:2018iat,Nandy:2022hcm}. The Hamiltonian takes the form
\begin{equation}
    H = (1-g)\, H_{\mathrm{SYK}_4} + g\, H_{\mathrm{SYK}_2},
\end{equation}
where the parameter $g \in [0,1]$ tunes the relative contribution of the chaotic quartic interactions and the integrable quadratic term. As $g$ is varied, the system exhibits a crossover, in the large-$N$ limit, a phase transition between the maximally chaotic SYK$_4$ regime and the integrable SYK$_2$ regime. 

\subsection{SRE}
In \Cref{SREMDSYK8}, we present the time evolution of the SRE in the mass-deformed SYK$_4$ model. We observe that, for all values of the deformation parameter $g$, the SRE quickly grows toward the saturation value characteristic of the SYK$_2$ model (i.e., $g=1$), consistent with the early-time behavior shown in \Cref{SYK7q2GHZ}. If we evolve the system for sufficiently long times, the SRE in every case except $g=1$ eventually reaches the saturation value of the SYK$_4$ model (corresponding to $g=0$). The noteworthy feature here is that the time required to reach this final saturation value increases monotonically as $g$ is varied from $0$ to $1$, as illustrated in \Cref{SYK7q1GHZ}.

\begin{figure}[H]
  \centering
  \begin{subfigure}{.3\linewidth}
\includegraphics[height=3.5cm,width=\linewidth]{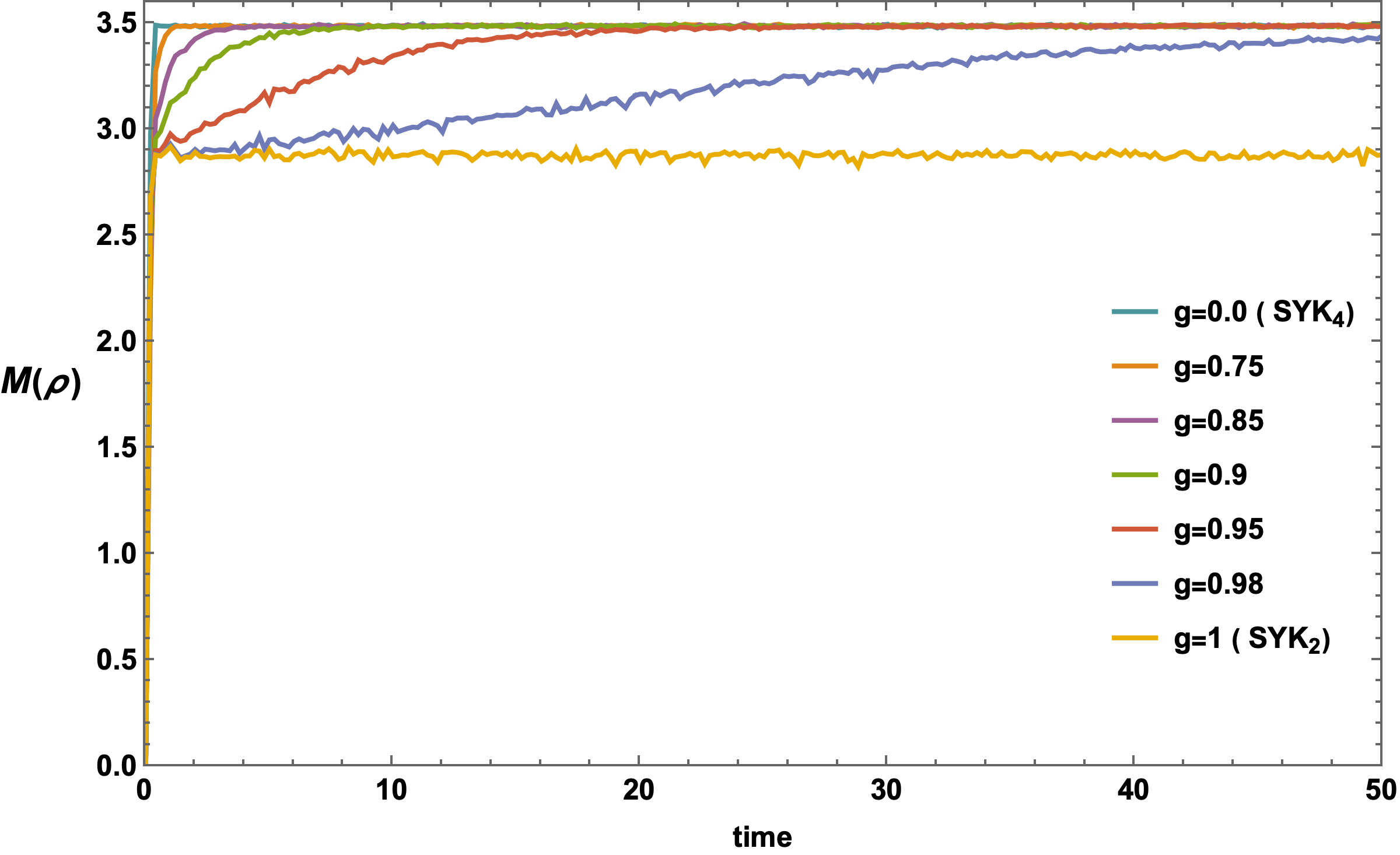}
    \caption{SRE}\label{SYK7q1GHZ}
  \end{subfigure}
  \begin{subfigure}{.3\linewidth}
\includegraphics[height=3.5cm,width=\linewidth]{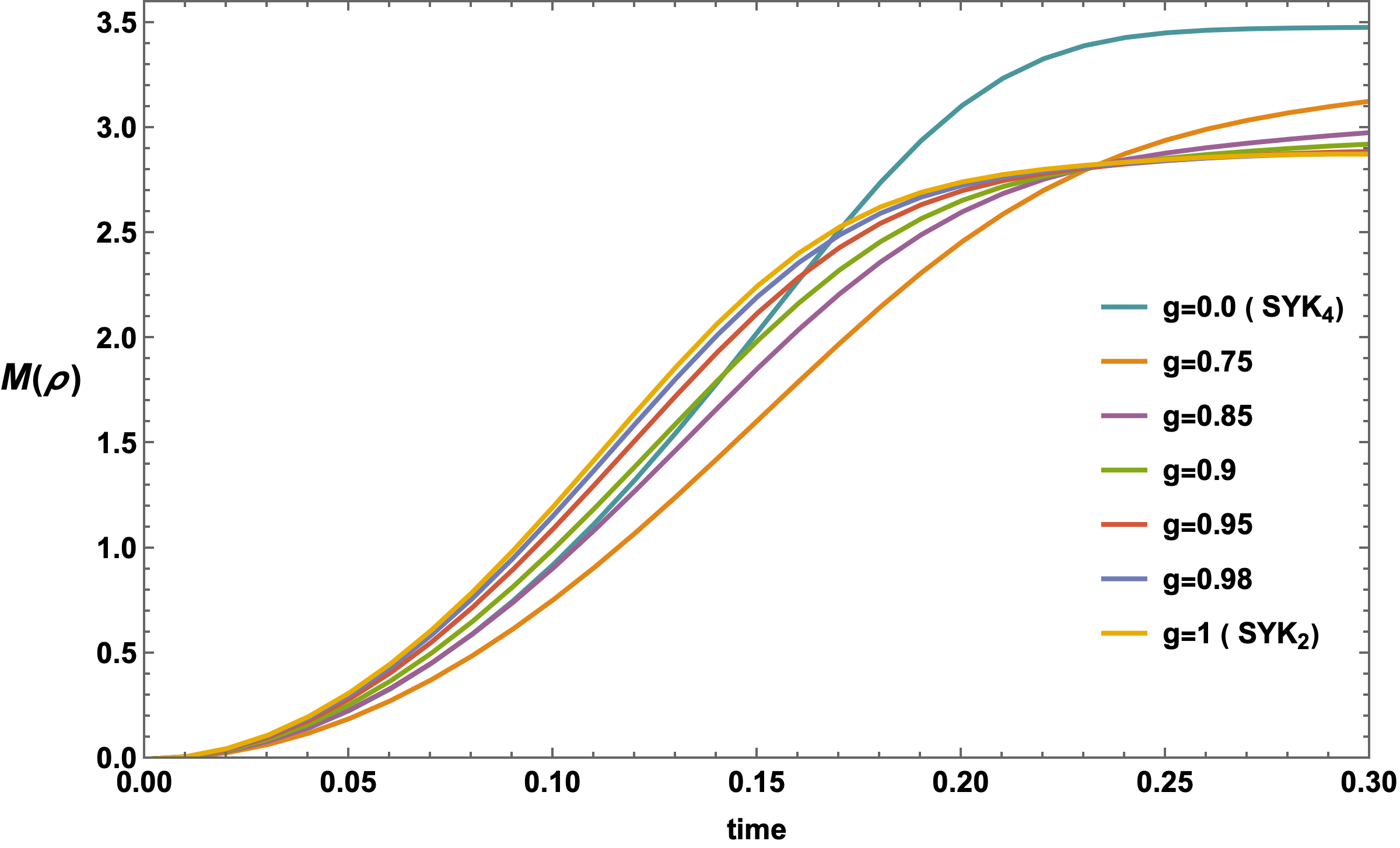}
    \caption{Early time growth of SRE}\label{SYK7q2GHZ}
  \end{subfigure}
  \begin{subfigure}{.3\linewidth}
\includegraphics[height=3.5cm,width=\linewidth]{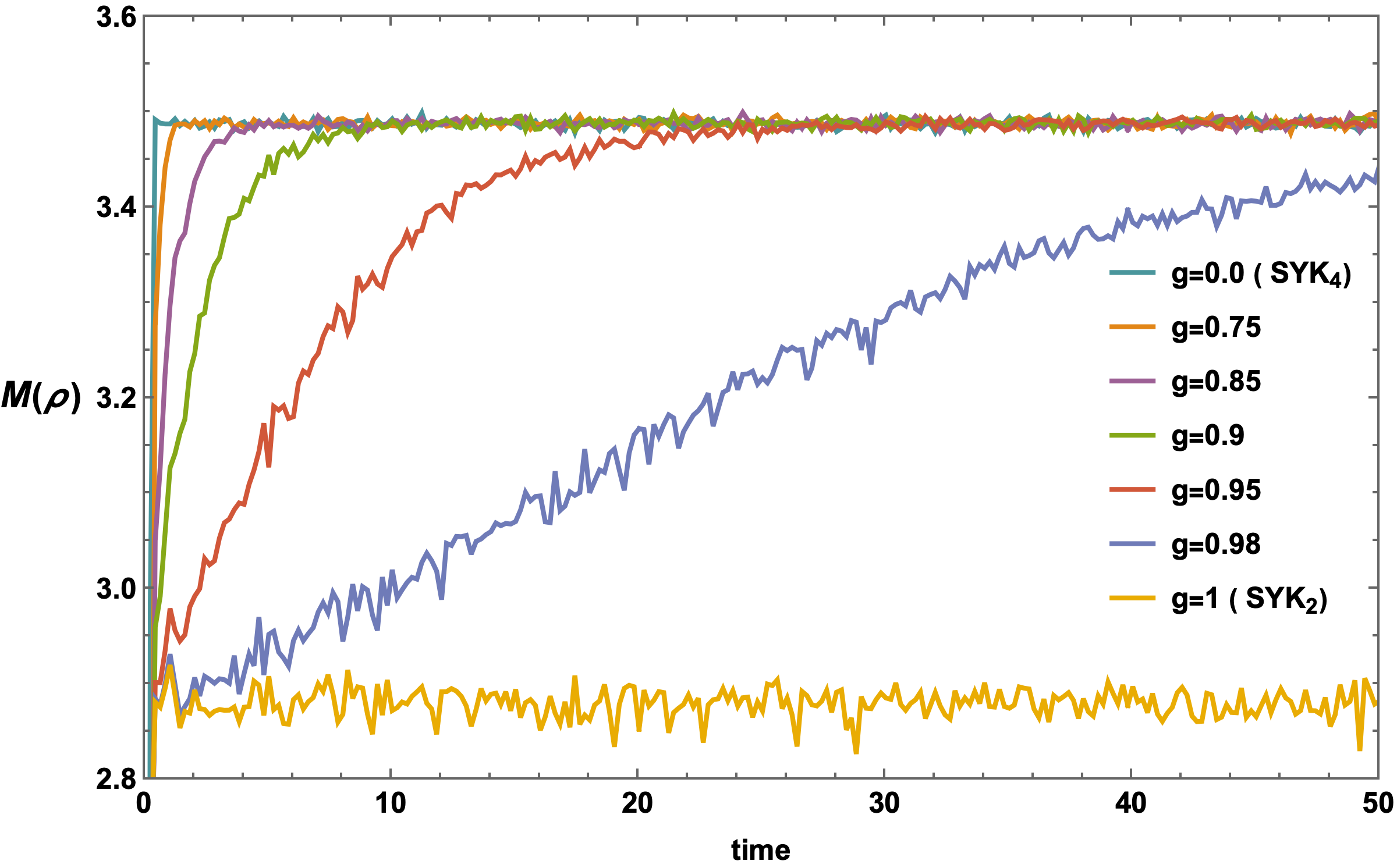}
    \caption{Late time growth of SRE}\label{SYK7q1GHZ}
  \end{subfigure}
  \caption{\footnotesize{For $N=16$ (8 qubits), panels (a) shows the time evolution of the stabilizer Rényi entropy (SRE) with $\ket{GHZ_7}$ as the initial state averaged over 25 samples. Panel (b) shows the initial growth where as panel (c) shows the late time growth. The plots highlight the transition of the late-time saturation value from the SYK$_2$ regime to the SYK$_4$ regime as the coupling $g$ deviates from unity. }}\label{SREMDSYK8}
\end{figure}
We next examine the effect of mass deformation on the probability distributions contributing to the SRE, both as a function of Majorana string length (\Cref{MDSYKProb1}) and as a function of the number of qubits on which the strings have support (\Cref{MDSYKprob2}). Quite interestingly, the Majorana–string–length–resolved plots reveal that introducing the mass deformation causes the SRE to receive contributions from strings of length equal to any multiple of $2$, in contrast to the pure SYK$_4$ case ($g=0$), where contributions arise only from lengths that are multiples of $4$. As described earlier, this behavior can be traced to the presence of the quadratic mass deformation term in the modified SYK$_4$ Hamiltonian.

\begin{figure}[H]
  \centering
  \begin{subfigure}{.3\linewidth}
    \includegraphics[height=3.5cm,width=\linewidth]{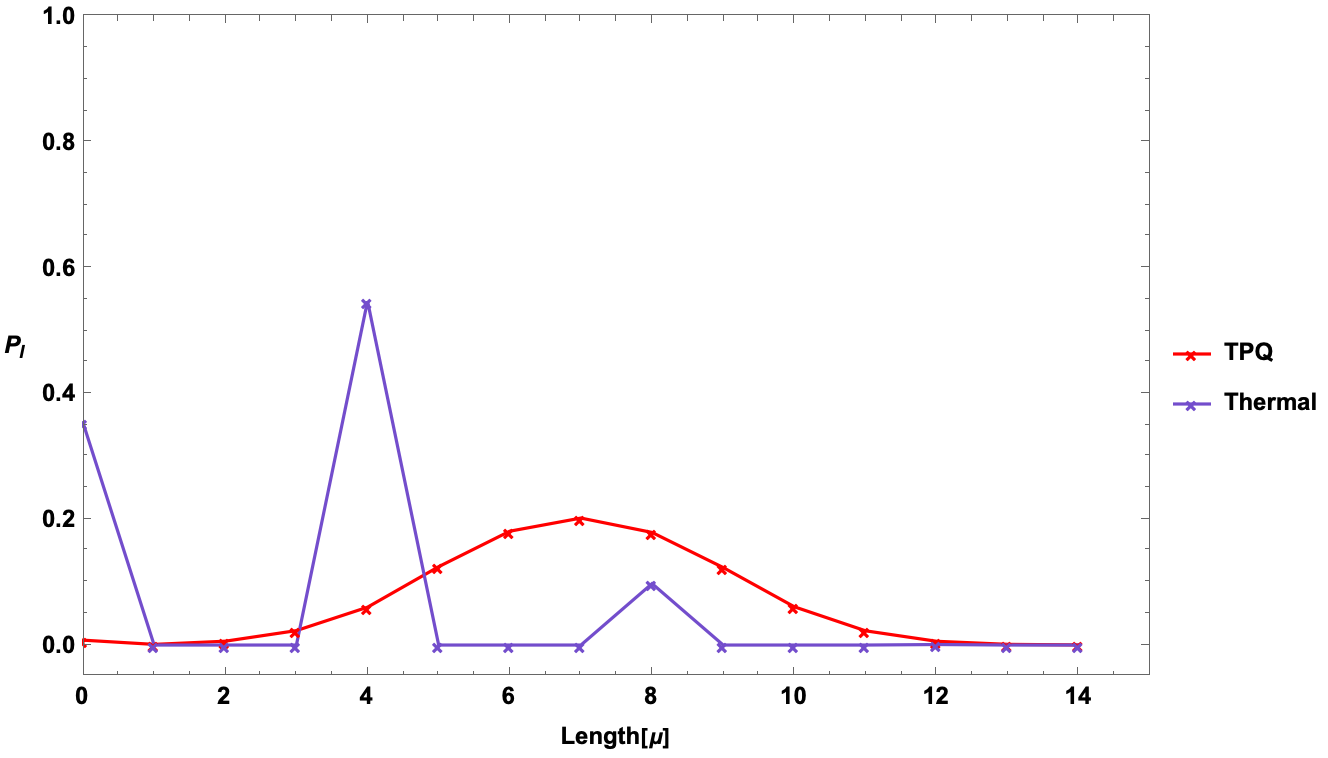}
    \caption{$g=0$}\label{}
  \end{subfigure}
  \begin{subfigure}{.3\linewidth}
    \includegraphics[height=3.5cm,width=\linewidth]{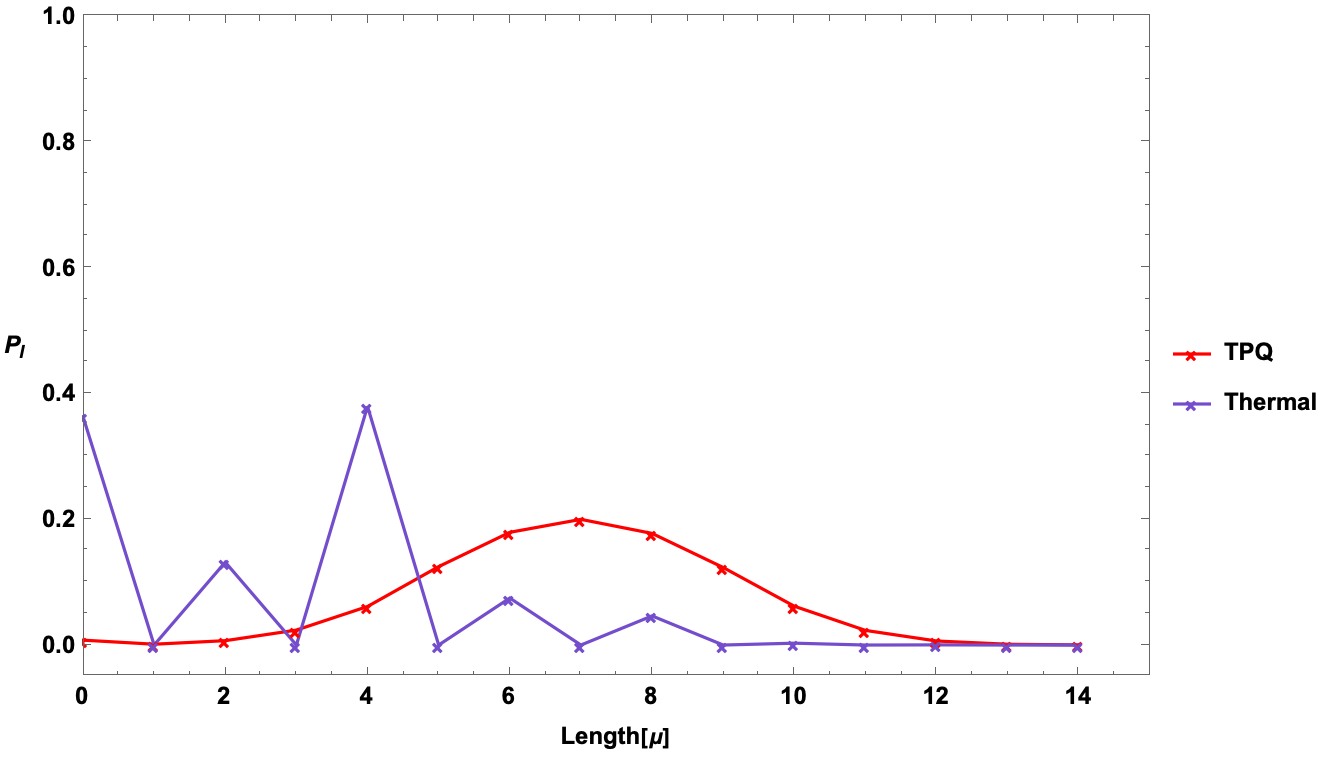}
    \caption{$g=0.2$}\label{}
  \end{subfigure}
  \begin{subfigure}{.3\linewidth}
\includegraphics[height=3.5cm,width=\linewidth]{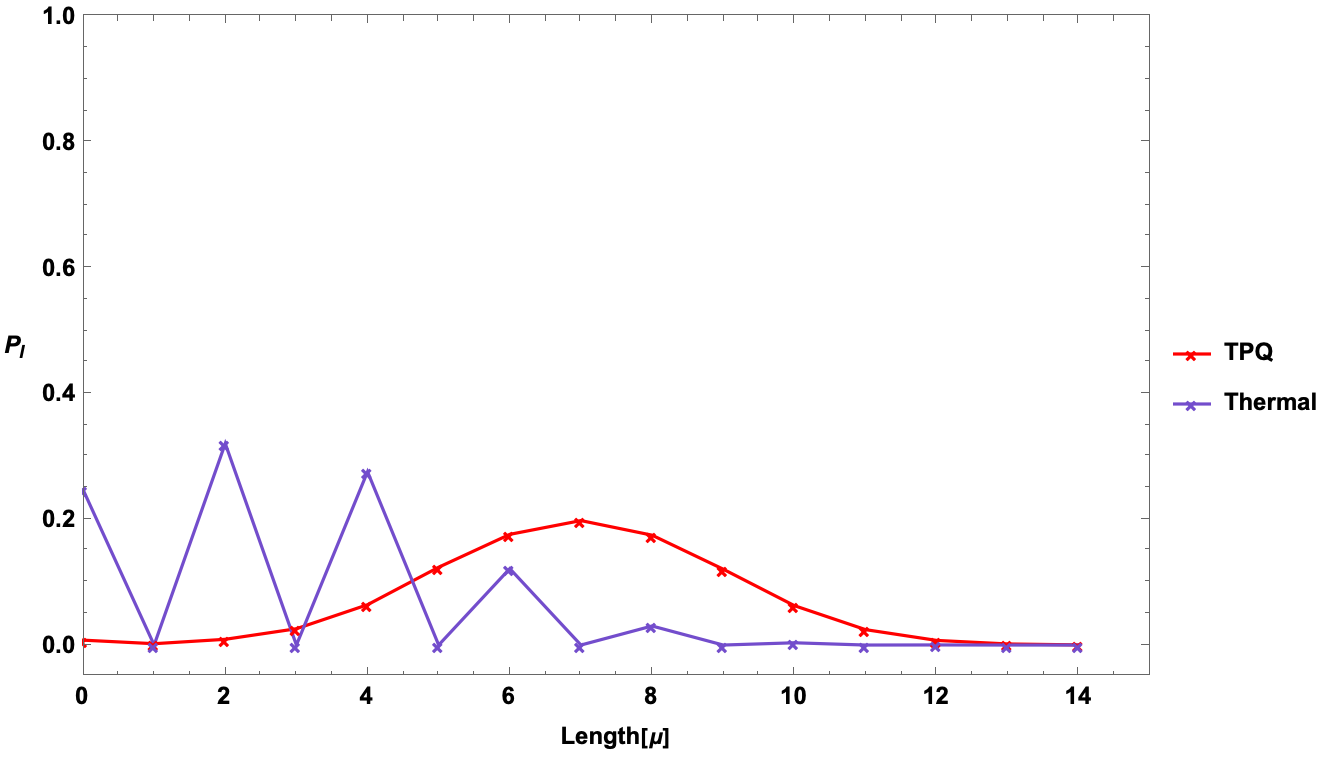}
    \caption{$g=0.4$}\label{}
  \end{subfigure}
  \begin{subfigure}{.3\linewidth}
\includegraphics[height=3.5cm,width=\linewidth]{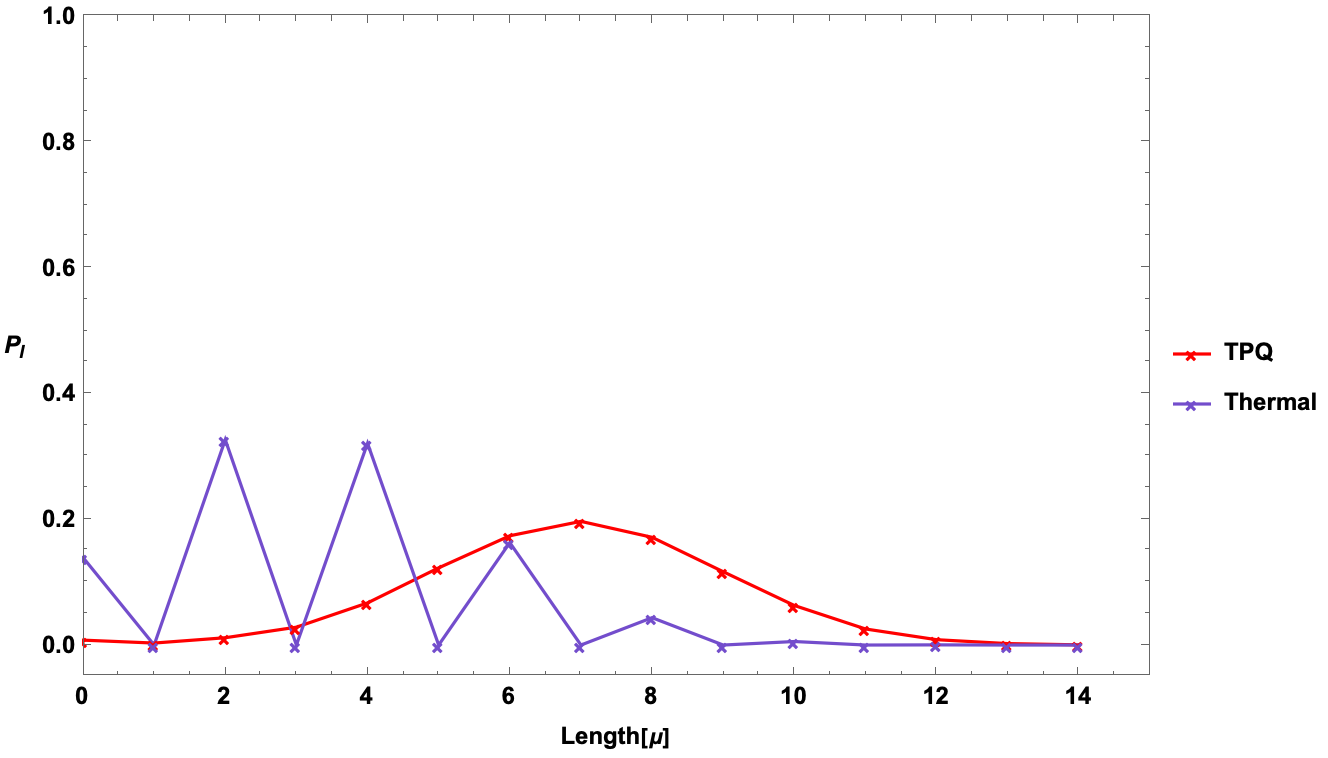}
    \caption{$g=0.6$}\label{}
  \end{subfigure}
  \begin{subfigure}{.3\linewidth}
\includegraphics[height=3.5cm,width=\linewidth]{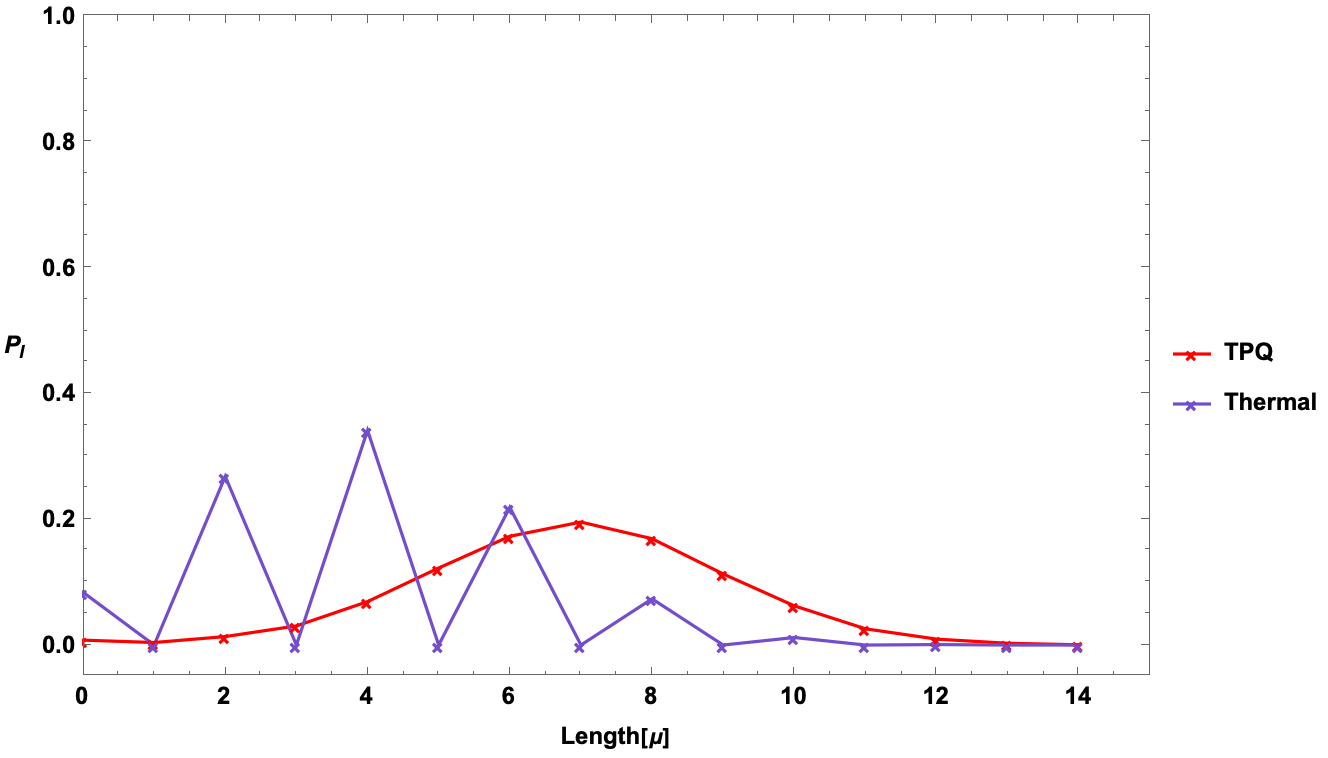}
    \caption{$g=0.8$}\label{}
  \end{subfigure}
   \begin{subfigure}{.3\linewidth}
\includegraphics[height=3.5cm,width=\linewidth]{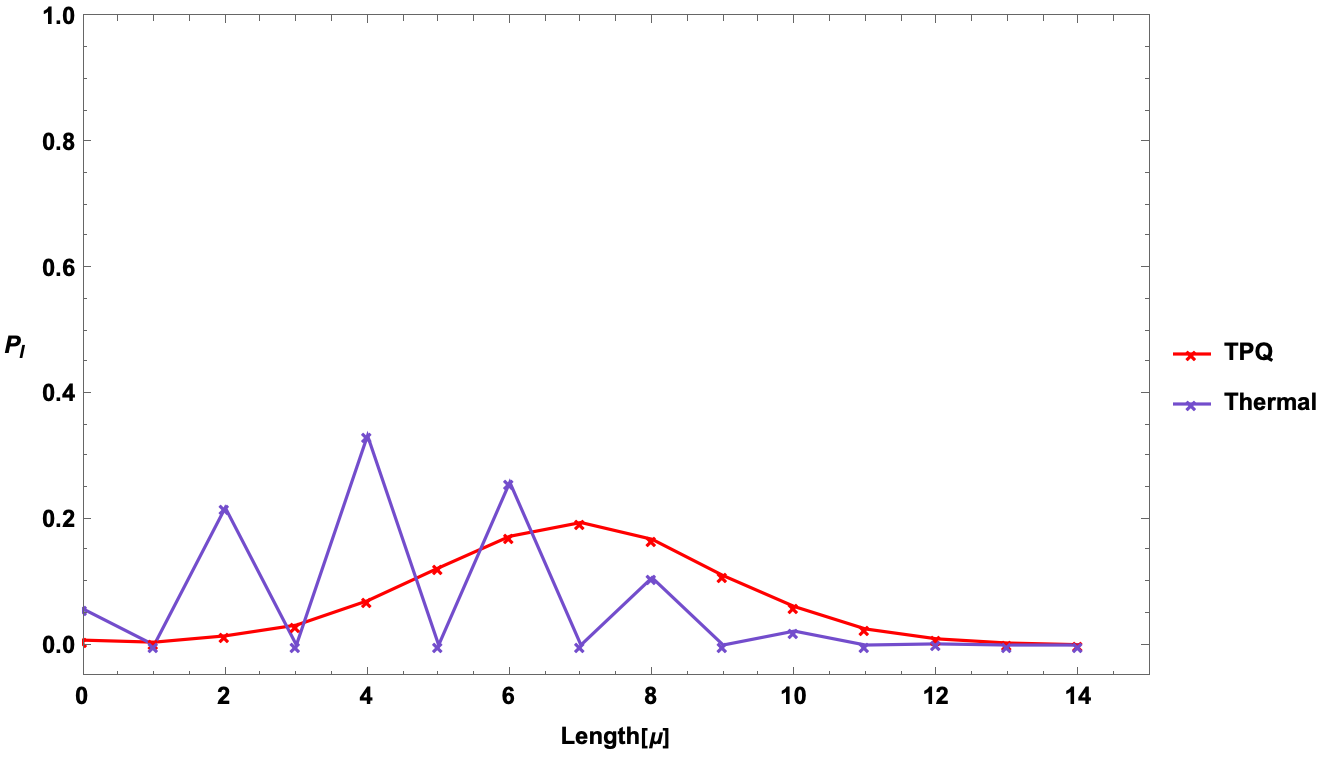}
    \caption{$g=1$}\label{}
  \end{subfigure}
  \caption{\footnotesize{For $N = 14$, $\beta=1$, $P_l = \sum_i \frac{\mathrm{Tr}(\rho \mu_i)^2}{2^{n}\mathrm{Tr}(\rho^2)}$ (summation is over all Majorana strings which have equal length) quantifies the total probability distribution over Majorana strings of varying length for the \textbf{Mass deformed SYK$_4$} model. 
  }}
  \label{MDSYKProb1}
\end{figure}

\begin{figure}[H]
  \centering
  \begin{subfigure}{.3\linewidth}
    \includegraphics[height=3.5cm,width=\linewidth]{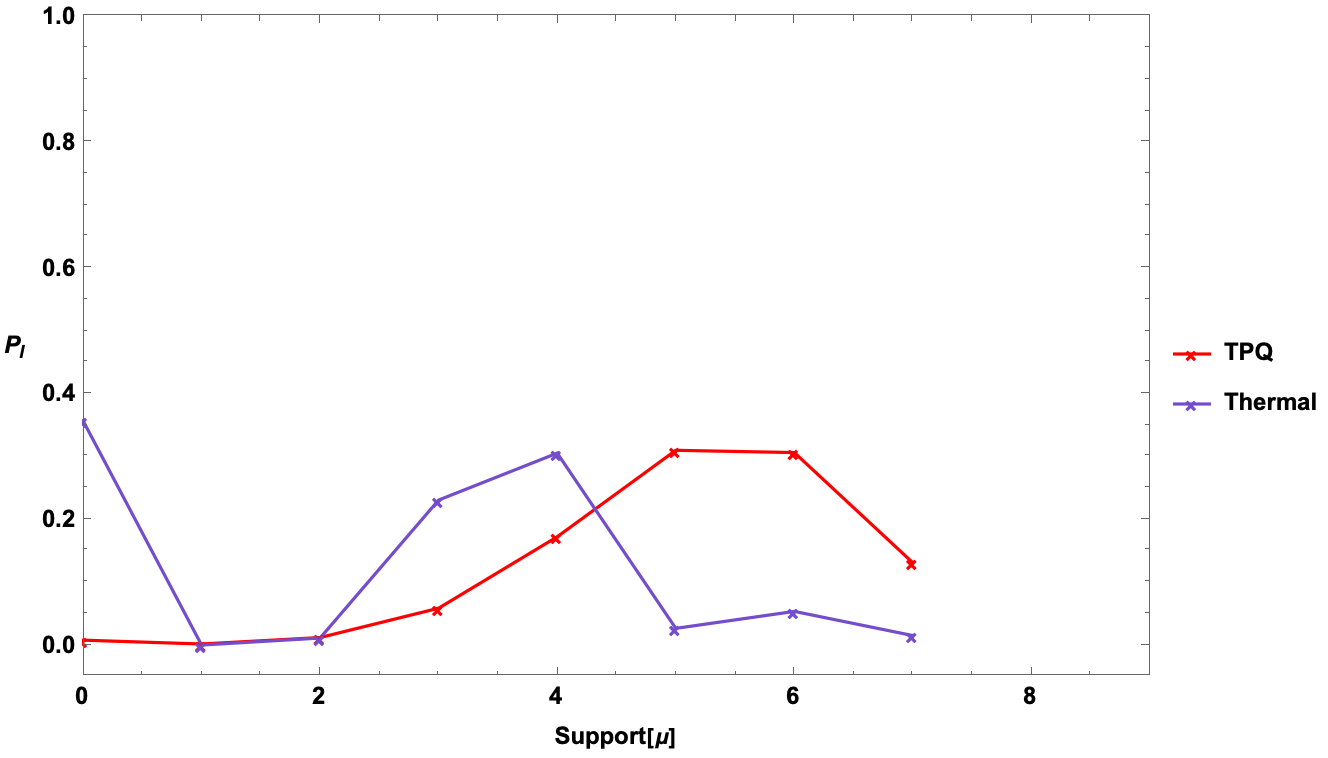}
    \caption{$g=0$}\label{}
  \end{subfigure}
  \begin{subfigure}{.3\linewidth}
    \includegraphics[height=3.5cm,width=\linewidth]{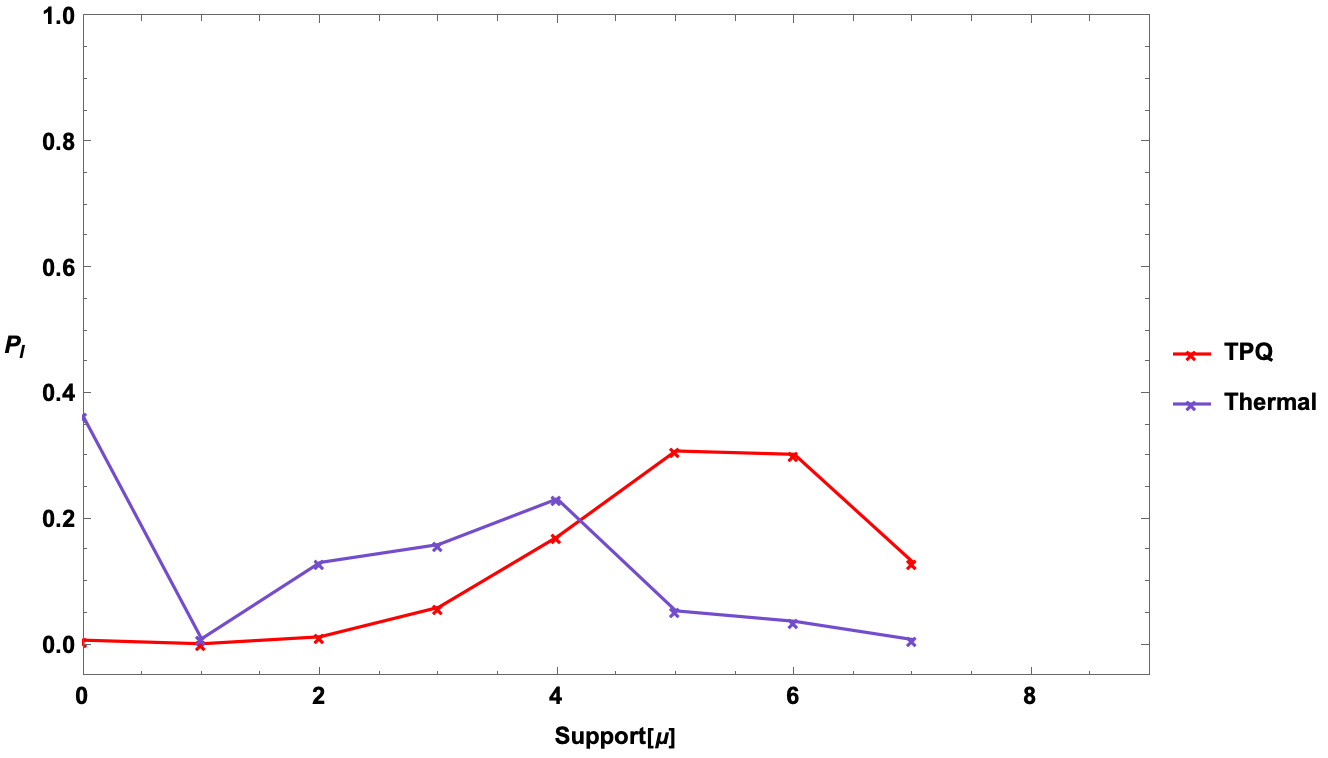}
    \caption{$g=0.2$}\label{}
  \end{subfigure}
  \begin{subfigure}{.3\linewidth}
\includegraphics[height=3.5cm,width=\linewidth]{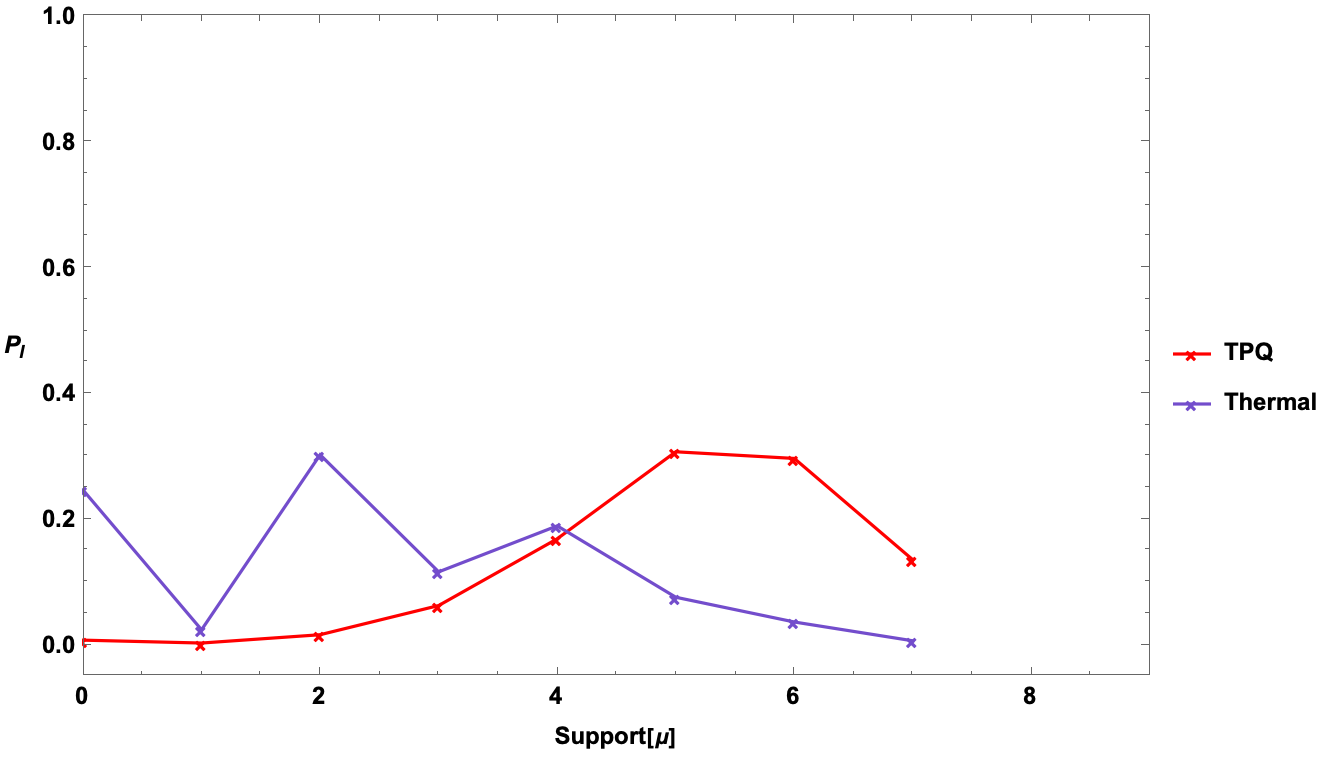}
    \caption{$g=0.4$}\label{}
  \end{subfigure}
  \begin{subfigure}{.3\linewidth}
\includegraphics[height=3.5cm,width=\linewidth]{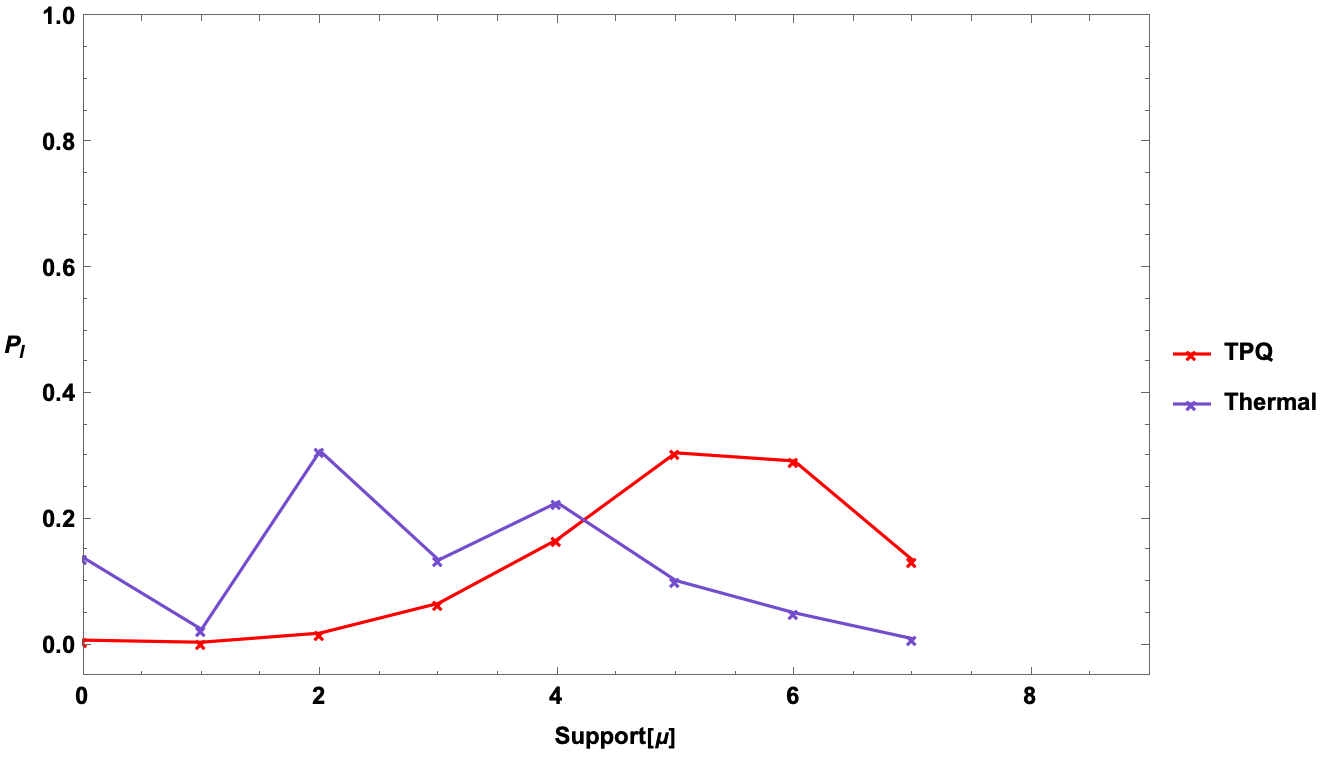}
    \caption{$g=0.6$}\label{}
  \end{subfigure}
  \begin{subfigure}{.3\linewidth}
\includegraphics[height=3.5cm,width=\linewidth]{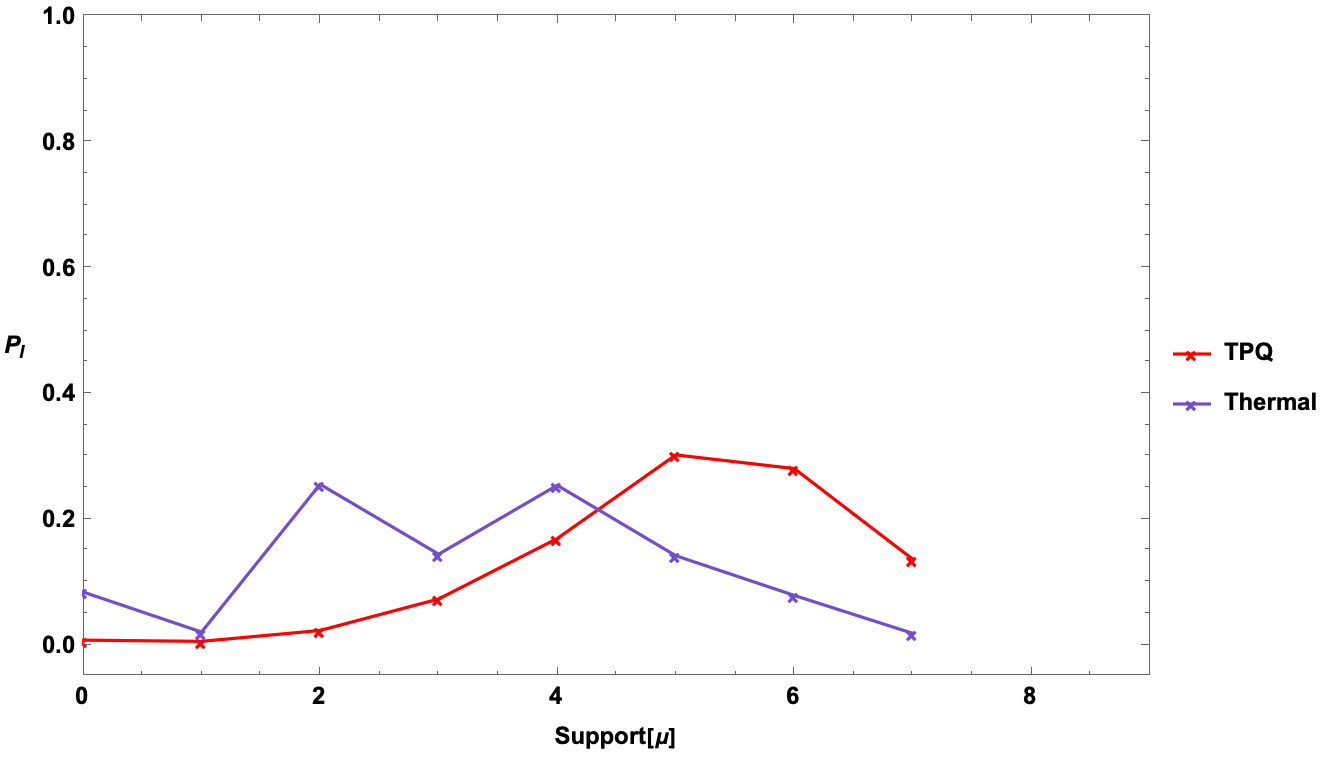}
    \caption{$g=0.8$}\label{}
  \end{subfigure}
   \begin{subfigure}{.3\linewidth}
\includegraphics[height=3.5cm,width=\linewidth]{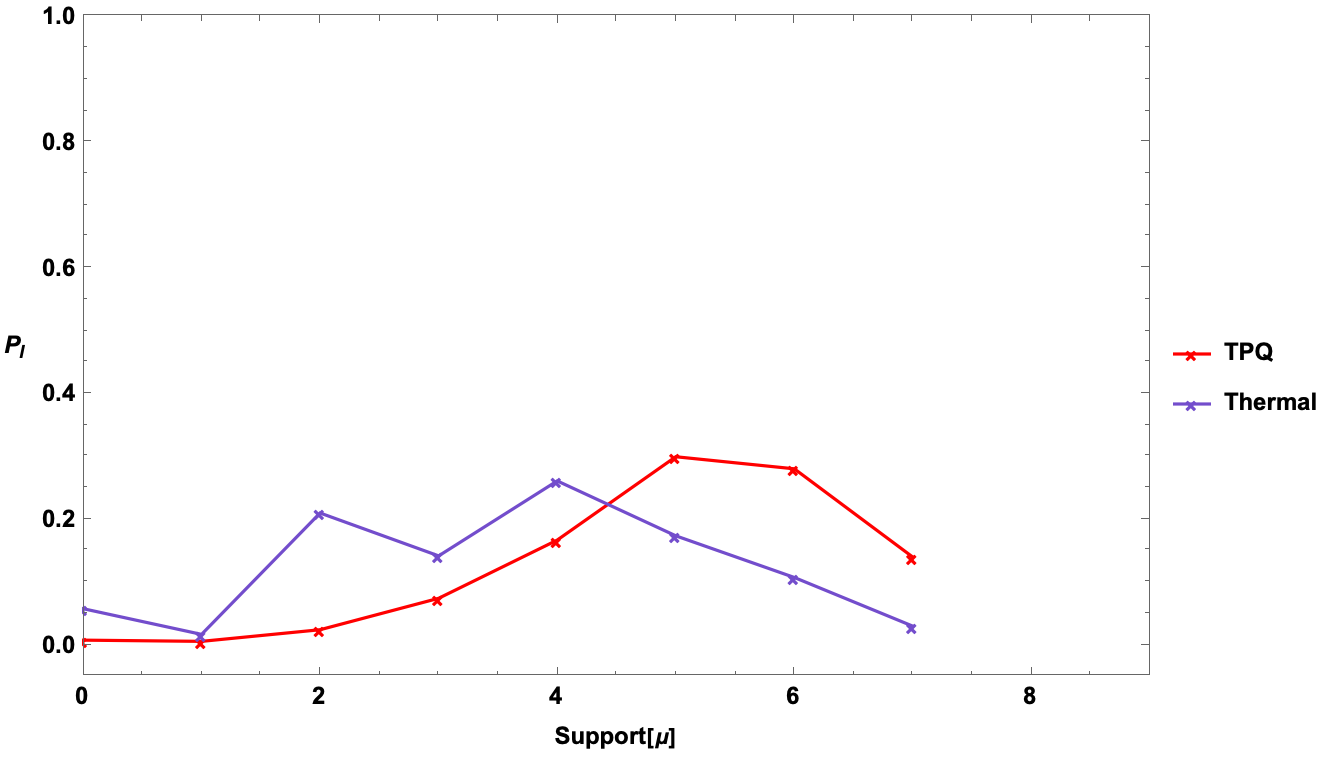}
    \caption{$g=1$}\label{}
  \end{subfigure}
  \caption{\footnotesize{For $N = 14$, $\beta=1$, $P_l = \sum_i \frac{\mathrm{Tr}(\rho \mu_i)^2}{2^{n}\mathrm{Tr}(\rho^2)}$ (summation is over all Majorana strings which have support on same number of qubits) quantifies the total probability distribution over Majorana strings of varying support for the \textbf{Mass deformed SYK$_4$} model. 
  }}
  \label{MDSYKprob2}
\end{figure}

\subsection{Multipartite non-local SRE}

Having obtained the SRE, we now proceed to study the time evolution of the multipartite non-local SRE defined in \cref{MNLDefin} for the mass-deformed SYK model . We observe that, for all values of $g \neq 1$, the  late time saturation value coincides with that of the SYK$_4$ model, while the time required to reach saturation exhibits a clear dependence on $g$ as depicted in \Cref{NLSREMDlongtime}. Interestingly, for a range of intermediate values of $g$, the multipartite non-local SRE shows an initial dip before rising and eventually saturating. Such behavior appears to be a non-trivial feature unique to the multipartite non-local SRE.

A notable feature in this mass-deformed setting is the behavior of the multipartite non-local SRE, whose sign and magnitude diagnose how magic is shared rather than merely how much magic exists.  When the non-local contribution becomes appreciable, it indicates that the system’s non-stabilizerness cannot be accounted for by a sum of few-body contributions: magic is stored in correlations that genuinely require many parties simultaneously. Moreover, negative values have a natural interpretation: the magic present in large subsystems is not additive under overlaps, and the dynamics generates a kind of frustration between different partitions. In the present interpolation, this provides a sharp way to see the crossover from chaotic to Gaussian behavior. In the strongly chaotic regime, global scrambling is strong,  allowing for efficient multipartite magic generation. As $g$ increases towards the integrable SYK$_2$ limit, the non-local component should weaken as the state becomes more describable in terms of local structure, ensuring a better additivity of the magic of the subsystems, therefore explaining the shift of multipartite non-local SRE towards $0$ as $g$ increases to $1$. Thus, the multipartite non-local SRE can be viewed as a quantitative scrambling proxy that is sensitive to the structure of magic, not just its amount.

\begin{figure}[H]
  \centering
  \begin{subfigure}{.38\linewidth}
    \includegraphics[width=\linewidth]{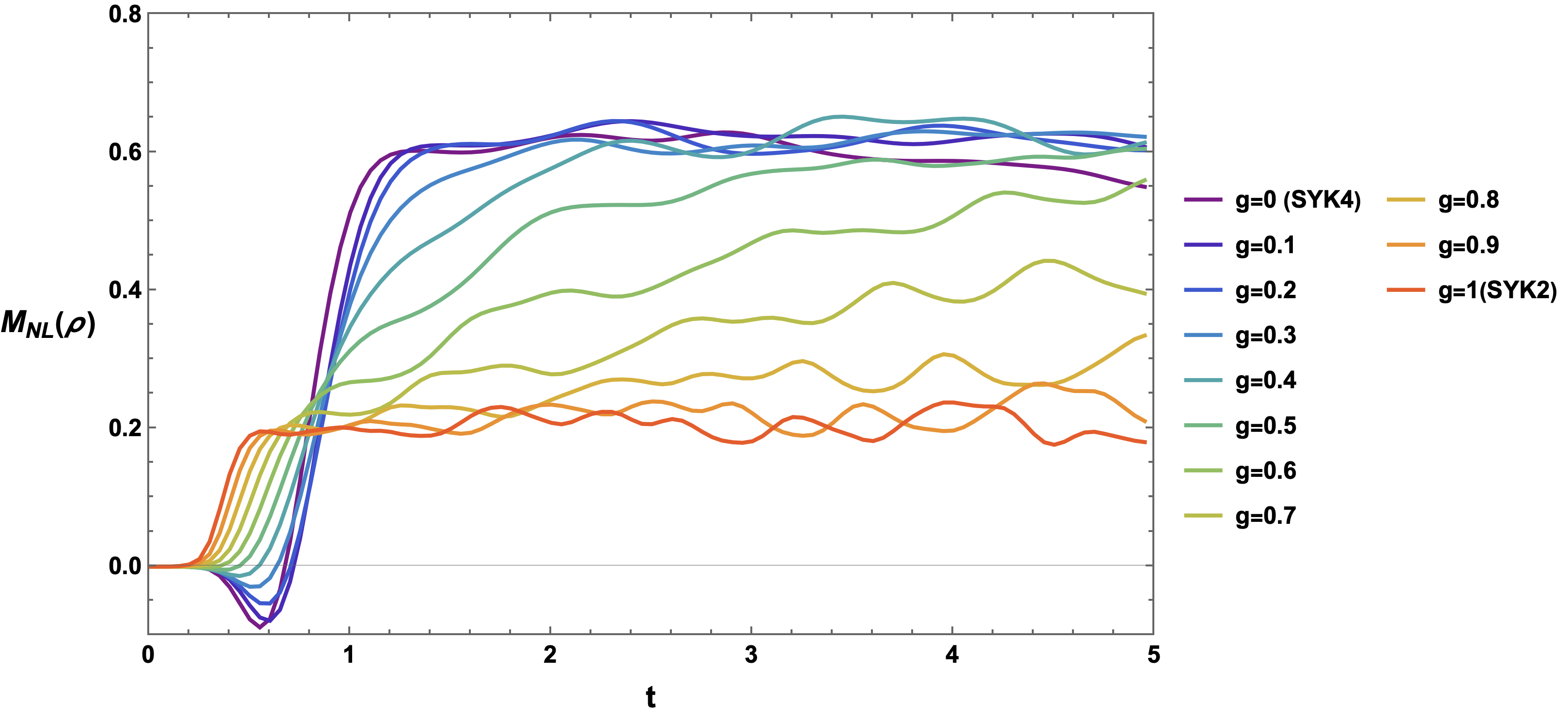}
    \caption{$N=12$ (6 qubits)}\label{SYKMD6q1GHZ00}
  \end{subfigure}
  \begin{subfigure}{.38\linewidth}
    \includegraphics[width=\linewidth]{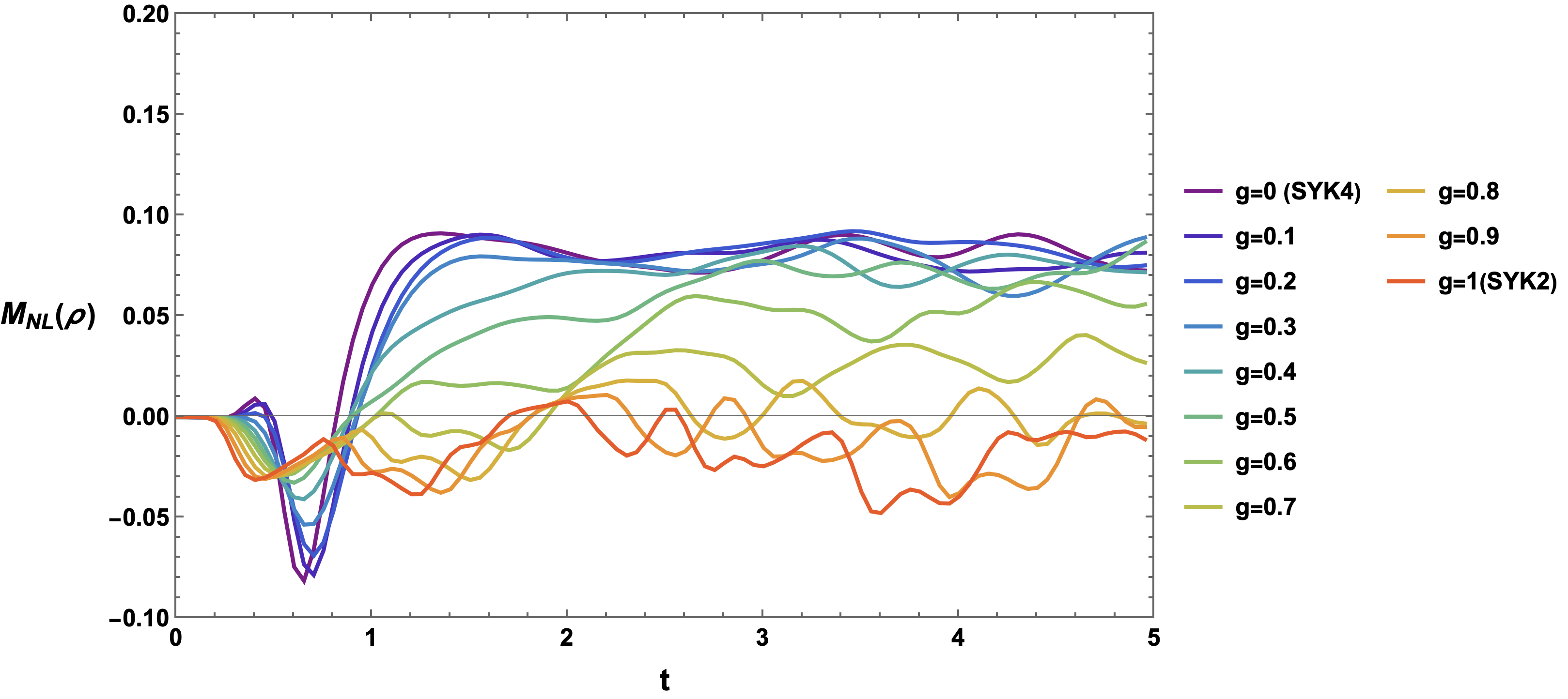}
    \caption{$N=14$ (7 qubits)}\label{SYKMD7q1GHZ00}
  \end{subfigure}

  \caption{\footnotesize{For (a) $N=12$ (b) $N=14$ and (c) $N=16$ early time behaviour of multipartite non-local Stabilizer Renyi entropy for mass deformed SYK with product state $\ket{000\cdots}$ as the initial state. }}
\end{figure}

\begin{figure}[H]
  \centering
   \begin{subfigure}{.38\linewidth}
    \includegraphics[width=\linewidth]{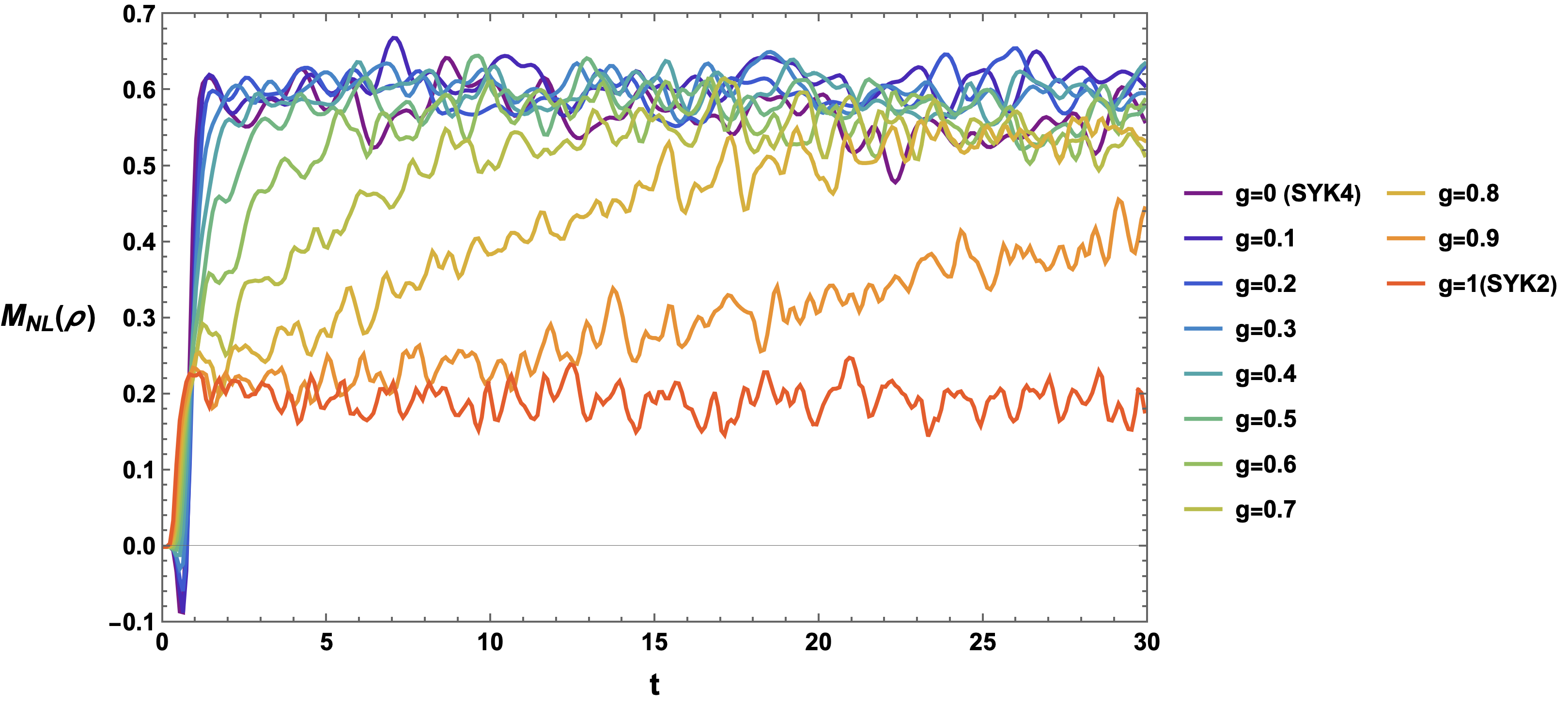}
    \caption{$N=12$ (6 qubits) }\label{SYK6q1Z}
  \end{subfigure}
  \begin{subfigure}{.38\linewidth}
\includegraphics[width=\linewidth]{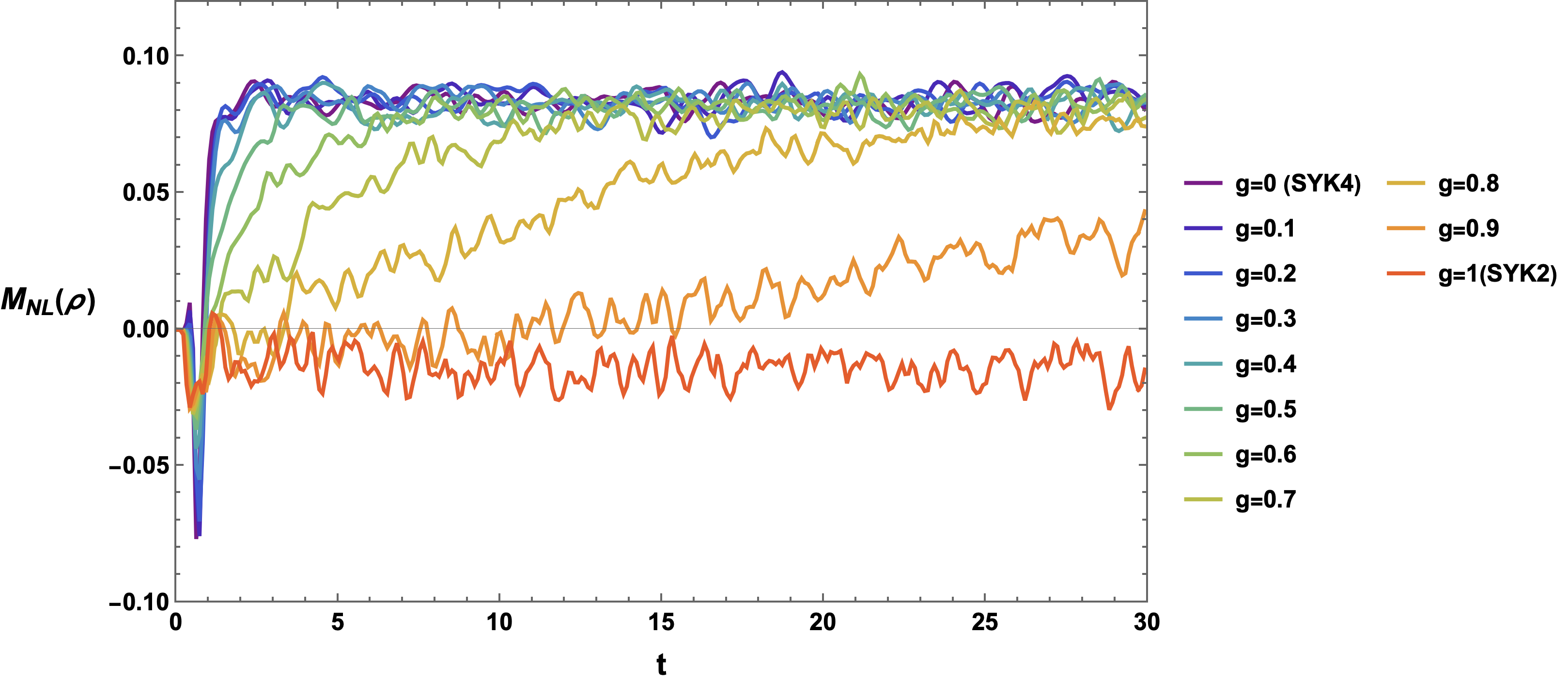}
    \caption{$N=14$ (7 qubits)}\label{SYK7q1Z}
  \end{subfigure}
  \caption{\footnotesize{For (a) $N=12$ (b)$N=14$  Long time behaviour of multipartite non-local Stabilizer Renyi entropy for mass deformed SYK with product state $\ket{000\cdots}$ as the initial state averaged over 100 samples.}}\label{NLSREMDlongtime}
\end{figure}
\subsection{SRE distribution of energy eigen states}
Having examined the effect of mass deformation on the time evolution of both the SRE and the multipartite SRE, we now turn to the distribution of SRE values across the energy eigenstates. We observe that the behavior is largely overlapping for different system sizes $N$ in the two limiting cases: the pure SYK$_4$ model and the purely quadratic SYK$_2$ (mass-term-only) model. For small but nonzero $g$, however, the distributions corresponding to different eigenstates become clearly separated, with each set taking values within a relatively flat band that bends only near the spectral edges. The degree of bending becomes more pronounced as $g$ increases.

\begin{figure}[H]
  \centering
  \begin{subfigure}{.3\linewidth}
    \includegraphics[height=3.5cm,width=\linewidth]{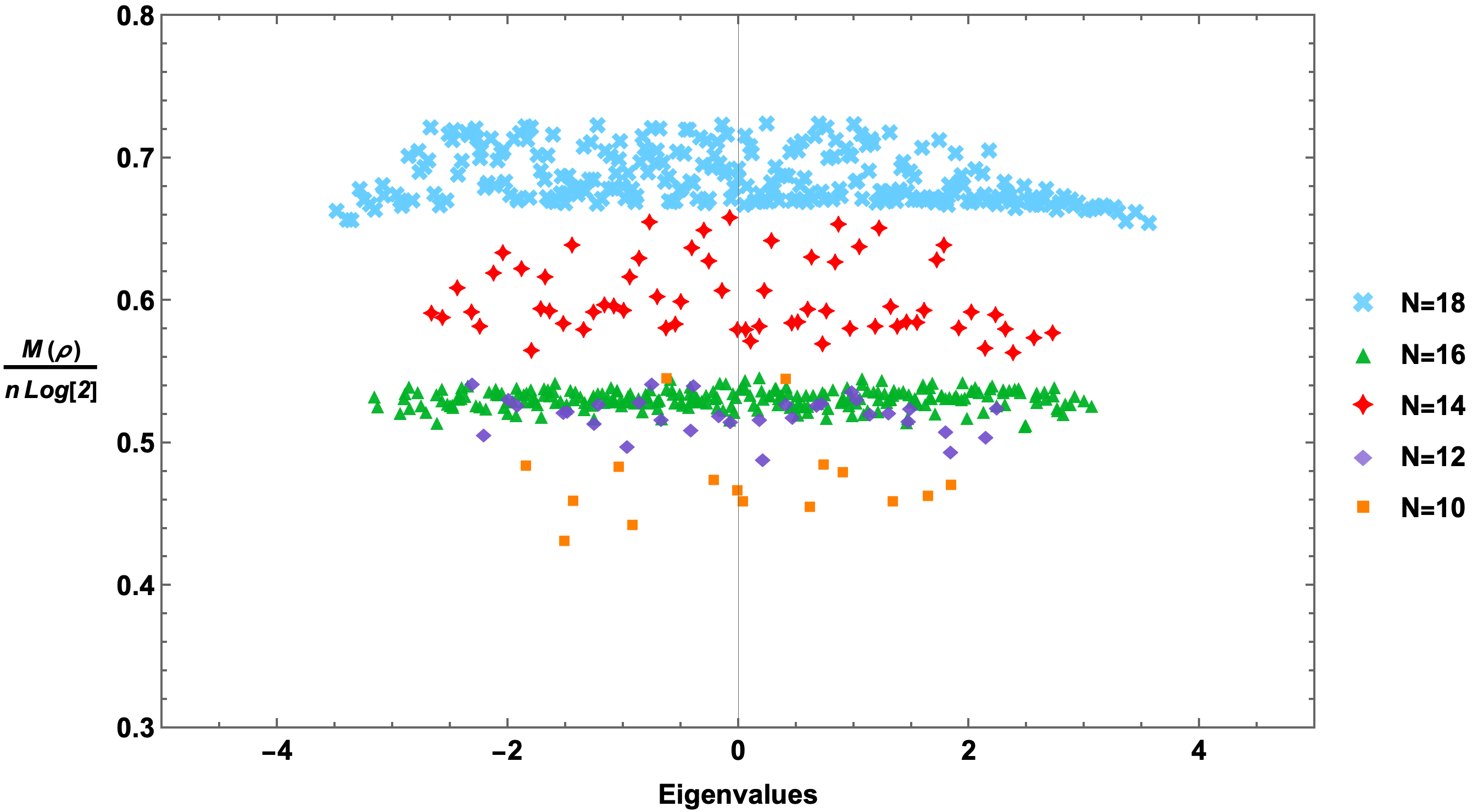}
    \caption{$g=0$}\label{}
  \end{subfigure}
  \begin{subfigure}{.3\linewidth}
    \includegraphics[height=3.5cm,width=\linewidth]{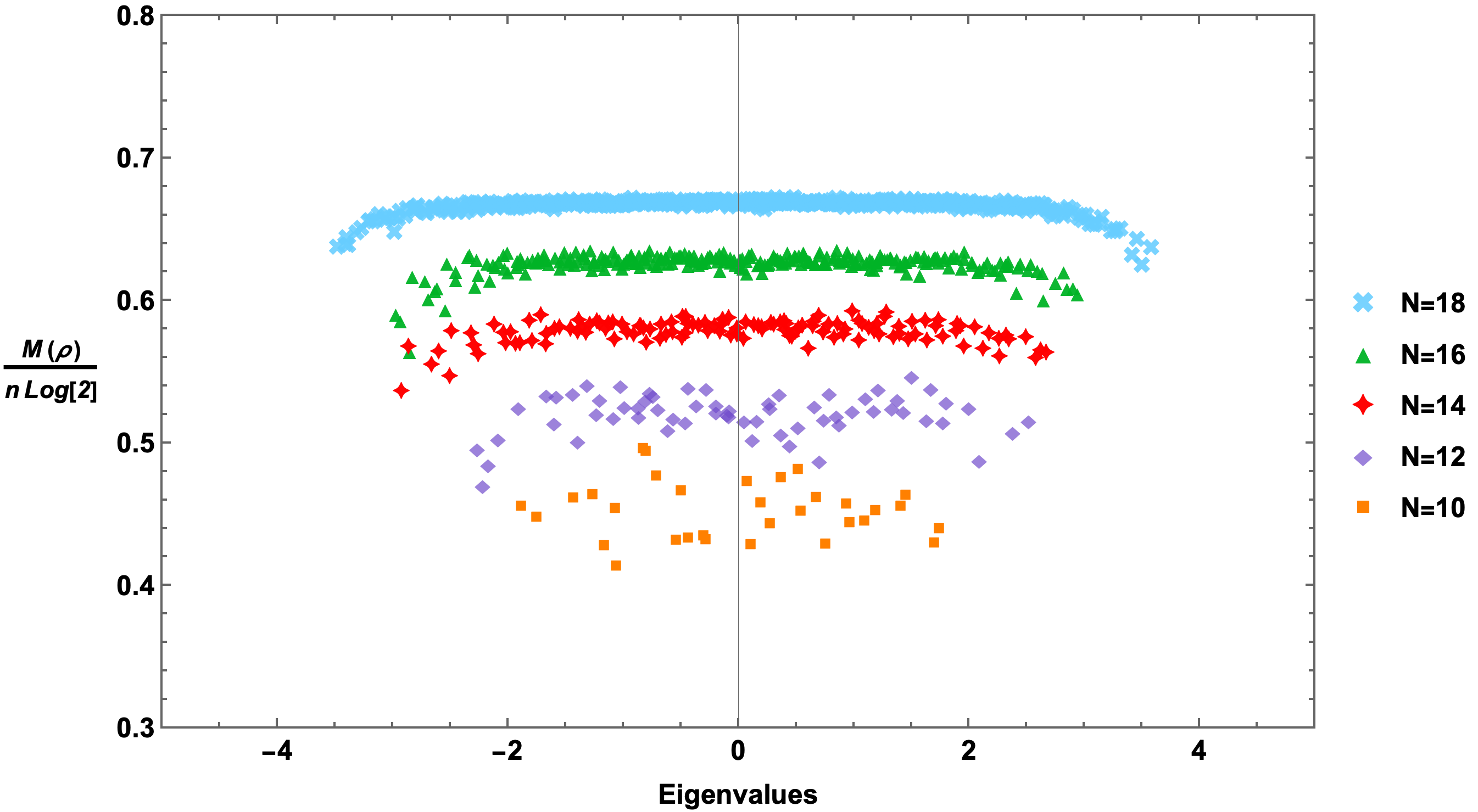}
    \caption{$g=0.25$}\label{}
  \end{subfigure}
  \begin{subfigure}{.3\linewidth}
\includegraphics[height=3.5cm,width=\linewidth]{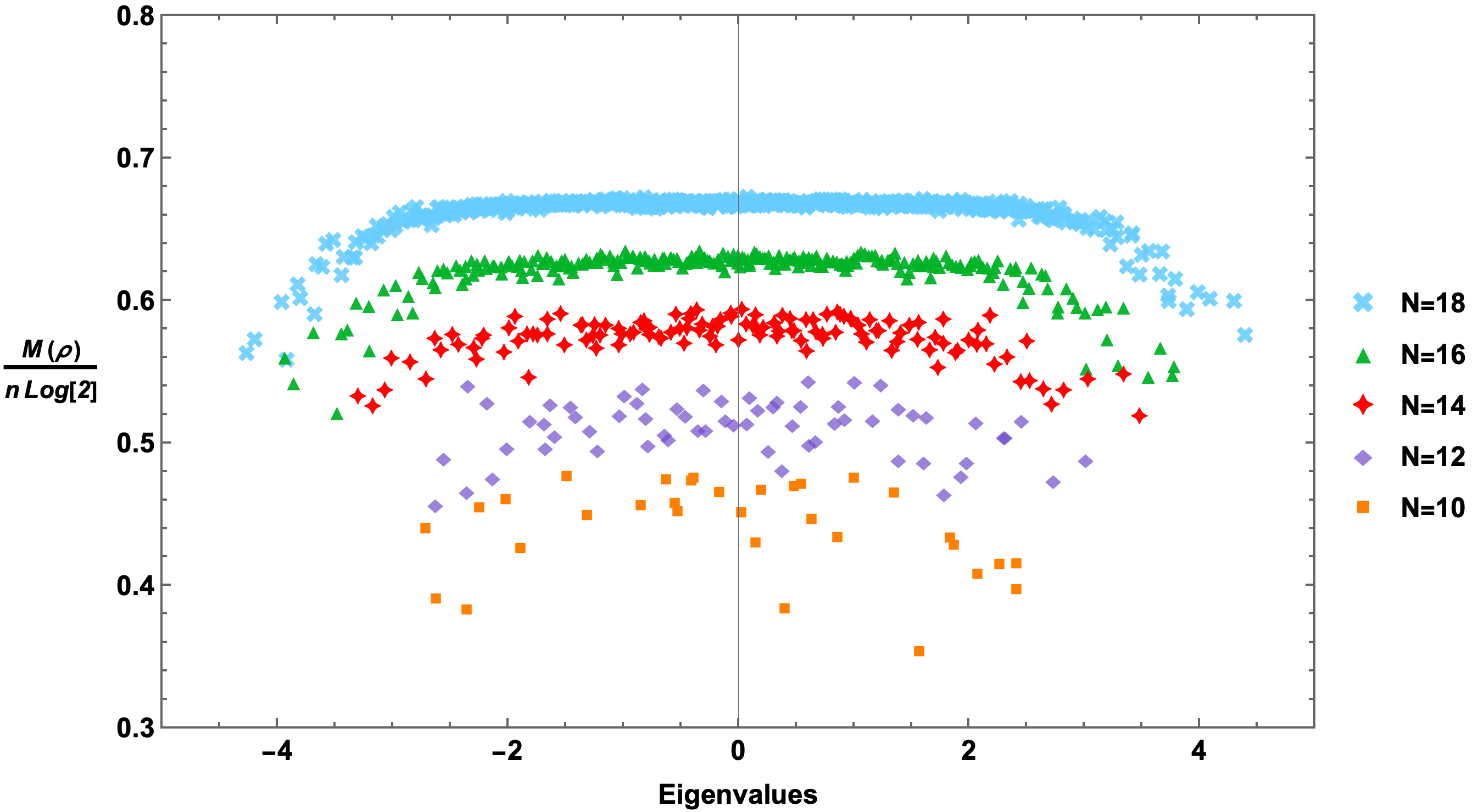}
    \caption{$g=0.5$}\label{}
  \end{subfigure}
  \begin{subfigure}{.3\linewidth}
\includegraphics[height=3.5cm,width=\linewidth]{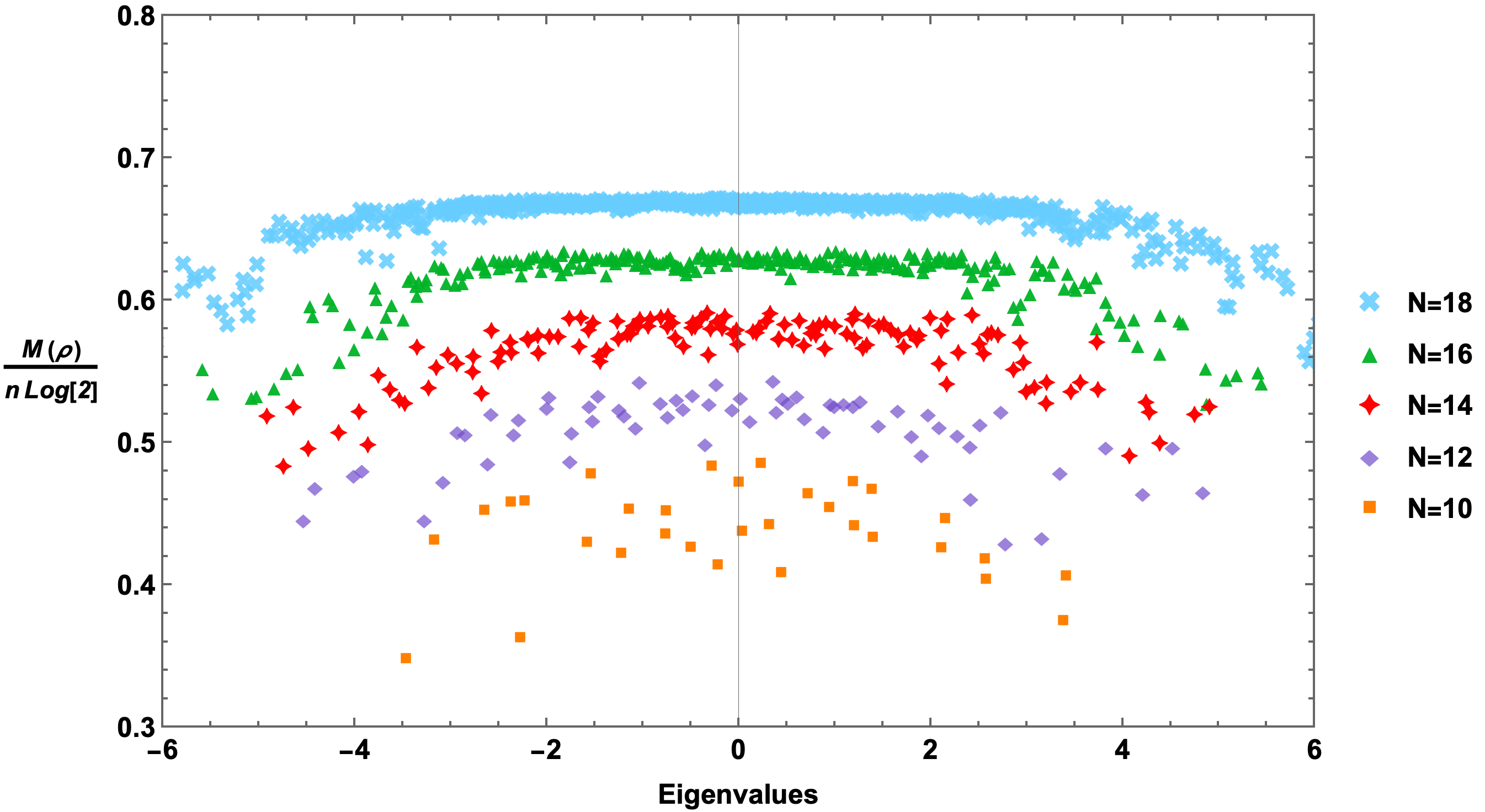}
    \caption{$g=0.75$}\label{}
  \end{subfigure}
   \begin{subfigure}{.3\linewidth}
\includegraphics[height=3.5cm,width=\linewidth]{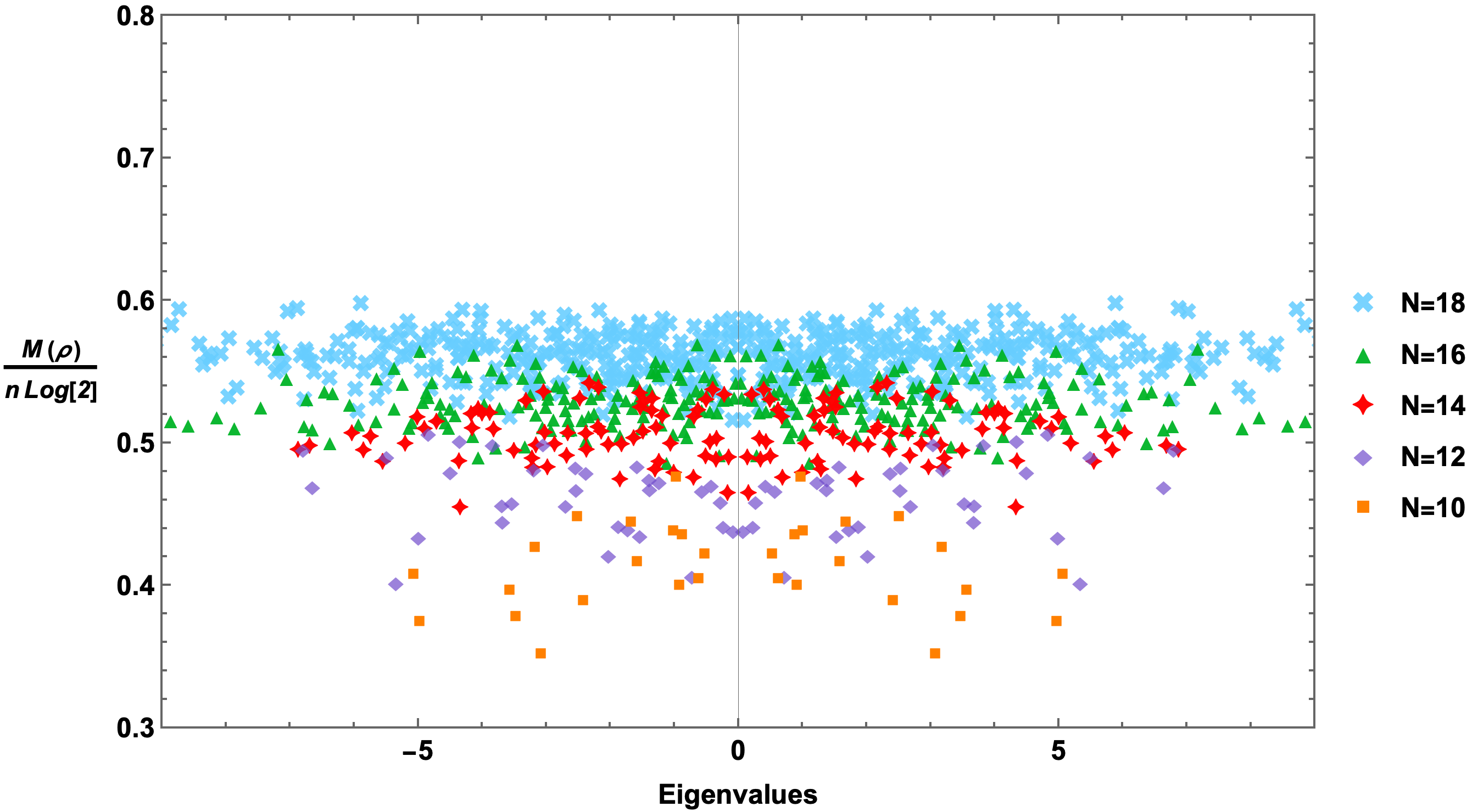}
    \caption{$g=1$}\label{}
  \end{subfigure}
  \caption{\footnotesize{For $N =10$-$18$, SRE distribution of energy eigenstates in the \textbf{Mass deformed SYK$_4$} model. 
  }}
  \label{SREeigen}
\end{figure}

For small and intermediate deformation $g$, the SRE of individual energy eigenstates exhibits a neat bulk plateau. Indeed, the eigenstates cluster around a value that is mostly independent of the energy, and whose fluctuations decrease with $N$. This is the expected ETH-like behavior for a chaotic many-body system for which eigenstates in the bulk of the spectrum behave as ``typical'' vectors, so basis/frame-dependent measures such as the SRE become self-averaging and nearly universal. The decrease near the edges of the spectrum can be roughly understood as follows. Low energy and near extremal states are more structured (more constrained by approximate quasi-particle descriptions), and therefore explore a smaller portion of Hilbert space in terms of our stabilizer basis, therefore leading to reduced non-stabilizerness.
Let us now turn to the effect of the mass deformation. As $g$ increases toward the SYK$_2$ limit, the SRE plateau lowers and the separation between different $N$ becomes less important, culminating at $g=1$ where the scatter is larger and the overall level is largely reduced. This is consistent with the interpretation that the mass term introduces an increasingly integrable, quadratic contribution, and making the eigenstates become closer to a fermionic Gaussian structure, which is intuitively less efficient at generating non-stabilizerness than pure strongly interacting SYK$_4$ eigenstates. The mass deformation interpolates between a regime where eigenstates are highly spread in the stabilizer basis (high SRE) and a regime where integrability  dominates (lower SRE).

\subsection{Multipartite non-local SRE distribution of energy eigen states}
We now investigate the distribution of the multipartite non-local SRE across the energy eigenstates. For $g = 1$, we find a strong overlap of the distributions for different system sizes $N$, similar to the behavior observed earlier. However, unlike the previous case, a noticeable amount of overlap persists even for other values of $g$. In addition, we observe that the absolute value of the multipartite non-local SRE is consistently higher near the center of the spectral distribution for any intermediate value of $g$, i.e., for $g$ away from both $g = 0$ and $g = 1$.

\begin{figure}[H]
  \centering
  \begin{subfigure}{.3\linewidth}
    \includegraphics[height=3.5cm,width=\linewidth]{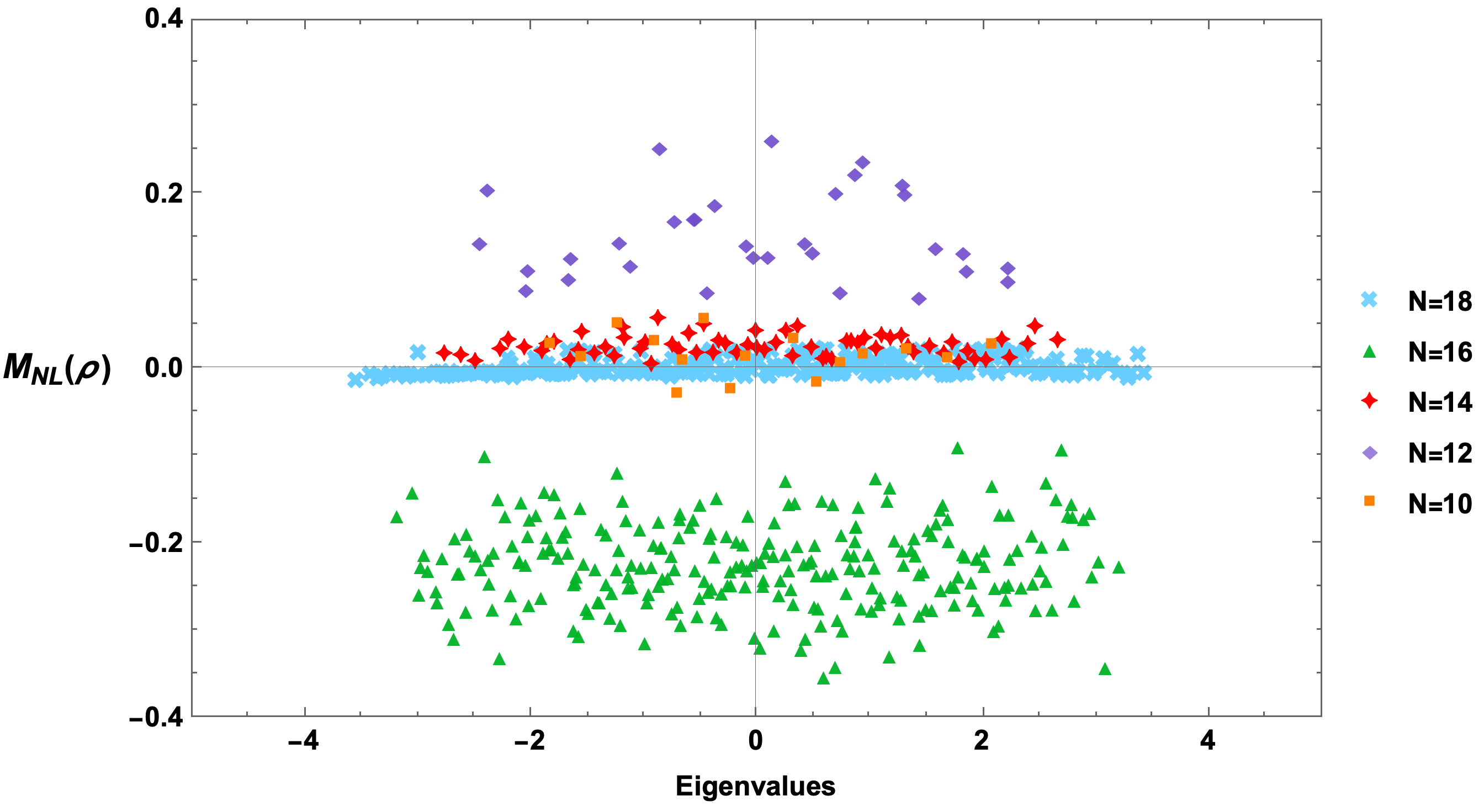}
    \caption{$g=0$}\label{}
  \end{subfigure}
  \begin{subfigure}{.3\linewidth}
    \includegraphics[height=3.5cm,width=\linewidth]{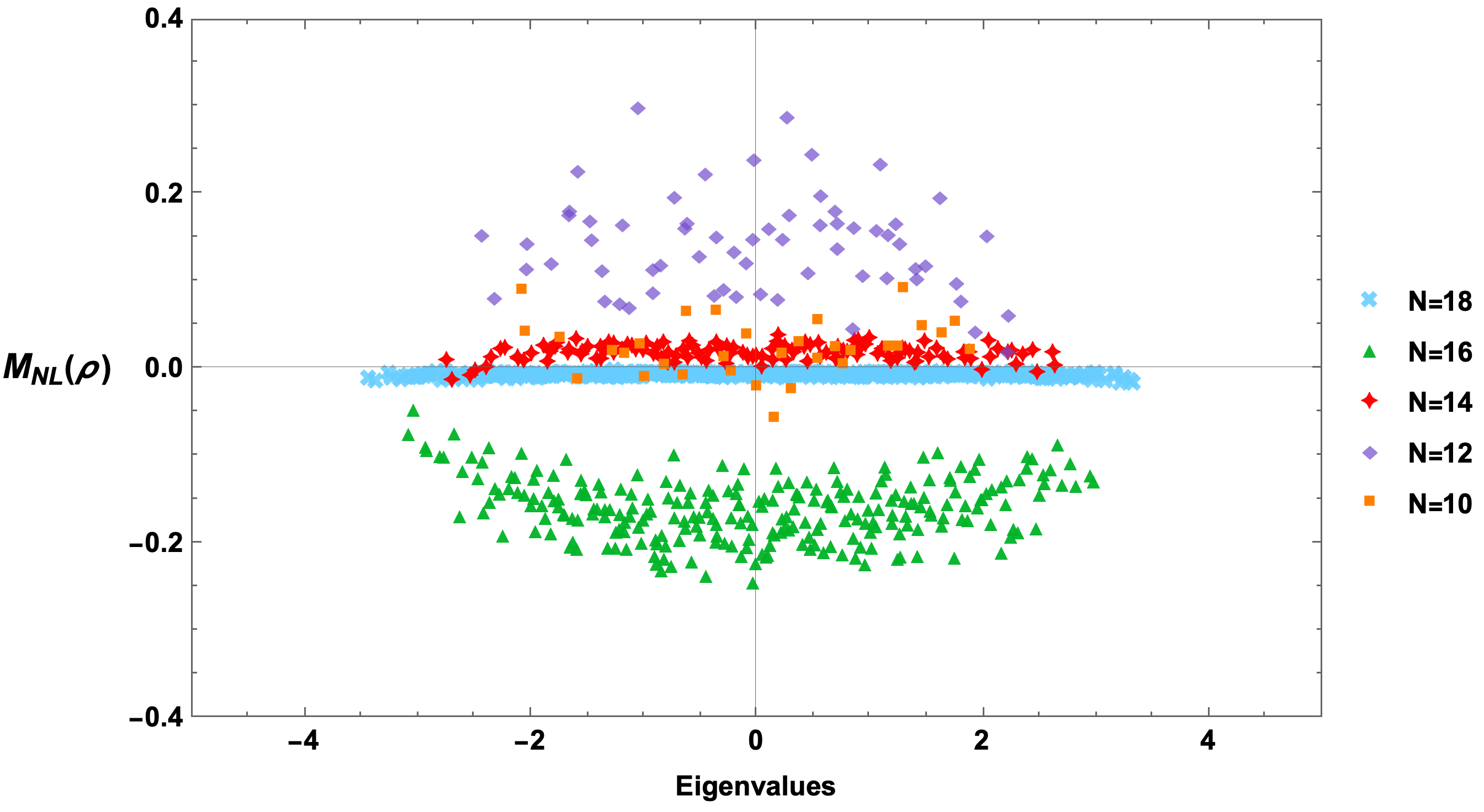}
    \caption{$g=0.25$}\label{}
  \end{subfigure}
  \begin{subfigure}{.3\linewidth}
\includegraphics[height=3.5cm,width=\linewidth]{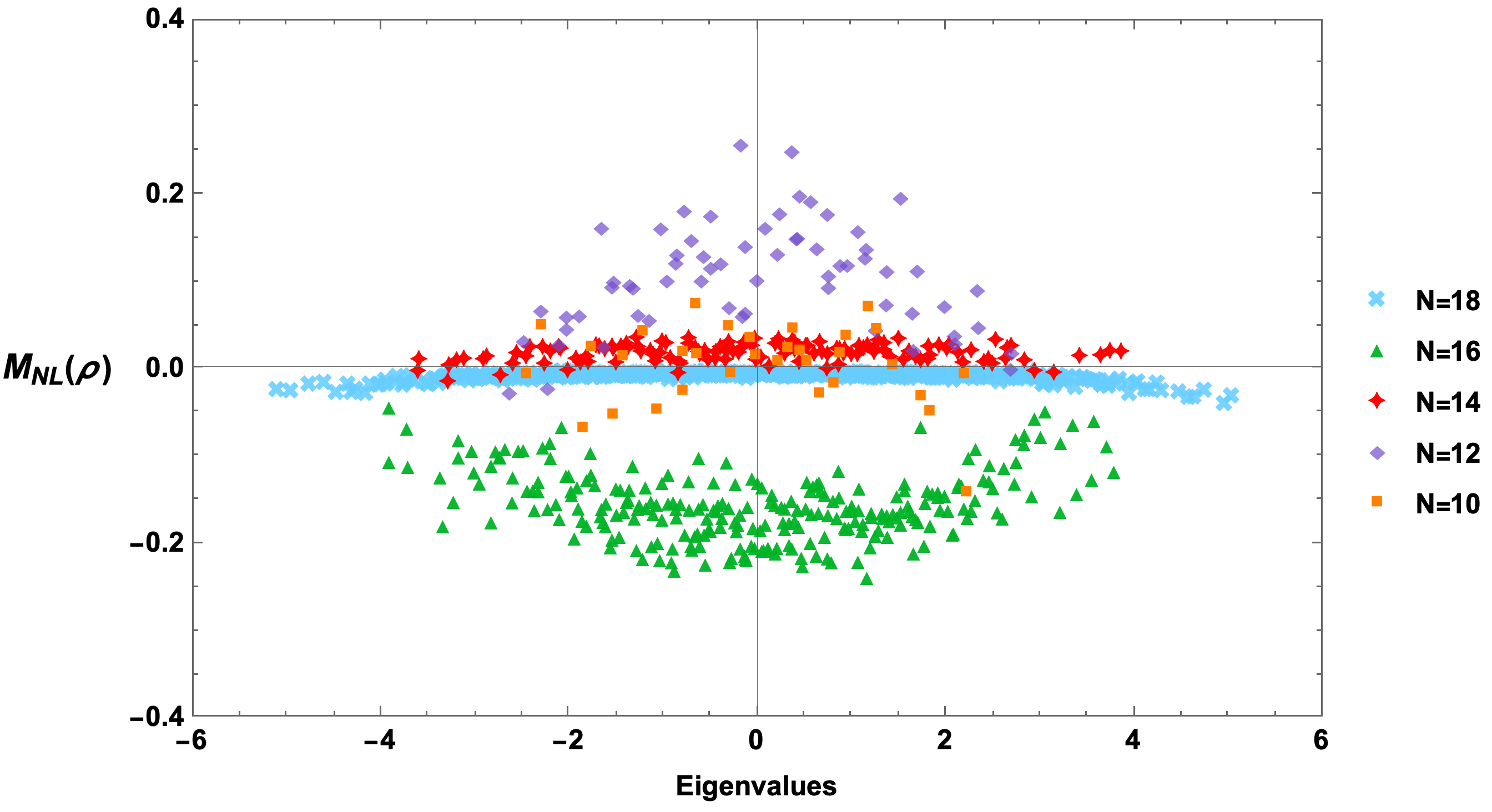}
    \caption{$g=0.5$}\label{}
  \end{subfigure}
  \begin{subfigure}{.3\linewidth}
\includegraphics[height=3.5cm,width=\linewidth]{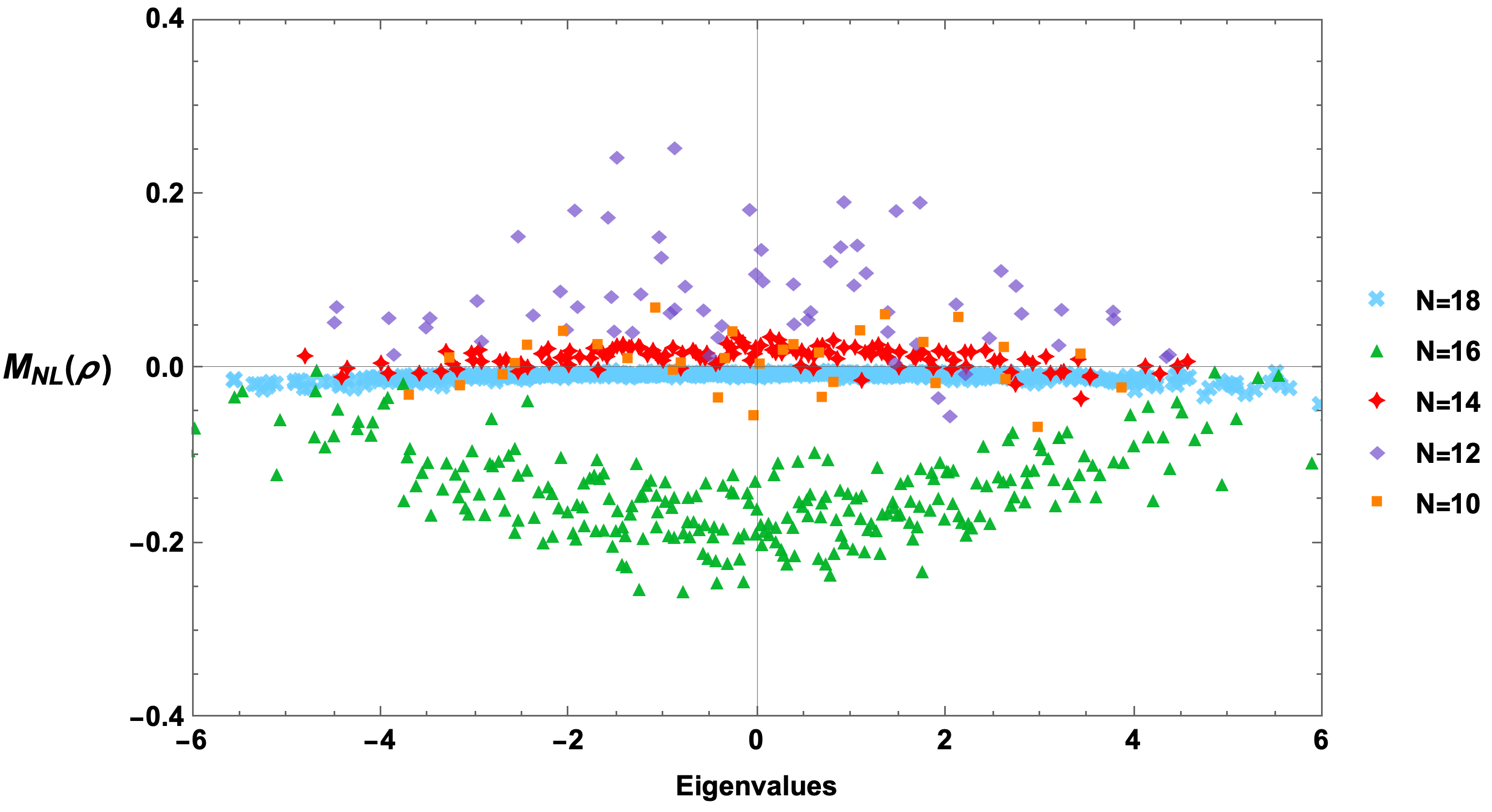}
    \caption{$g=0.75$}\label{}
  \end{subfigure}
   \begin{subfigure}{.3\linewidth}
\includegraphics[height=3.5cm,width=\linewidth]{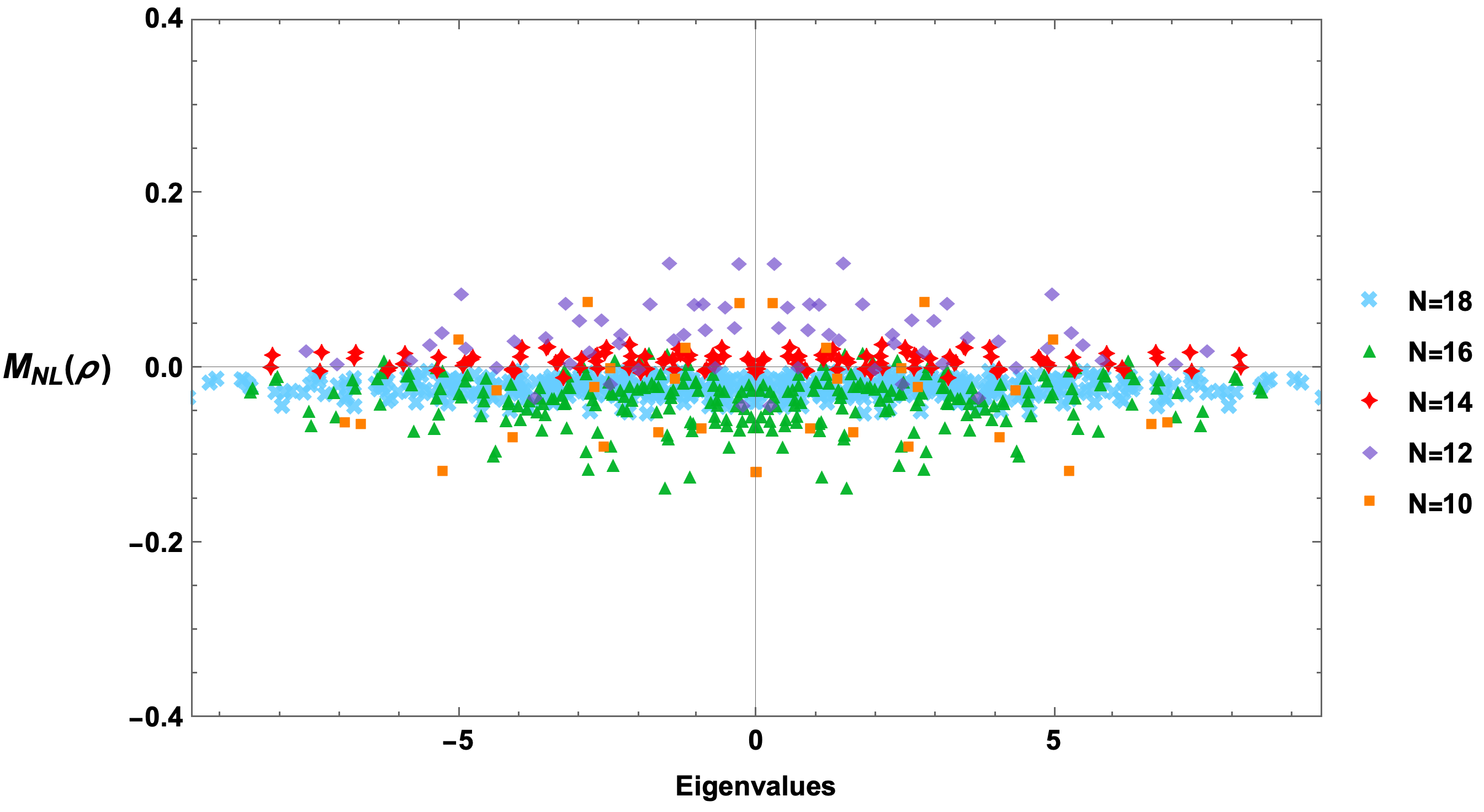}
    \caption{$g=1$}\label{}
  \end{subfigure}
  \caption{\footnotesize{For $N =10$-$18$, multipartite non-local SRE distribution of energy eigenstates in the \textbf{Mass deformed SYK$_4$} model. 
  }}
  \label{NLSREeigen}
\end{figure}

Fig.~\ref{NLSREeigen} does not show a simple monotone shift of $M_{\mathrm{NL}}(\rho)$ toward $0$ as $g\to 1$ for all $N$.  The clearest trend is instead a flattening of the spread and a weakening of the energy dependence as one approaches the quadratic limit. For the largest size shown ($N=18$), the points already lie very close to $M_{\mathrm{NL}}=0$ for every $g$, suggesting that this  quantity becomes strongly self-averaging at large $N$.  At smaller $N$ there is a much more visible residue, reflecting finite-size sensitivity in how magic is distributed across subsystems. As $g\to 1$ these finite-size effects are reduced and the data for different $N$ converge near $0$, consistent again with the quadratic (more structured) limit producing a more nearly additive sharing pattern across subsystems.

\section{Sparse SYK}\label{sec 5}
Another interesting variant of the SYK model where the degree of chaos can be tuned is the \emph{sparse} SYK$_4$ model, in which the random couplings $J_{ijkl}$ are taken to be nonzero only with probability $p$ \cite{Xu:2020shn,Orman:2024mpw}. In this construction the model reduces to the fully connected, maximally chaotic SYK$_4$ at $p=1$, while decreasing $p$ progressively suppresses interactions and weakens chaotic behavior. Another way to introduce sparseness is by randomly choosing a fixed number of non-zero couplings among the $J_{ijkl}$ coupling which are chosen from Gaussian distribution and setting the rest of the coupling to zero.  We examine SRE and its multipartite extension in such sparse $SYK$ models below.

\subsection{SRE}
Below, in \Cref{Sparse 1} and \Cref{Sparse 2} we depict the plots for  the stabilizer R\'enyi entropy  for states obtained through the time evolution of such a sparse SYK$_4$ model. The plots highlight the transition of the late-time saturation value as the parameter controlling the sparseness in the SYK$_4$ model is varied. More specifically, we observe that the greater the sparseness (corresponding to smaller values of $p$ or $n_s$), the lower the resulting saturation value.

\begin{figure}[H]
  \centering
  \begin{subfigure}{.3\linewidth}
    \includegraphics[height=3.5cm,width=\linewidth]{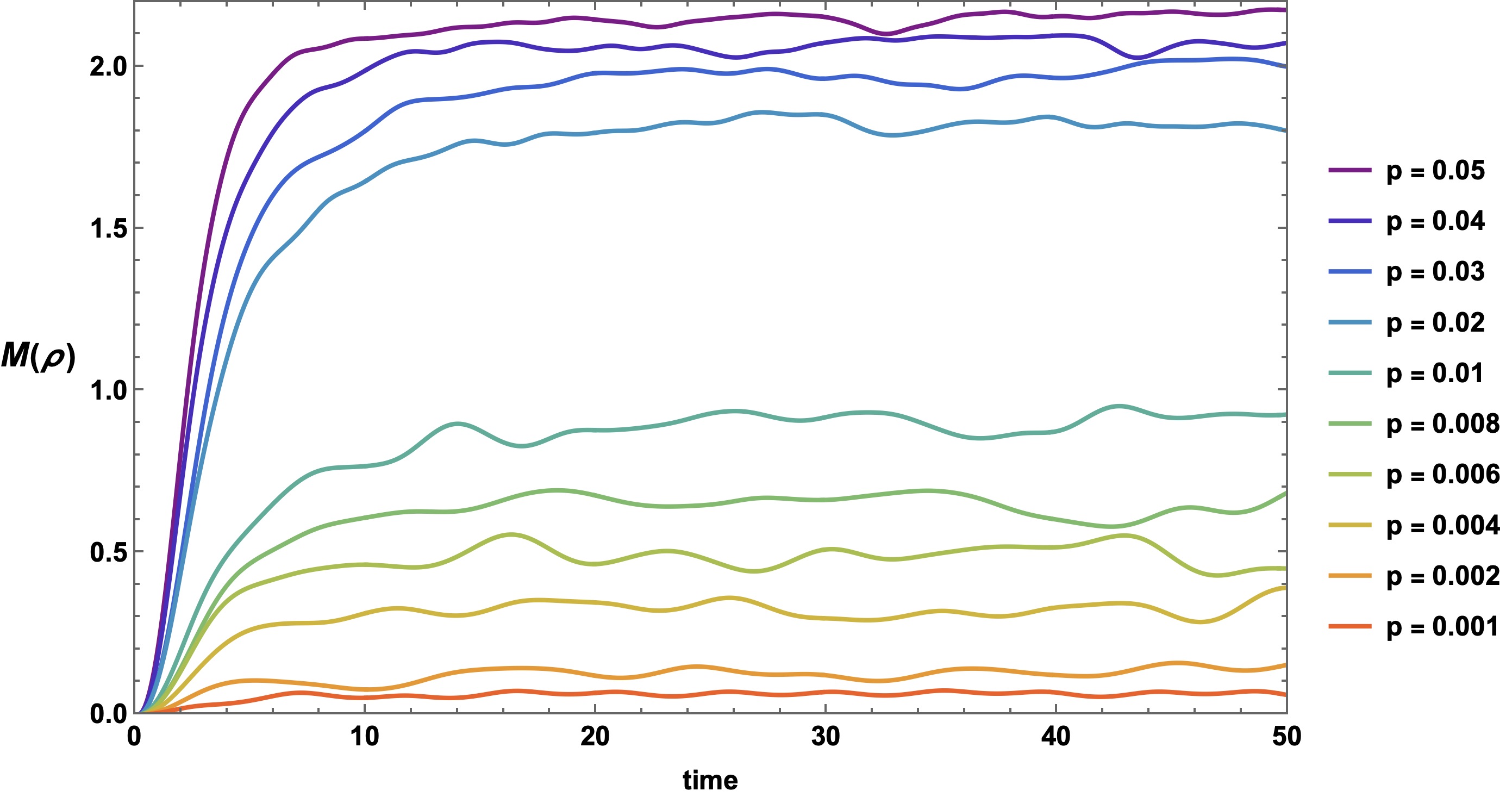}
    \caption{6 Qubits}\label{SP2SYK6q1GHZ}
  \end{subfigure}
  \begin{subfigure}{.3\linewidth}
    \includegraphics[height=3.5cm,width=\linewidth]{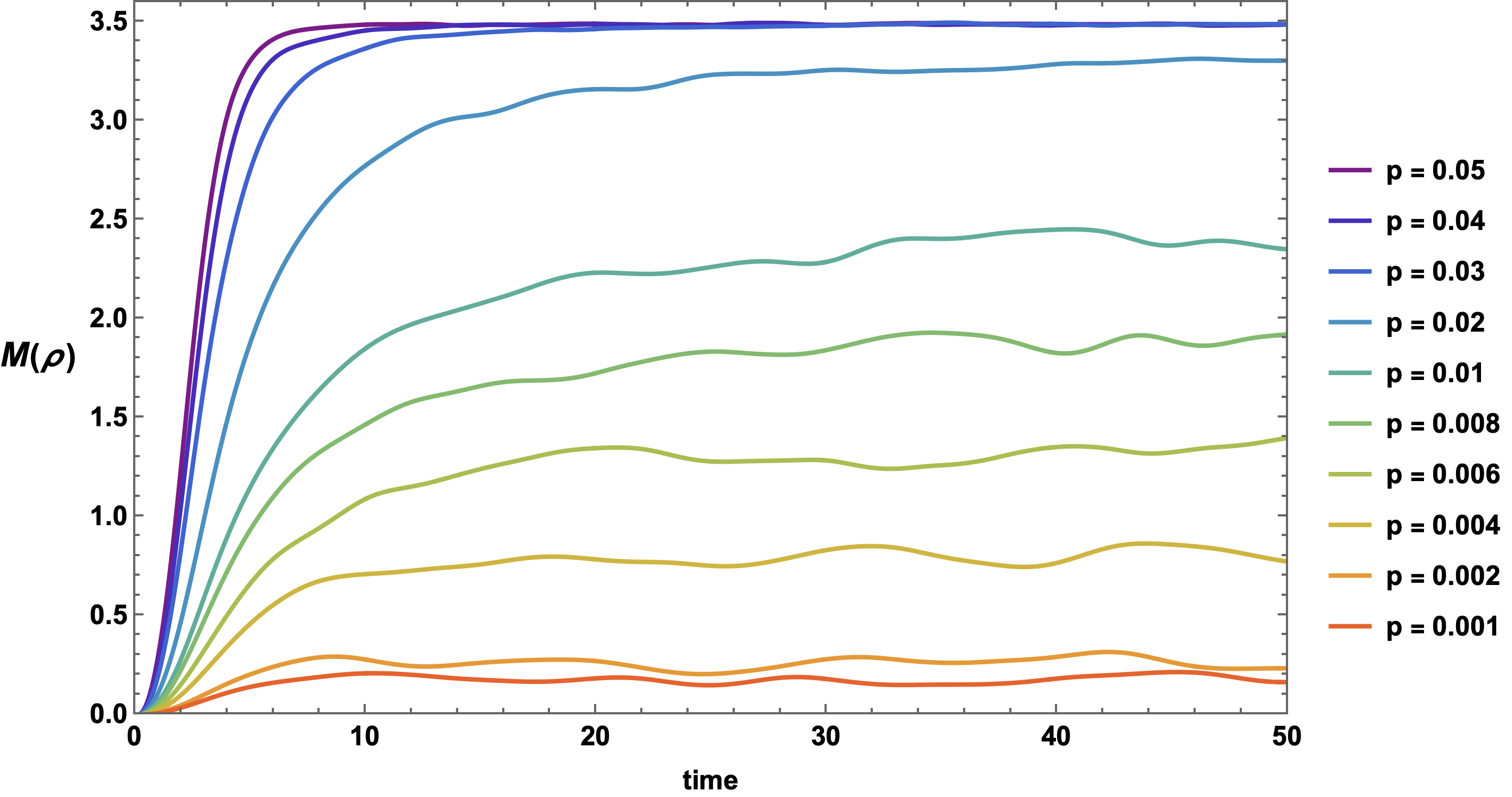}
    \caption{7 Qubits}\label{SP2SYK7q1GHZ}
  \end{subfigure}
  \begin{subfigure}{.3\linewidth}
    \includegraphics[height=3.5cm,width=\linewidth]{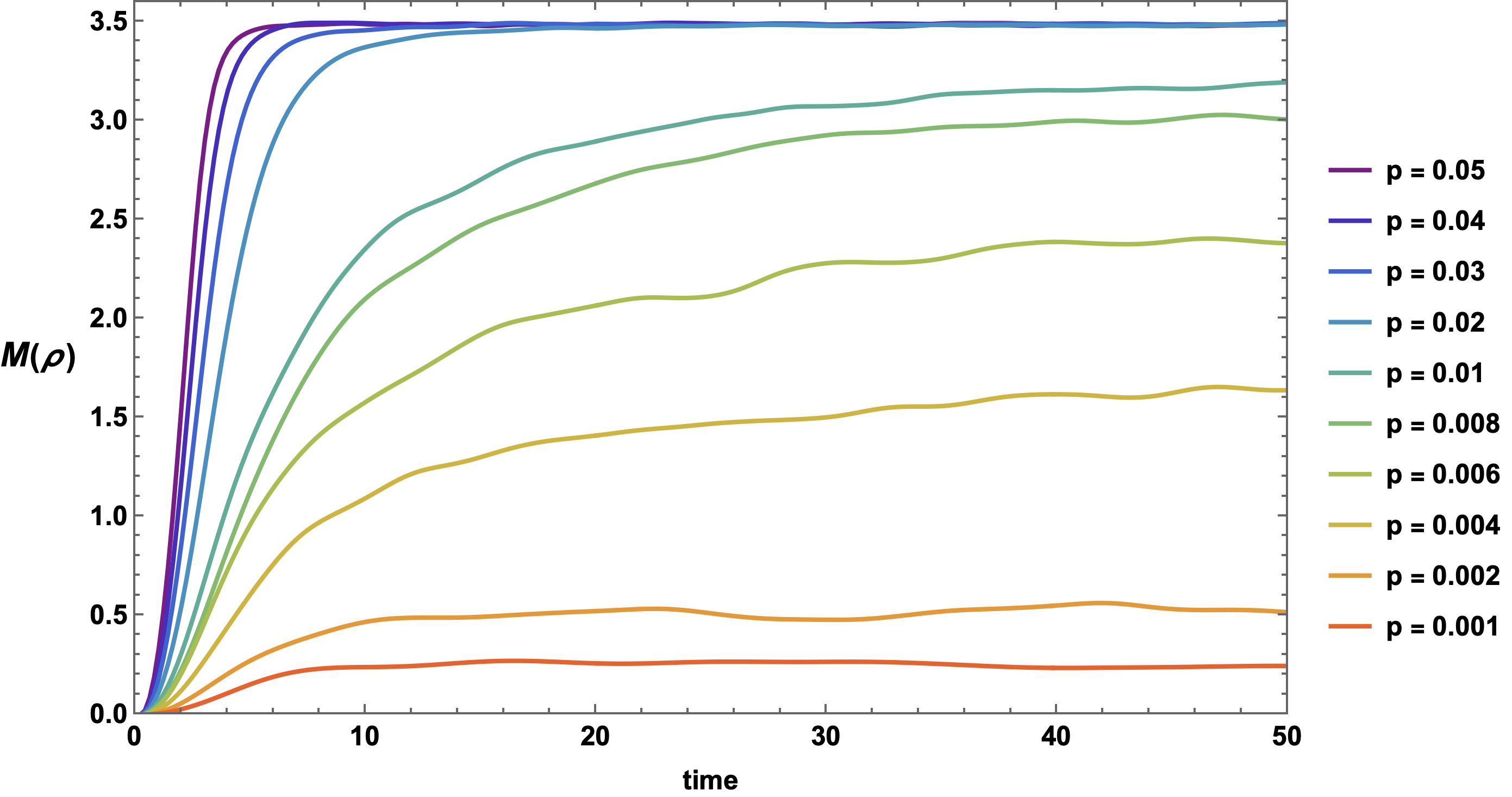}
    \caption{8 Qubits}\label{SP2SYK8q1GHZ}
  \end{subfigure}
  \caption{\footnotesize{For panels (a) $N=12$ (6 qubits), (b) $N=14$ (7 qubits) (C) $N=16$ (8 qubits) show the time evolution of the stabilizer Rényi entropy (SRE) with $\ket{GHZ_N}$ as the initial state averaged over 25 samples. }}\label{Sparse 1}
\end{figure}

\begin{figure}[H]
  \centering
  \begin{subfigure}{.3\linewidth}
    \includegraphics[height=3.5cm,width=\linewidth]{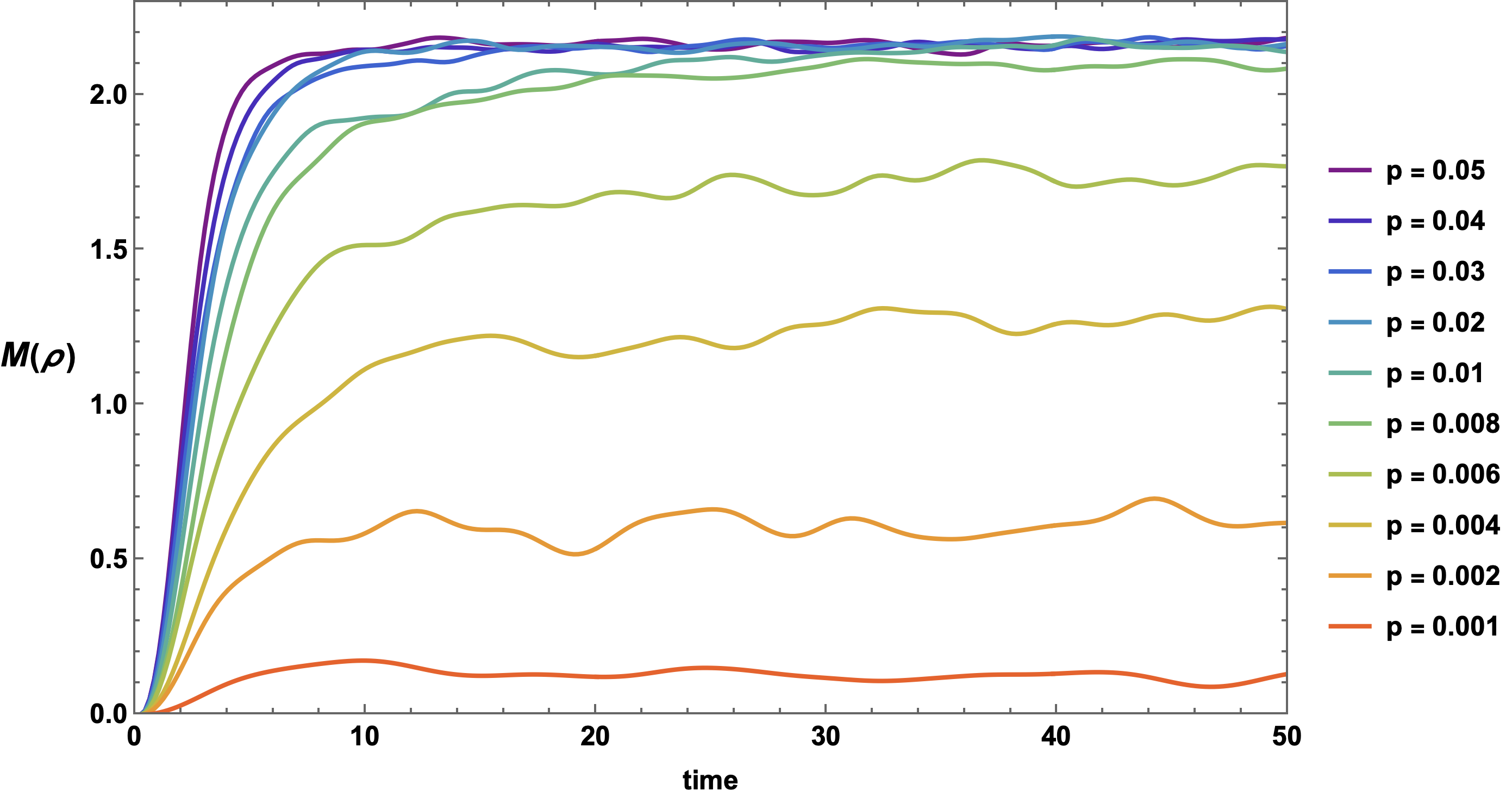}
    \caption{6 Qubits}\label{SP3SYK6q1GHZ}
  \end{subfigure}
  \begin{subfigure}{.3\linewidth}
    \includegraphics[height=3.5cm,width=\linewidth]{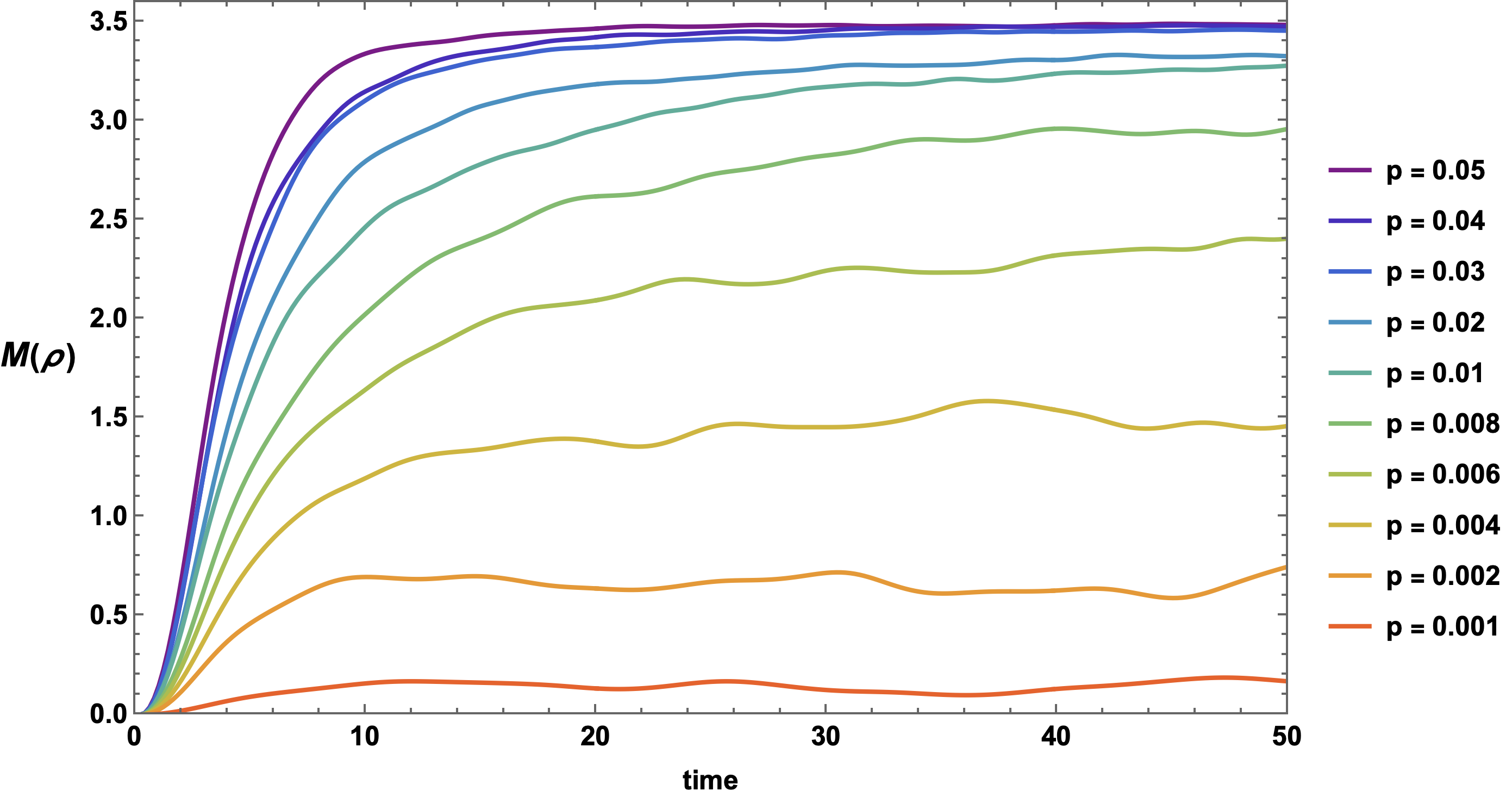}
    \caption{7 Qubits}\label{SP3SYK7q1GHZ}
  \end{subfigure}
  \begin{subfigure}{.3\linewidth}
    \includegraphics[height=3.5cm,width=\linewidth]{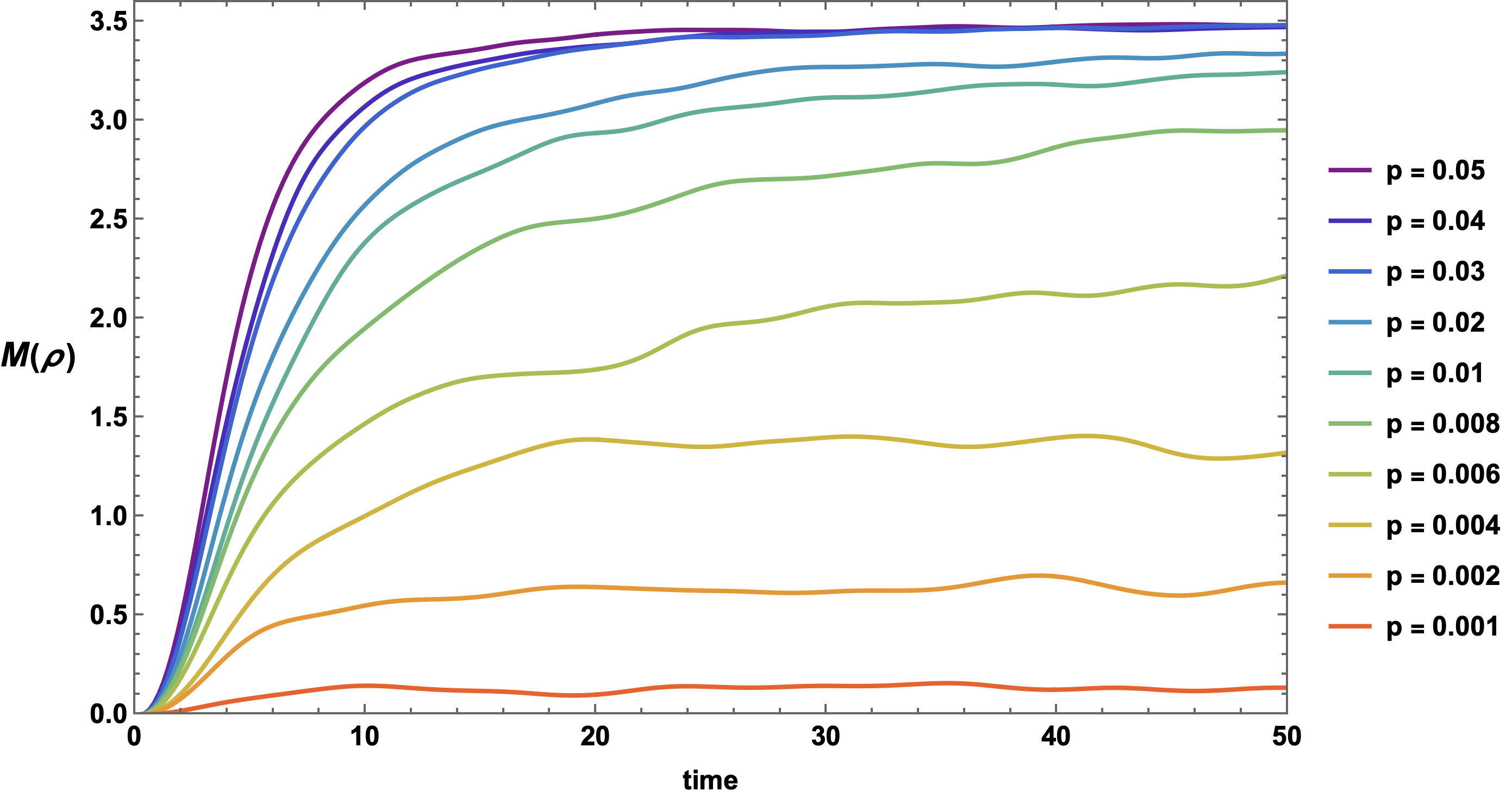}
    \caption{8 Qubits}\label{SP3SYK8q1GHZ}
  \end{subfigure}
  \caption{\footnotesize{For panels (a) $N=12$ (6 qubits), (b) $N=14$ (7 qubits) (C) $N=16$ (8 qubits) show the time evolution of the stabilizer Rényi entropy (SRE) with $\ket{GHZ_N}$ as the initial state averaged over 25 samples. Here the sparseness is controlled by  $n_s$ whose list is displayed on the right. These numbers denote the nonzero couplings that are randomly sampled among the $J_{ijkl}$ couplings obtained from the Gaussian distribution. }}\label{Sparse 2}
\end{figure}

A useful way to interpret this trend is in terms of operator growth. Full SYK corresponds to a complete all-to-all graph of $4$-body couplings, which triggers rapid mixing between many distinct Majorana strings and drives the state toward a highly typical, high-magic steady regime. Sparsification reduces the number of available scattering options. Operator growth becomes constrained to propagate along a much more restricted graph, which reduces the effective portion of Hilbert space explored at finite time scales. 
therefore, sparsity suppresses magic for the same reason it suppresses fast scrambling, namely we observe a less thorough delocalization of the wavefunction in the fermionic Pauli/Majorana basis.

Quite interestingly, the long-time behavior of the SRE differs markedly between the mass-deformed and sparse SYK models. In the mass-deformed case, we observed that for any $g\neq 1$ the SRE eventually saturates to the same value as in the pure SYK$_4$ model, with only the saturation time depending on the deformation parameter. This may be understood from the fact that the quartic SYK$_4$ interaction remains present for all $g\neq 1$, so although the quadratic term modifies the early-time dynamics, the late-time operator content is still governed by the full SYK$_4$ Hamiltonian.
In contrast, sparsity alters the Hamiltonian in a fundamentally different manner: by reducing the number of interaction terms, it limits the available operator growth channels. Consequently, the saturation value of the SRE decreases with increasing sparseness and no longer approaches the SYK$_4$ value. Thus, mass deformation changes primarily the timescale of magic generation, whereas sparsity changes the capacity of the model to generate magic.

\subsection{Multipartite non-local SRE}

We also evaluate the multipartite SRE, with the corresponding plots shown in \Cref{NLSRESparse}. As in the mass-deformed $SYK$ model, we once again observe an initial dip followed by a rise and eventual saturation. Interestingly, both the onset time of the dip and the saturation time exhibit a clear monotonic dependence on the parameter $p$. In particular, as $p$ increases (i.e., as the ensemble becomes less sparse), the dip occurs later and the system takes longer to reach its saturation value. This behavior highlights the sensitivity of multipartite non-local correlations to the degree of sparseness and shows that increasing $p$ effectively slows down the approach to the steady-state regime.

\begin{figure}[H]
  \centering
  \begin{subfigure}{.3\linewidth}
    \includegraphics[height=3.5cm,width=\linewidth]{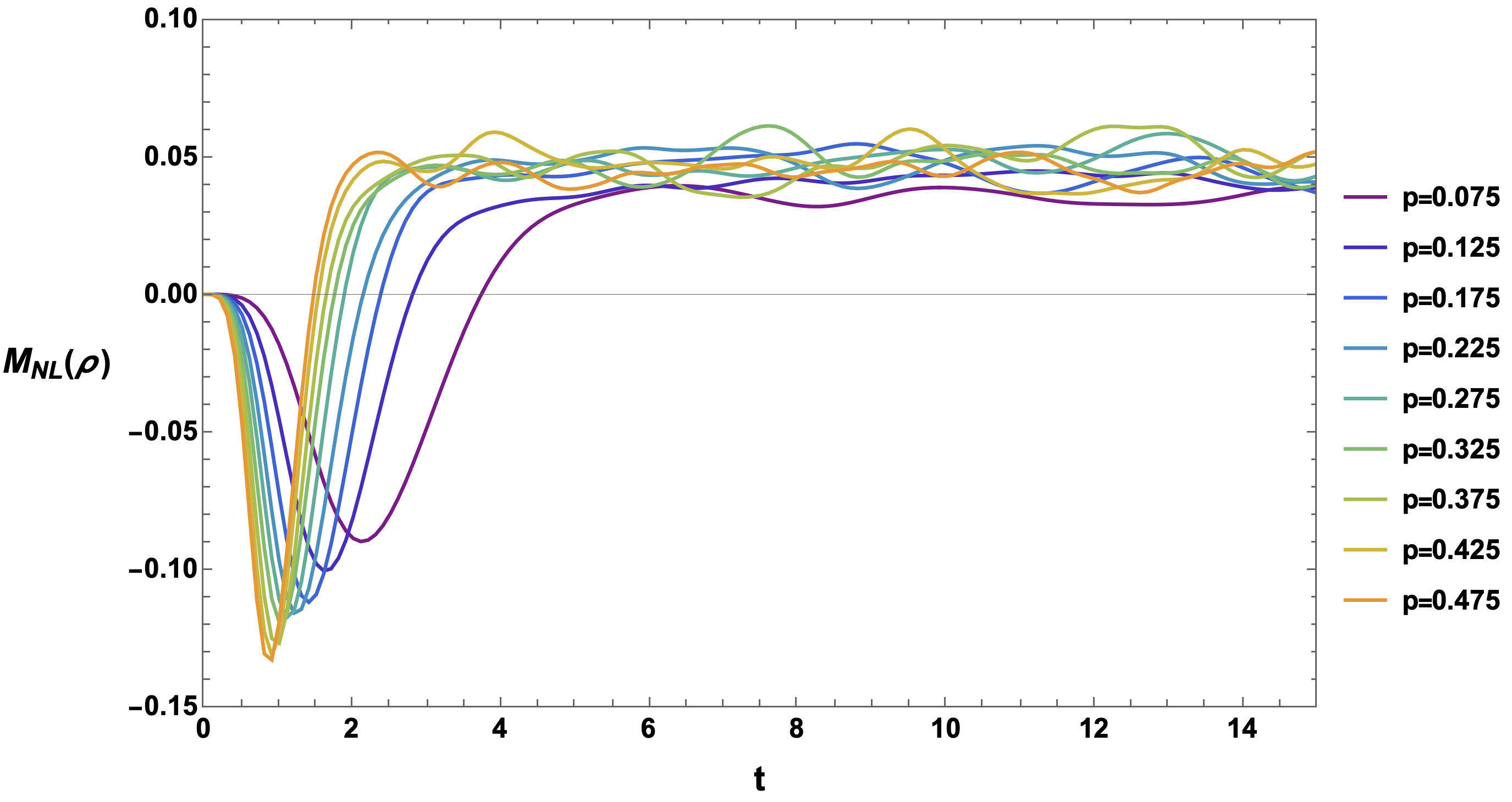}
    \caption{5 Qubits}\label{SPSYK8qNLSREZ}
  \end{subfigure}
  \begin{subfigure}{.3\linewidth}
    \includegraphics[height=3.5cm,width=\linewidth]{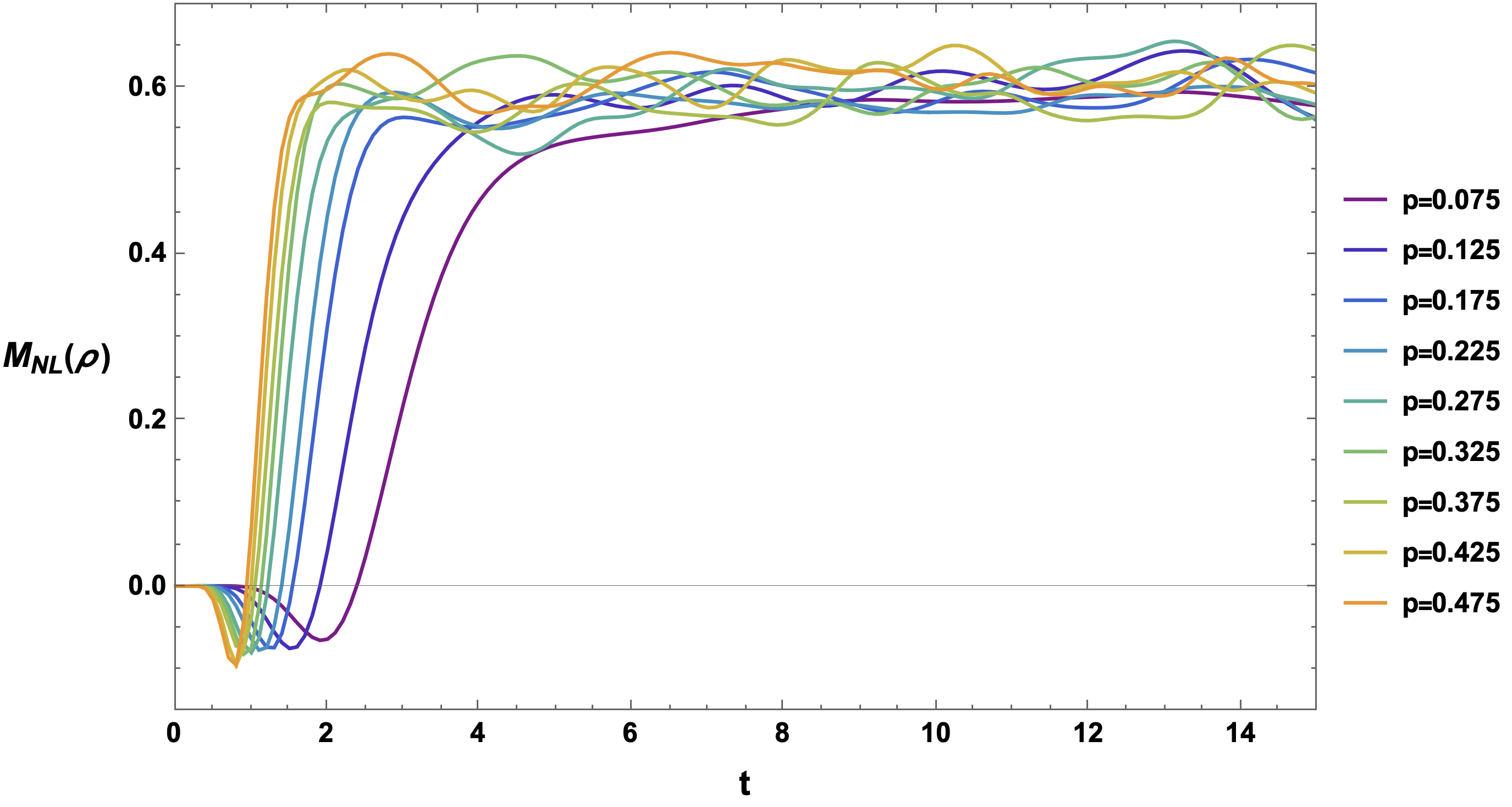}
    \caption{6 Qubits}\label{SPSYK6qNLSREZ}
  \end{subfigure}
  \begin{subfigure}{.3\linewidth}
    \includegraphics[height=3.5cm,width=\linewidth]{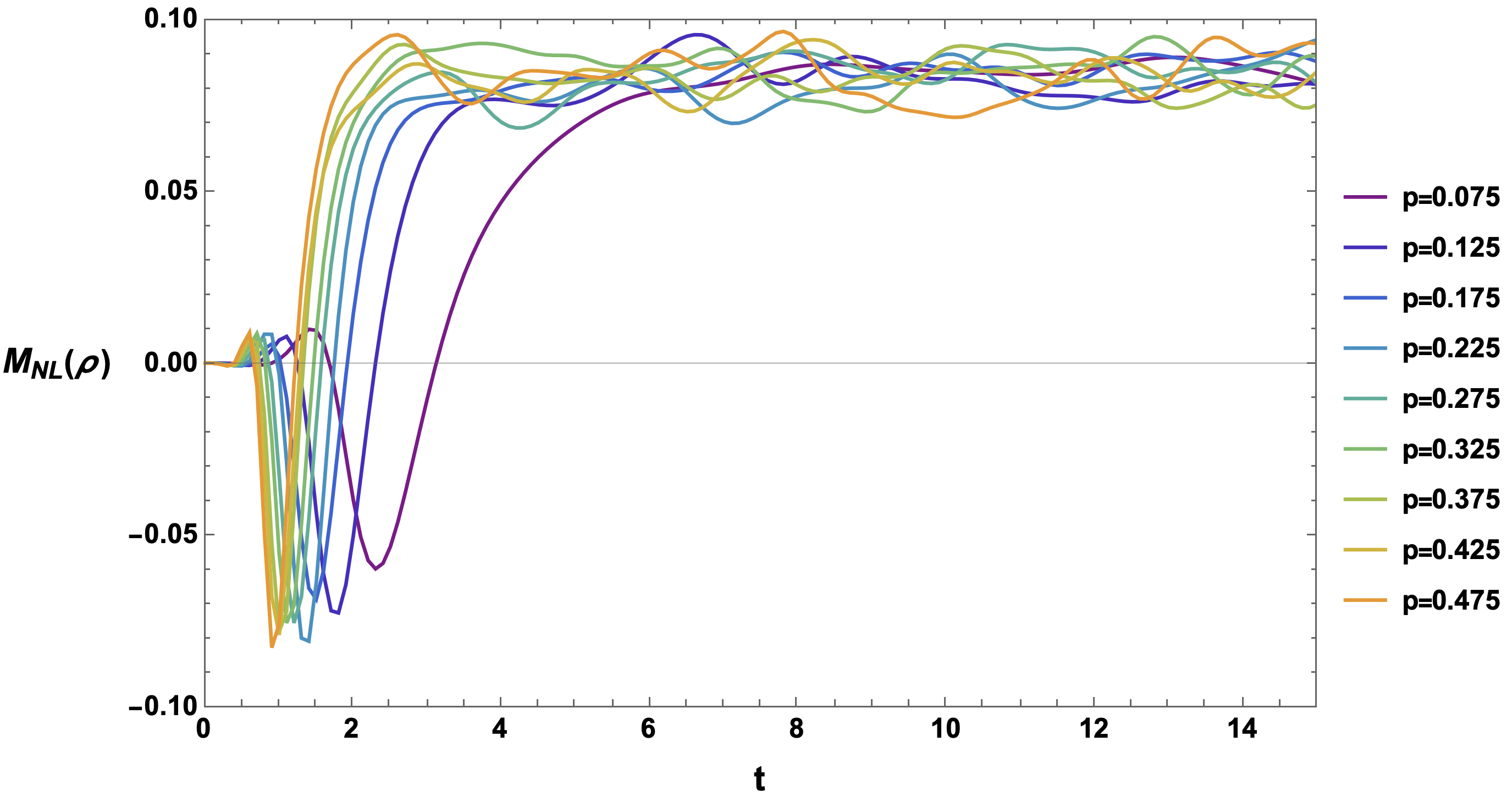}
    \caption{7 Qubits}\label{SPSYK7qNLSREZ}
  \end{subfigure}
  
  \caption{\footnotesize{For panels (a) $N=12$ (6 qubits), (b) $N=14$ (7 qubits) (C) $N=16$ (8 qubits) Non-local SRE for Sparse SYK. }}\label{NLSRESparse}
\end{figure}

\section{ ${\cal N}=2$ Super-symmetric SYK and BPS states}\label{sec 6}
In this section, we explore how the SRE and the multipartite non-local SRE behave in the ${\cal N}=2$ supersymmetric SYK model \cite{Fu:2016vas}. Recent interest in this system stems from the intriguing nature of its BPS sector: all BPS states in this model turn out to be \emph{fortuitous}, meaning that their cohomological properties rely crucially on finite-$N$ relations and do not extend smoothly to the large-$N$ limit  \cite{Chang:2024lxt}. Such states have been proposed as candidates for typical black-hole microstates in holographic settings and are believed to exhibit comparatively strong signatures of chaos. This stands in contrast to many other supersymmetric theories, where one also encounters \emph{monotonous} BPS states that persist at large $N$ and are often associated with smooth, horizonless configurations in the dual gravitational description.

The model itself is defined in terms of $N$ complex fermions, $\psi^i$ and $\bar{\psi}_i$ ($i=1,\dots,N$), from which a pair of conjugate supercharges, $Q$ and $Q^\dagger$, are constructed. These supercharges take the form
\begin{align}
    Q &= i \sum_{1 \leq i < j < k \leq N} C_{ijk}\, \psi^i \psi^j \psi^k , \\
    Q^\dagger &= i \sum_{1 \leq i < j < k \leq N} \overline{C}^{\,ijk}\, \bar{\psi}_i \bar{\psi}_j \bar{\psi}_k ,
\end{align}
where the coefficients $C_{ijk}$ are random complex variables sampled from a Gaussian ensemble with zero mean. Their variance is chosen to satisfy
\begin{align}
    \langle C_{ijk}\, \overline{C}^{\,ijk} \rangle = \frac{2J}{N^2},
\end{align}
ensuring a well-defined large-$N$ scaling of the interaction terms. We choose  operators $\bar{\psi}_i$  as fermionic creation operators, while $\psi^i$ act as annihilation operators. The vacuum $|0\rangle$ satisfies $\psi^i |0\rangle = 0$ for all $i$, and the full Fock space is generated by applying creation operators to it.  
A convenient occupation-number basis is
\begin{align}
|\lambda_1 \cdots \lambda_N\rangle
    = (\bar{\psi}_1)^{\lambda_1} (\bar{\psi}_2)^{\lambda_2} \cdots (\bar{\psi}_N)^{\lambda_N} |0\rangle,
    \qquad \lambda_i \in \{0,1\},
\end{align}
which spans a Hilbert space of dimension $2^N$.  

The system exhibits a $U(1)_R$ symmetry, under which the fermion number plays the role of the $R$-charge. Thus each basis vector has charge
\begin{align}
q_R = \sum_{i=1}^{N} \nu_i,
\end{align}
and the full Hilbert space partitions itself into charge sectors $\mathcal{H}_{q_R}$.

We will investigate the SRE and multipartite SRE behave for three different classes of states belonging to the same charge sector: BPS or Fortuitous States,$Q$-exact States and Typical States within a Charge Sector.

\textbf{\textit{(i) BPS or Fortuitous States.}  }
The supersymmetric ground states of $H = \{Q, Q^\dagger\}$ form the BPS sector. They correspond to nontrivial cohomology classes of the supercharge, namely elements of $\ker(Q)/\mathrm{im}(Q)$. Such states exist only in a restricted window of charge sectors~\cite{Fu:2016vas,Chang:2024lxt}.  
For even $N$ they occur at $q_R = \frac{N}{2}, \frac{N\pm 2}{2}$, while for odd $n$ they lie at $q_R = \frac{N\pm 1}{2}, \frac{N\pm 3}{2}$.  
To probe typical SRE properties of the BPS subspace, we construct random superpositions of cohomology representatives:
\begin{align}
|\Psi_{\mathrm{BPS}}\rangle
    = \frac{1}{M} \sum_{i} a_i\, |\phi_i\rangle,
    \qquad |\phi_i\rangle \in \ker(Q)/\mathrm{im}(Q),
\end{align}
with $a_i$ drawn from a Gaussian distribution and $M$ ensuring normalization.

\textbf{\textit{(ii) $Q$-exact States.} } 
Cohomologically trivial states belong to $\mathrm{im}(Q)$ and provide a natural comparison set.  
We generate representative $Q$-exact states of charge $q_R$ by acting with $Q$ on random states drawn from the sector with charge $q_R+3$:
\begin{align}
|\Psi_{\mathrm{Q\!-\!exact}}\rangle
    = \frac{1}{M} \sum_{\{\lambda_i\}} b_{\{\lambda_i\}}\, 
      Q |\lambda_1 \cdots \lambda_N\rangle,
      \qquad \sum_{i=1}^{N} \lambda_i = q_R + 3,
\end{align}
with $b_{\{\lambda_i\}}$ chosen randomly.

\textbf{\textit{(iii) Typical States within a Charge Sector.}  }
As a reference for generic behavior of SRE and multipartite SRE, we also consider Haar-random states sampled directly from the full charge sector $\mathcal{H}_{q_R}$:
\[
|\Psi_{\mathrm{typical}}\rangle \in \mathcal{H}_{q_R}.
\]
This ensemble provides a baseline against which the structured BPS and $Q$-exact states may be compared.

\subsection{Stabilizer Renyi Entropy}

\begin{figure}[H]
  \centering
  \begin{subfigure}{.3\linewidth}
\includegraphics[height=3.5cm,width=\linewidth]{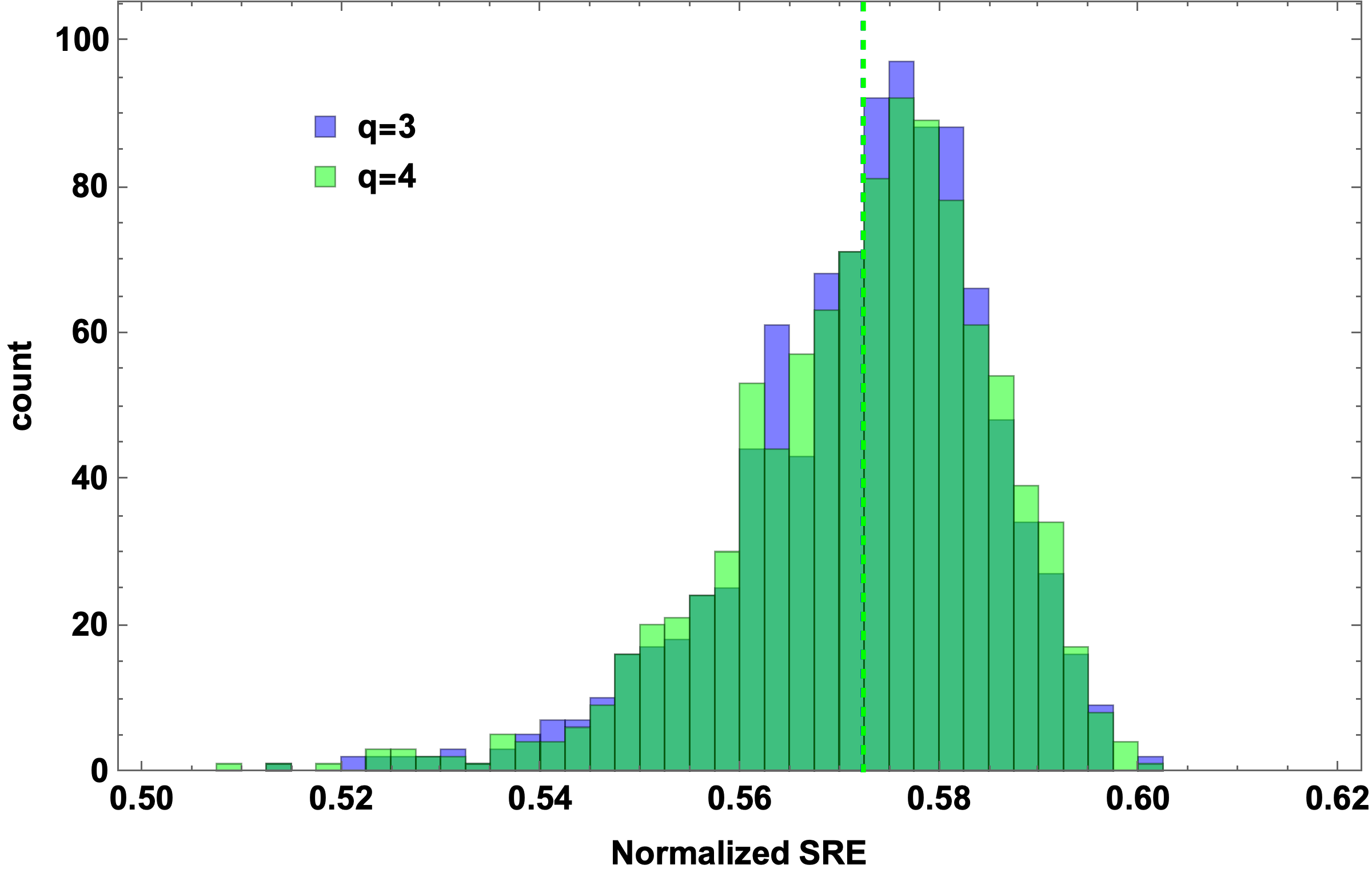}
    \caption{$N=7$ }\label{SUSYBPSO1}
  \end{subfigure}\quad
  \begin{subfigure}{.3\linewidth}
    \includegraphics[height=3.5cm,width=\linewidth]{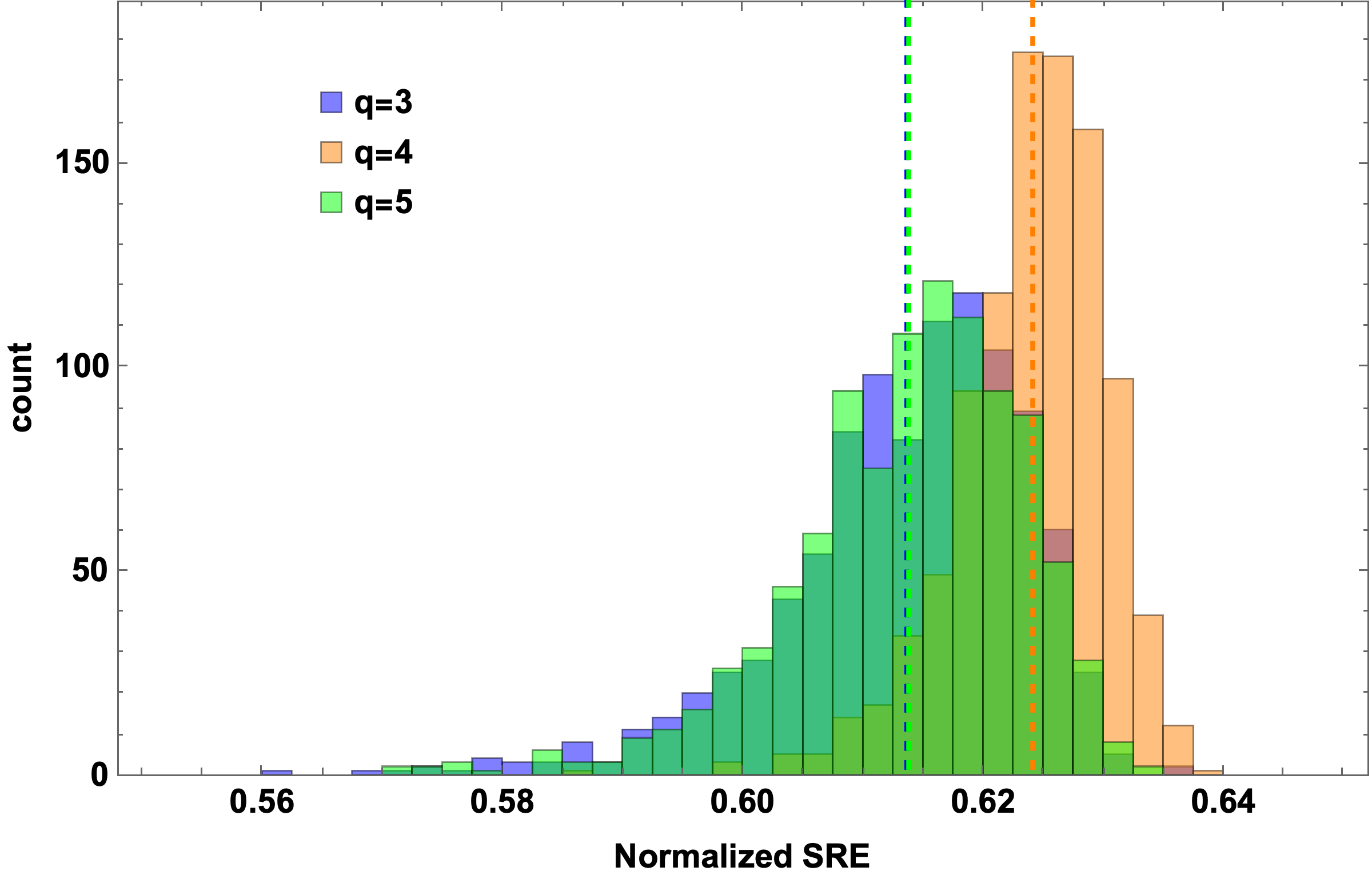}
    \subcaption{$N=8$}\label{SUSYBPSO2}
  \end{subfigure}\quad
  \begin{subfigure}{.3\linewidth}
    \includegraphics[height=3.5cm,width=\linewidth]{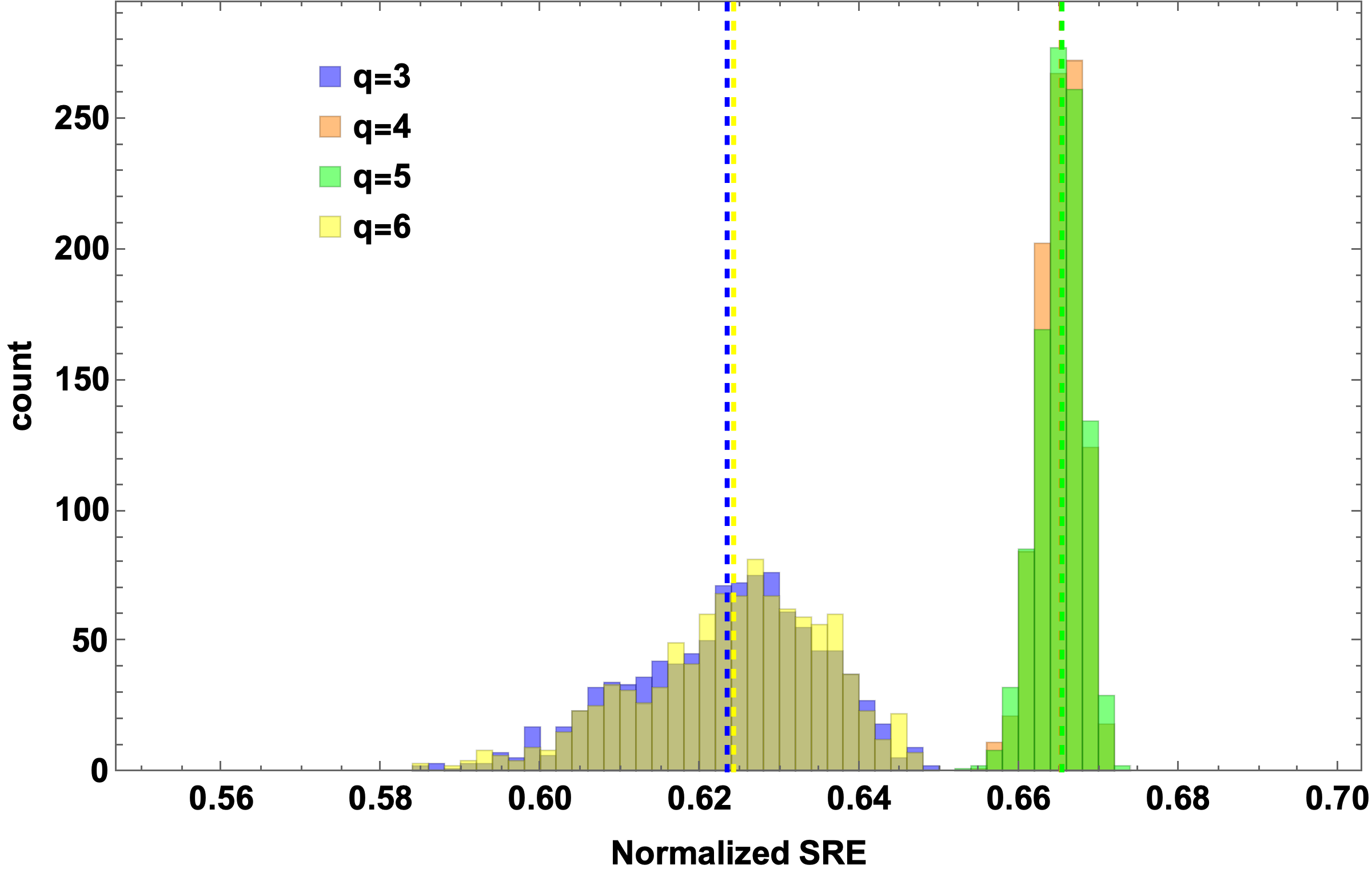}
    \subcaption{$N=9$}\label{SUSYBPSO3}
  \end{subfigure}

   \caption{\footnotesize{ The plot displays the distribution of the normalized stabilizer Rényi entropy among typical BPS states for different $R$-charge sectors for $N=7,8,9$.
}}\label{BPS789}
\end{figure}

We begin by evaluating the stabilizer Rényi entropy of the typical BPS states for $N = 7,8,9$, as shown in \Cref{BPS789}. For $N = 7$, the SRE values associated with the two allowed $R$-charge sectors, $q_R = 3$ and $q_R = 4$, are nearly indistinguishable and lie on top of each other. In contrast, for $N = 8$ and $N = 9$, a clear structure emerges: the BPS states residing in the central charge sectors namely $q_R = 4$ for $N = 8$ and $q_R = 4,5$ for $N = 9$ exhibit noticeably larger SRE compared to those in the edge charge sectors (  $q_R = 3,5$ for $N = 8$ and $q_R = 3,6$ for $N = 9$).

\begin{figure}[H]
  \centering
  \begin{subfigure}{.45\linewidth}
\includegraphics[height=4.5cm,width=\linewidth]{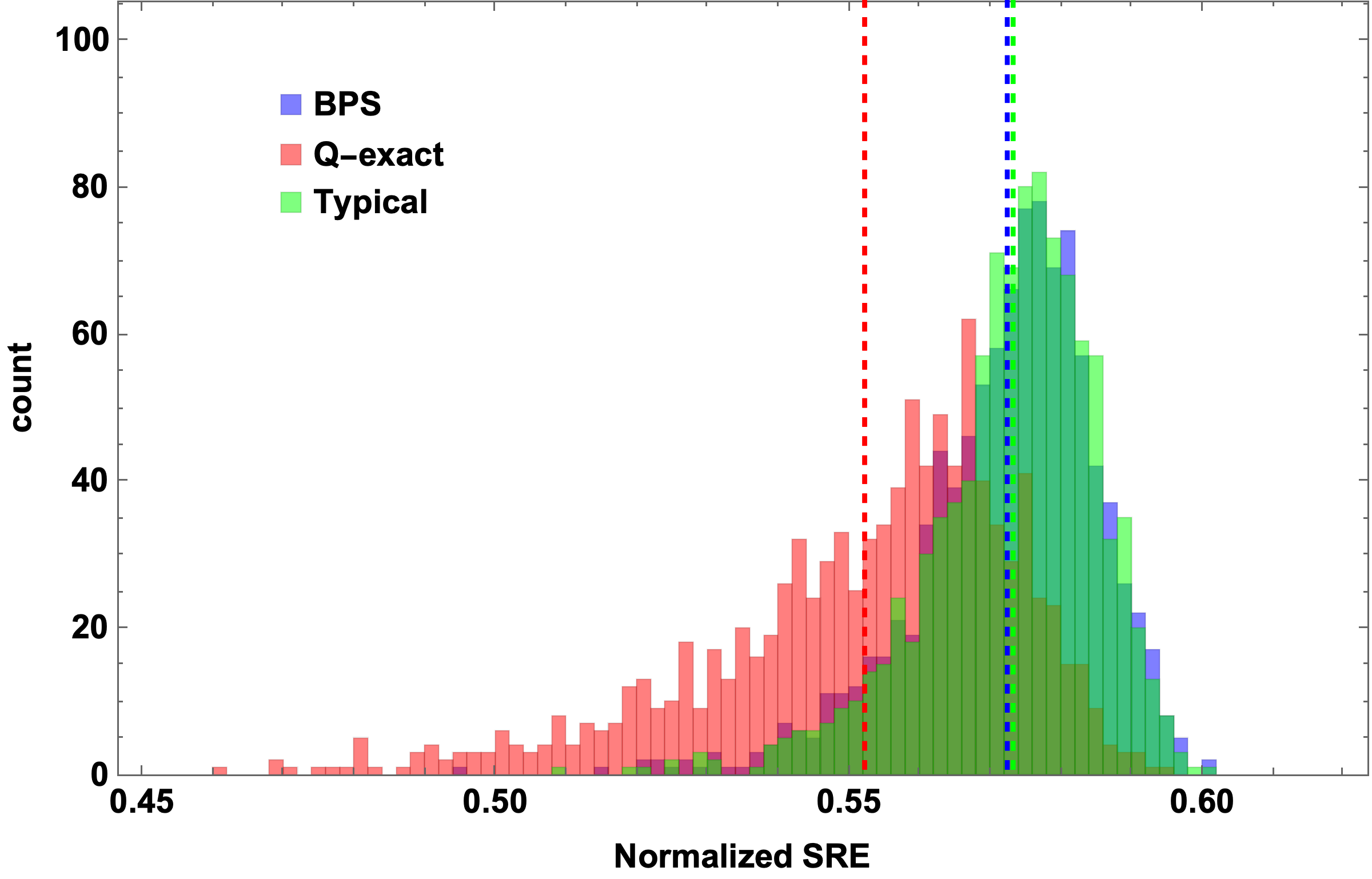}
    \caption{$N=7$, $q_R=3$}\label{SUSY73}
  \end{subfigure}\quad
  \begin{subfigure}{.45\linewidth}
    \includegraphics[height=4.5cm,width=\linewidth]{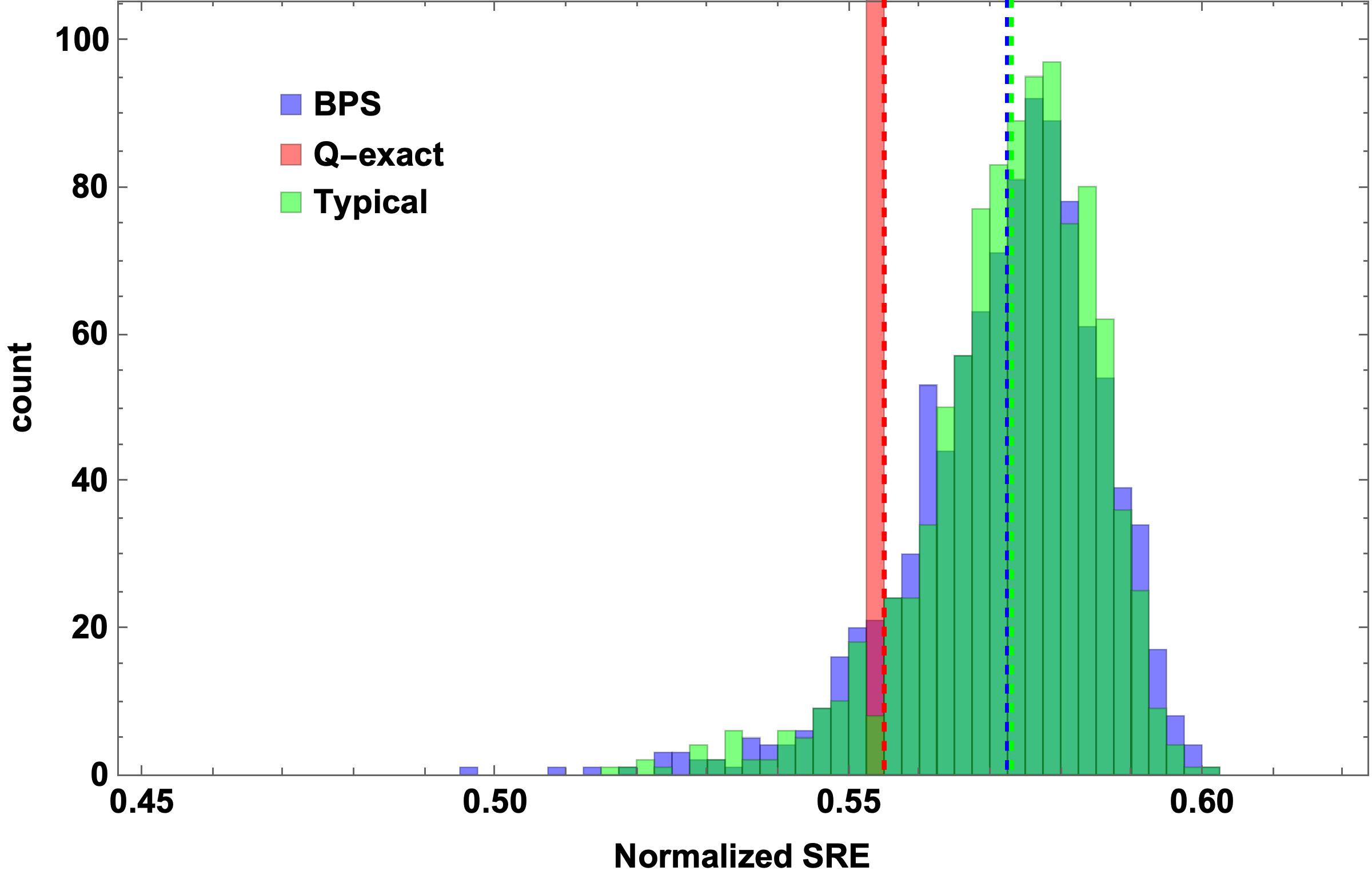}
    \subcaption{$N=7$, $q_R=4$}\label{SUSY74}
  \end{subfigure}
   \caption{\footnotesize{ The plot displays the distribution of the normalized stabilizer Rényi entropy among typical BPS states, typical $Q$-exact states, and states randomly sampled within the $R$-charge sector for $N=7$.
}}\label{SUSY7}
\end{figure}
  Following this analysis, in \Cref{SUSY7}, \Cref{SUSY8} and \Cref{SUSY9} we present the distributions of SRE for the BPS, $Q$-exact, and typical states across the allowed charge sectors. A consistent pattern emerges: in nearly every case, the average stabilizer Rényi entropy satisfies
\begin{align}
    SRE_2(\text{typical}) \gtrsim SRE_2(\text{BPS}).
\end{align}
This indicates that Haar-typical states within a fixed charge sector possess more non-stabilizerness than the BPS states, implying that such generic states are computationally relatively harder to simulate using stabilizer-based methods. In comparison, the BPS sector despite representing fortuitous black-hole-like microstates exhibits a somewhat reduced level of stabilizer complexity.

A more detailed structure appears when examining the central $R$-charge sectors where the BPS states reside. In these regions we observe the ordering
\begin{align}
    SRE_2(\text{typical}) \gtrsim SRE_2(\text{BPS}) > SRE_2(\text{Exact}),
\end{align}
showing that $Q$-exact states consistently display the smallest stabilizer content. This hierarchy highlights how the cohomological character of each ensemble influences its SRE: BPS states lie between the trivial $Q$-exact states and the fully random typical states. Thus, while BPS microstates possess nontrivial stabilizer entropy, they remain less complex than Haar-typical states within the same charge sector.

This hierarchy in the central sector has a natural structural origin. Typical states in a fixed R-charge sector are largely unstructured and therefore tend to exhibit large values of basis/frame-dependent complexity measures, like the SRE. Instead, BPS states are defined by the cohomological constraint imposed by supersymmetry, namely they are annihilated by $Q$ and $Q^\dagger$ and thus represent highly non-generic subspaces of state vectors satisfying non-trivial linear relations. These constraints reduce the degree to which BPS states can look like an arbitrary typical state, therefore leading to a mitigated spread in the stabilizer basis. This in turn leads to slighly reduced values of the SRE. Finally, $Q$-exact states exhibit an even stronger structure compared to a generic $Q$-closed state because they are explicitly in the image of $Q$. They are therefore expected to exhibits the most pronounced departure from the behavior of an arbitrary typical state. However, the contrasting behaviour in the edge charge sectors is something that needs to better understood.

\begin{figure}[H]
  \centering
  \begin{subfigure}{.3\linewidth}
\includegraphics[height=3.5cm,width=\linewidth]{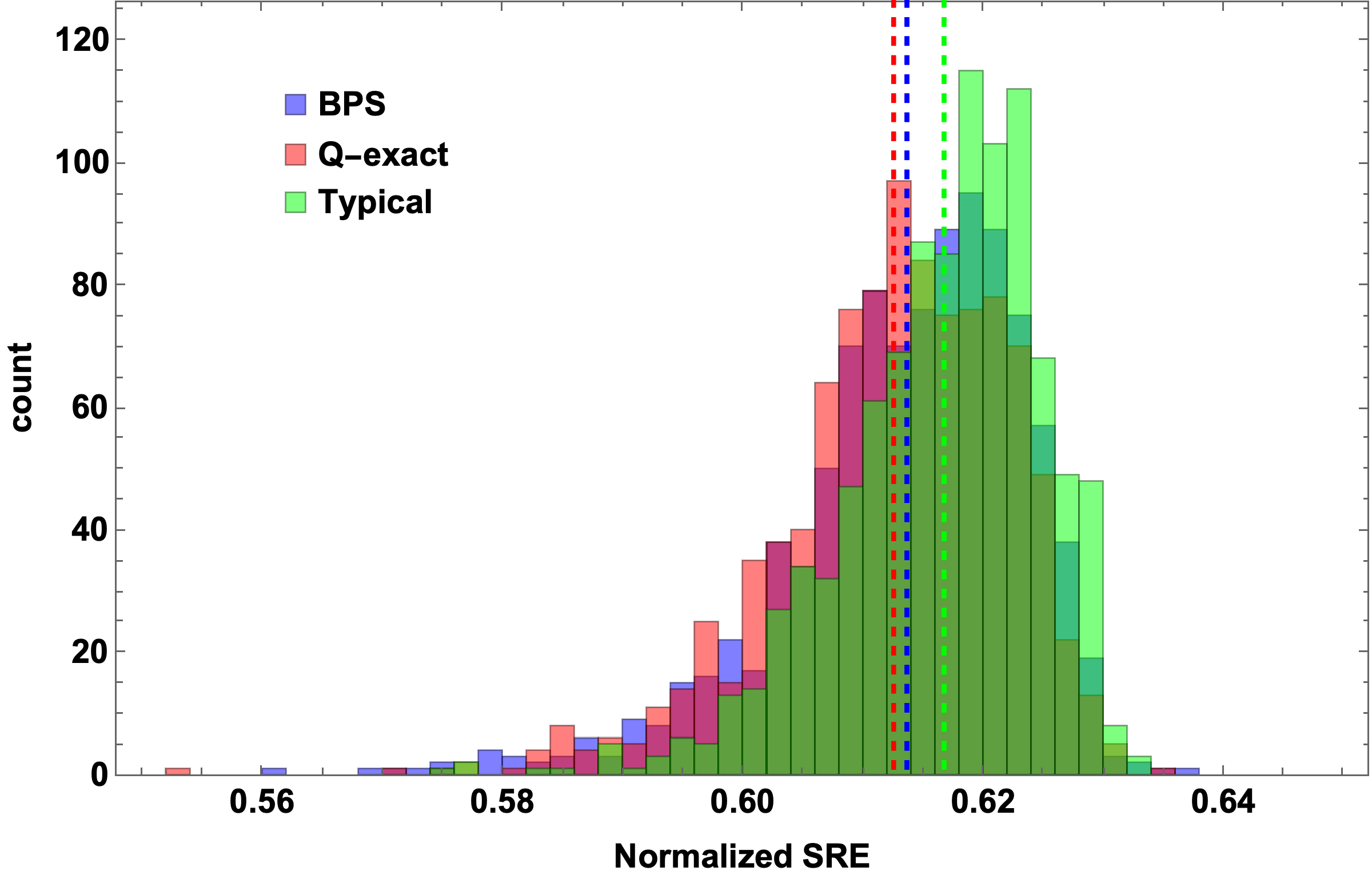}
    \caption{$N=8$, $q_R=3$}\label{SUSY83}
  \end{subfigure}\quad
  \begin{subfigure}{.3\linewidth}\includegraphics[height=3.5cm,width=\linewidth]{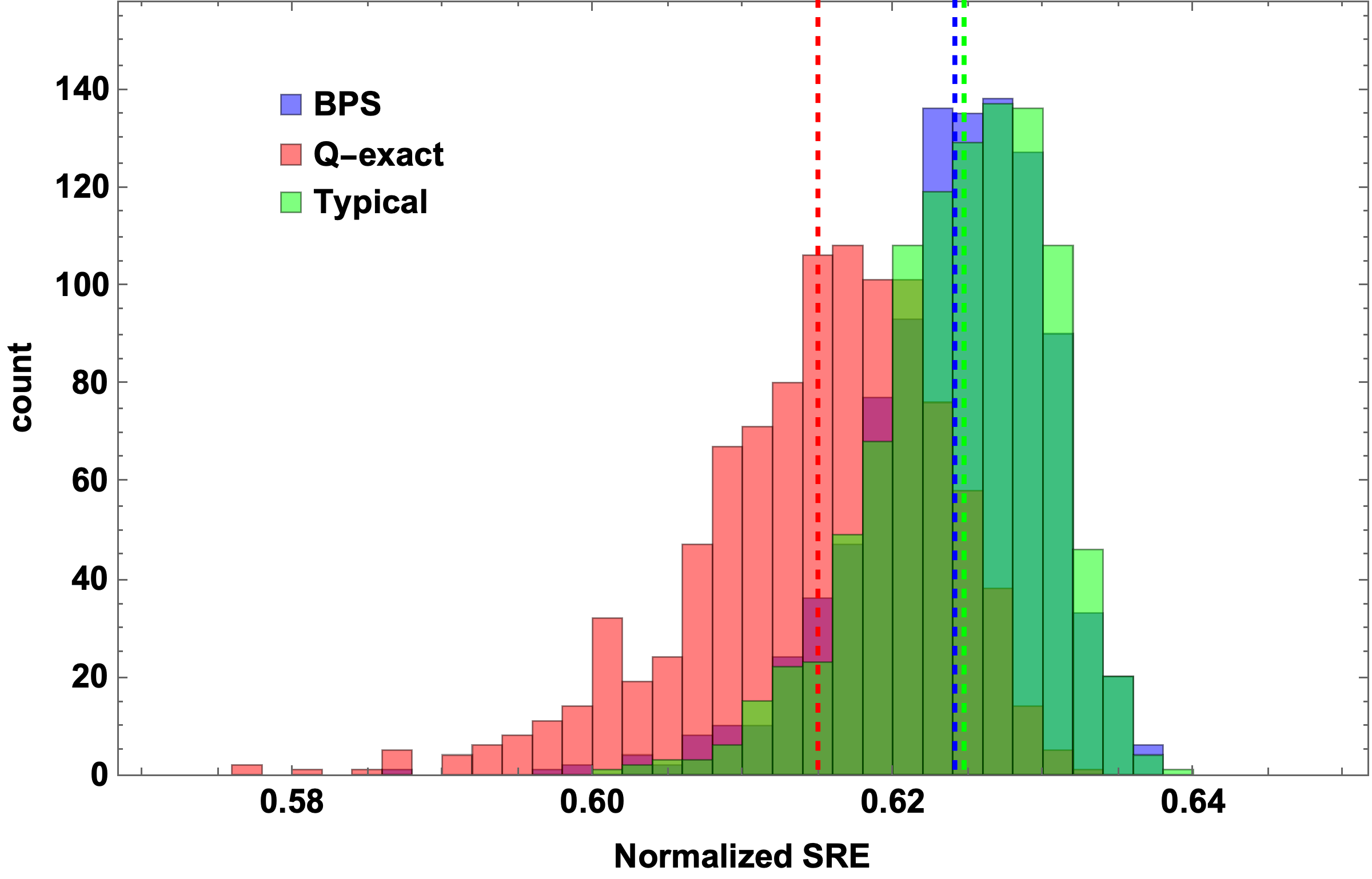}
    \subcaption{$N=8$, $q_R=4$}\label{SUSY84}
  \end{subfigure}\quad
    \begin{subfigure}{.3\linewidth}
\includegraphics[height=3.5cm,width=\linewidth]{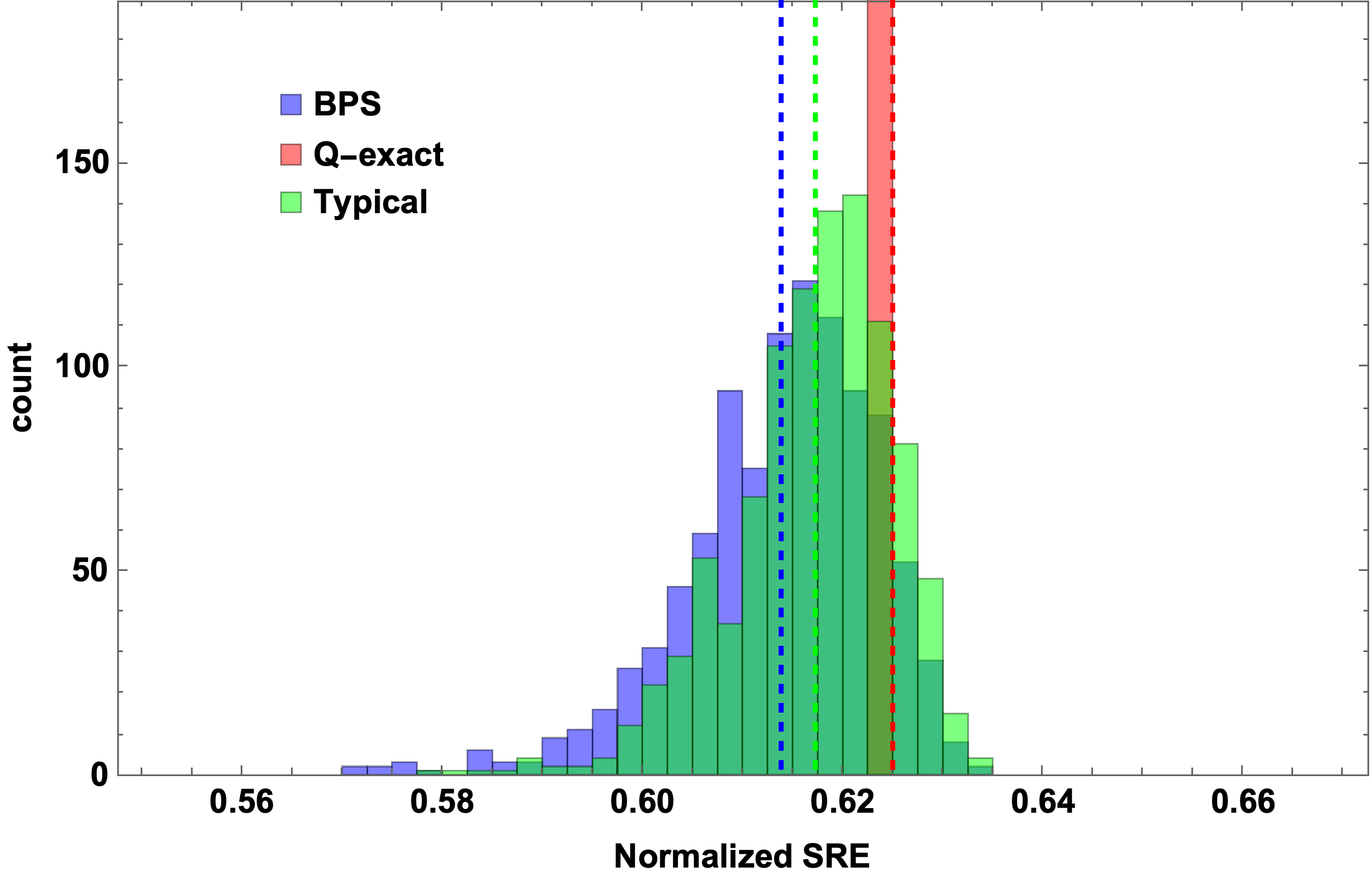}
    \caption{$N=8$, $q_R=5$}\label{SUSY85}
  \end{subfigure}
   \caption{\footnotesize{ The plot displays the distribution of the normalized stabilizer Rényi entropy among typical BPS states, typical $Q$-exact states, and states randomly sampled within the $R$-charge sector for $N=8$.
}}\label{SUSY8}
\end{figure}

\begin{figure}[H]
  \centering
  \begin{subfigure}{.45\linewidth}
\includegraphics[height=4.5cm,width=\linewidth]{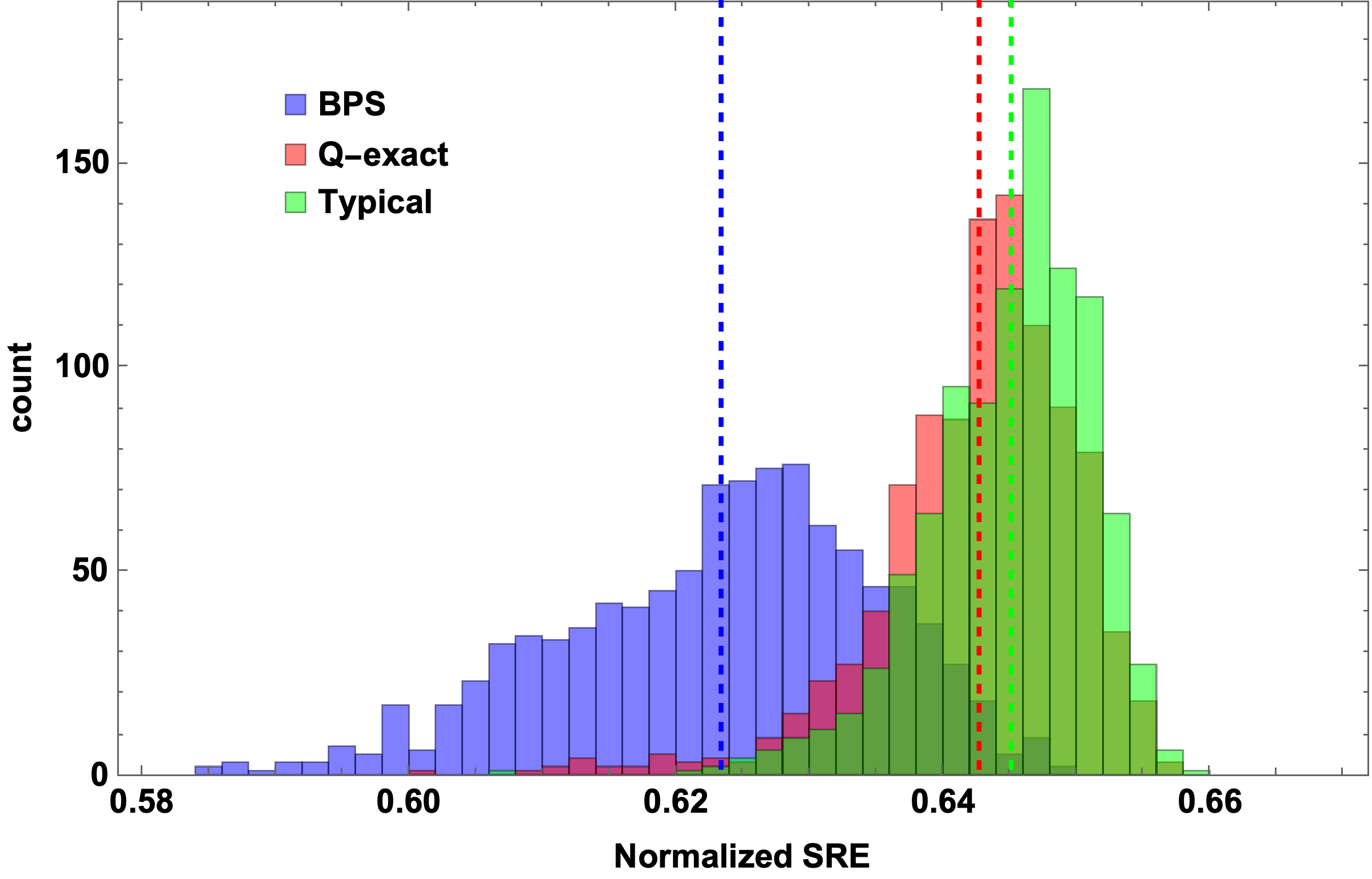}
    \caption{$N=9$, $q_R=3$}\label{SUSY93}
  \end{subfigure}\quad
  \begin{subfigure}{.45\linewidth}\includegraphics[height=4.5cm,width=\linewidth]{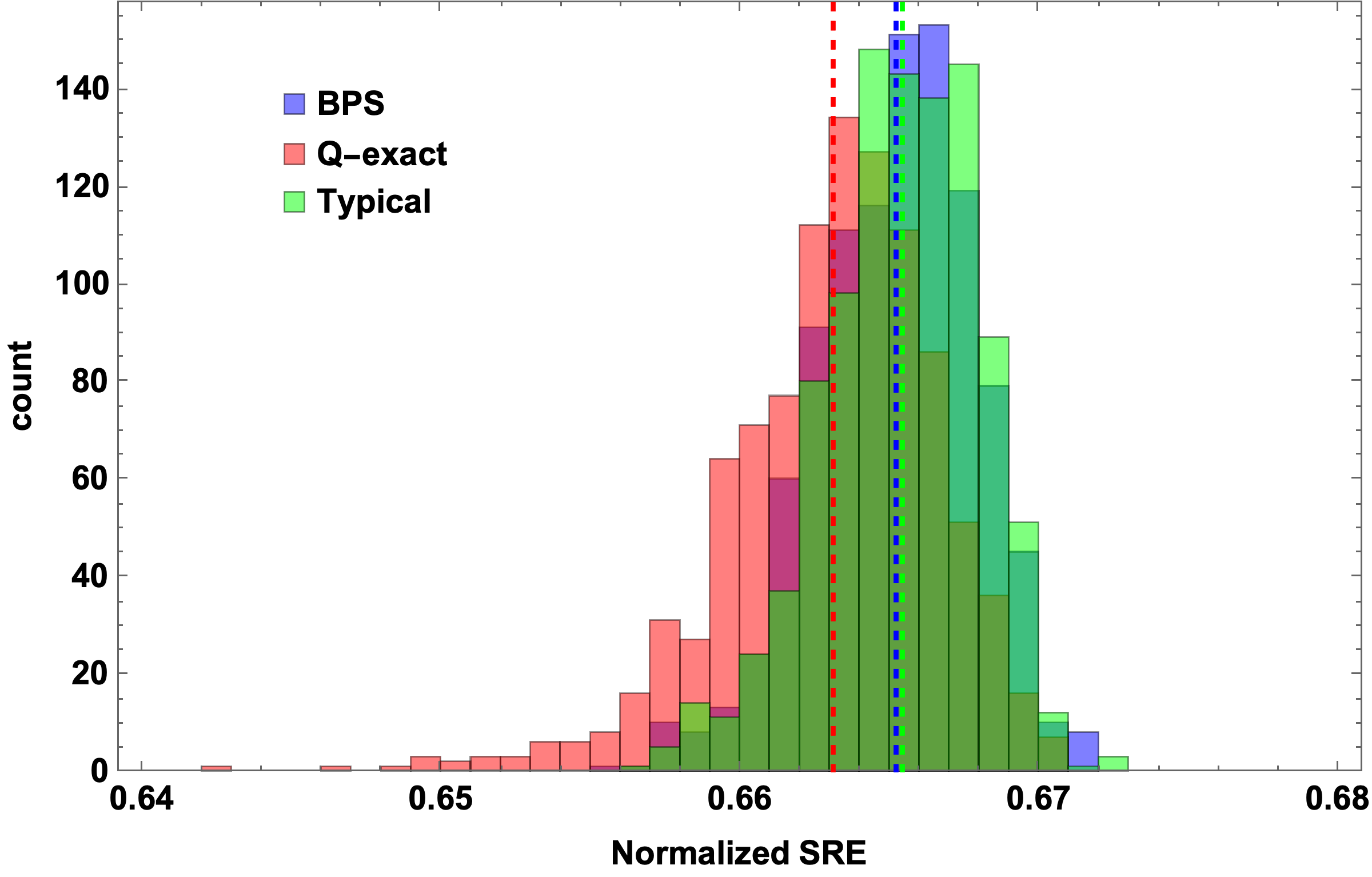}
    \subcaption{$N=9$, $q_R=4$}\label{SUSY94}
  \end{subfigure}\quad\\
    \begin{subfigure}{.45\linewidth}
\includegraphics[height=4.5cm,width=\linewidth]{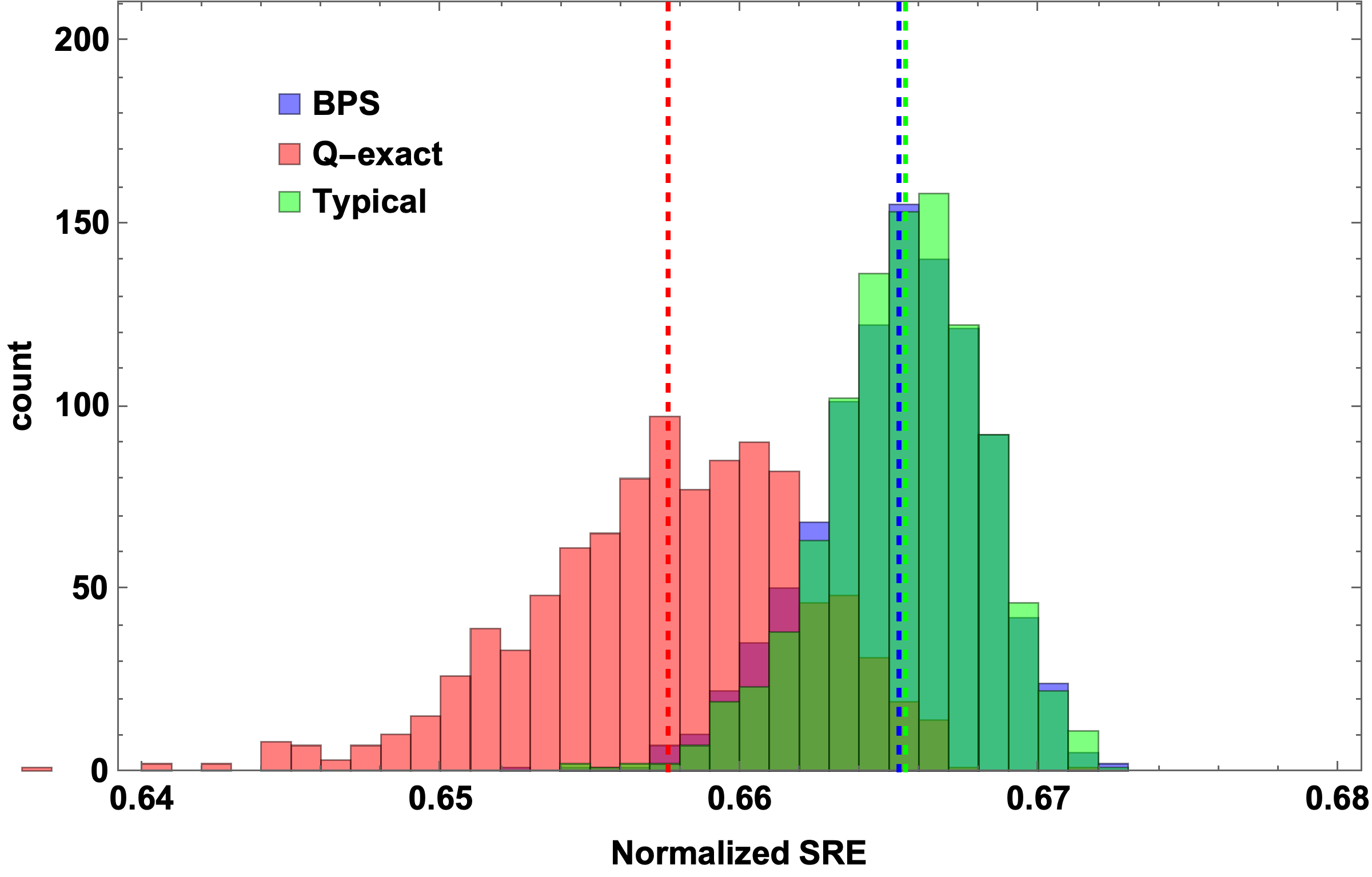}
    \caption{$N=9$, $q_R=5$}\label{SUSY95}
  \end{subfigure}
  \begin{subfigure}{.45\linewidth}
\includegraphics[height=4.5cm,width=\linewidth]{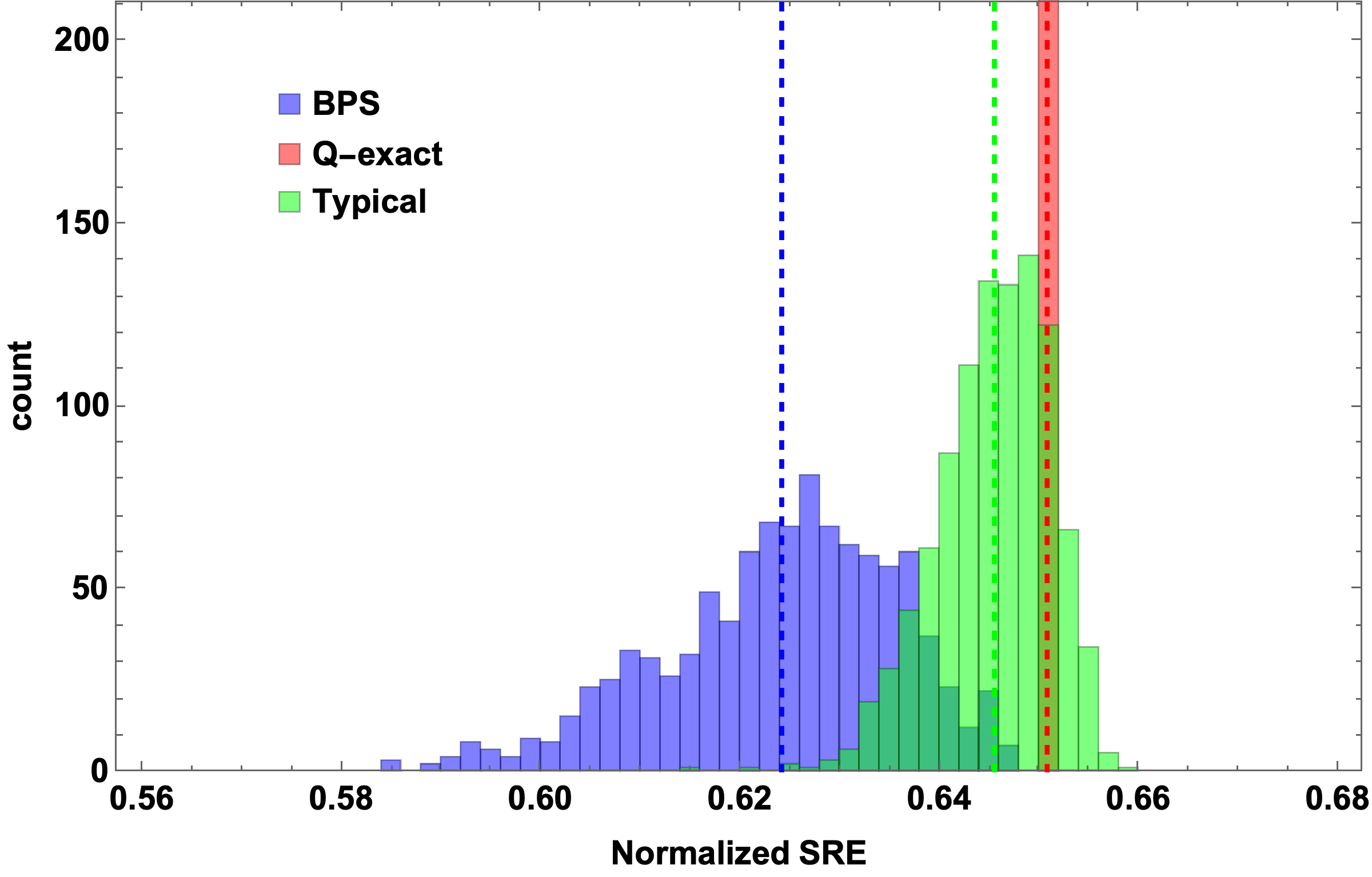}
    \caption{$N=9$, $q_R=6$}\label{SUSY96}
  \end{subfigure}
   \caption{\footnotesize{ The plot displays the histogram of the normalized stabilizer Rényi entropy among typical BPS states, typical $Q$-exact states, and states randomly sampled within the $R$-charge sector for $N=9$.
}}\label{SUSY9}
\end{figure}

\subsection{Multipartite non-local SRE}
Having characterized the behavior of the $SRE$ in the ${\cal N}=2$ supersymmetric SYK model, we now turn to the multipartite non-local SRE. This quantity captures the portion of non-stabilizerness that arises specifically from genuinely global, multipartite correlations rather than from local or few-body contributions. Examining it therefore provides a more refined understanding of how \textit{magic} is distributed across the many-body degrees of freedom in this model.

In \Cref{NLBPS789} we display the histograms of the multipartite SRE for the typical BPS states across the allowed $R$-charge sectors for $N = 7, 8, 9$. These plots illustrate how the strength of global non-local correlations varies with both system size and charge sector, shedding light on the structural features of fortuitous BPS microstates.

For $N = 7$, the multipartite SRE values corresponding to the two allowed $R$-charge sectors, $q_R = 3$ and $q_R = 4$, are nearly identical and lie essentially on top of each other. For $N = 9$, however, a much richer pattern emerges: the BPS states in the central charge sectors $q_R = 4,5$ exhibit significantly larger multipartite SRE compared to those in the edge sectors $q_R = 3,6$. This mirrors the behavior observed earlier for the stabilizer Rényi entropy.

Interestingly, the situation differs for $N = 8$. Although $q_R = 4$ is the central $R$-charge sector, the corresponding BPS states display \emph{lower} multipartite SRE than those in the neighboring edge sectors $q_R = 3$ and $q_R = 5$. This departure from the pattern seen in the ordinary SRE suggests that global multipartite correlations do not always follow the same charge-sector hierarchy as the local or few-body contributions captured by SRE.

\begin{figure}[H]
  \centering
  \begin{subfigure}{.3\linewidth}
    \includegraphics[height=3.5cm,width=\linewidth]{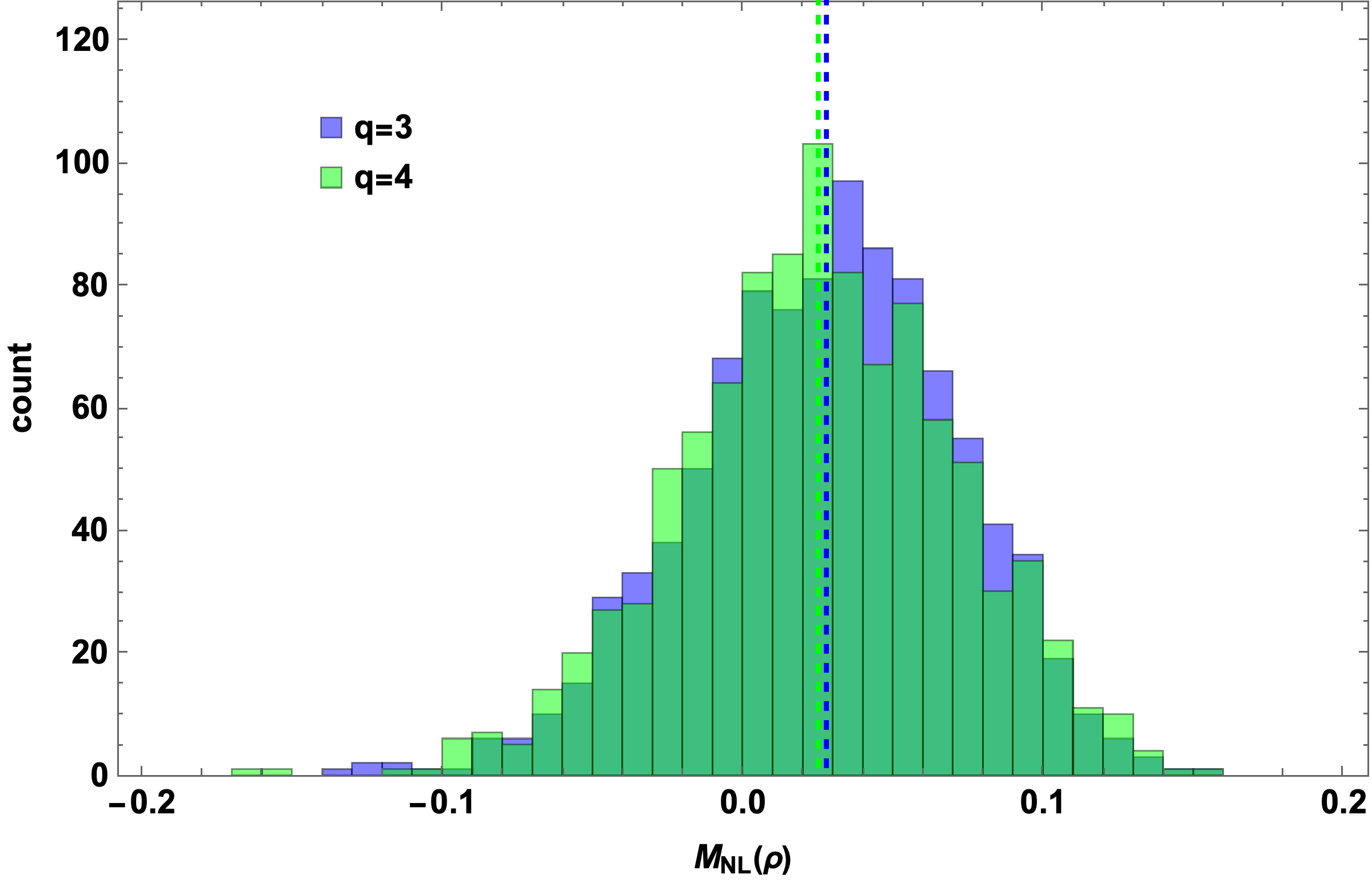}
    \subcaption{$N=7$}\label{NLSUSYBPSO1}
  \end{subfigure}\quad
  \begin{subfigure}{.3\linewidth}
    \includegraphics[height=3.5cm,width=\linewidth]{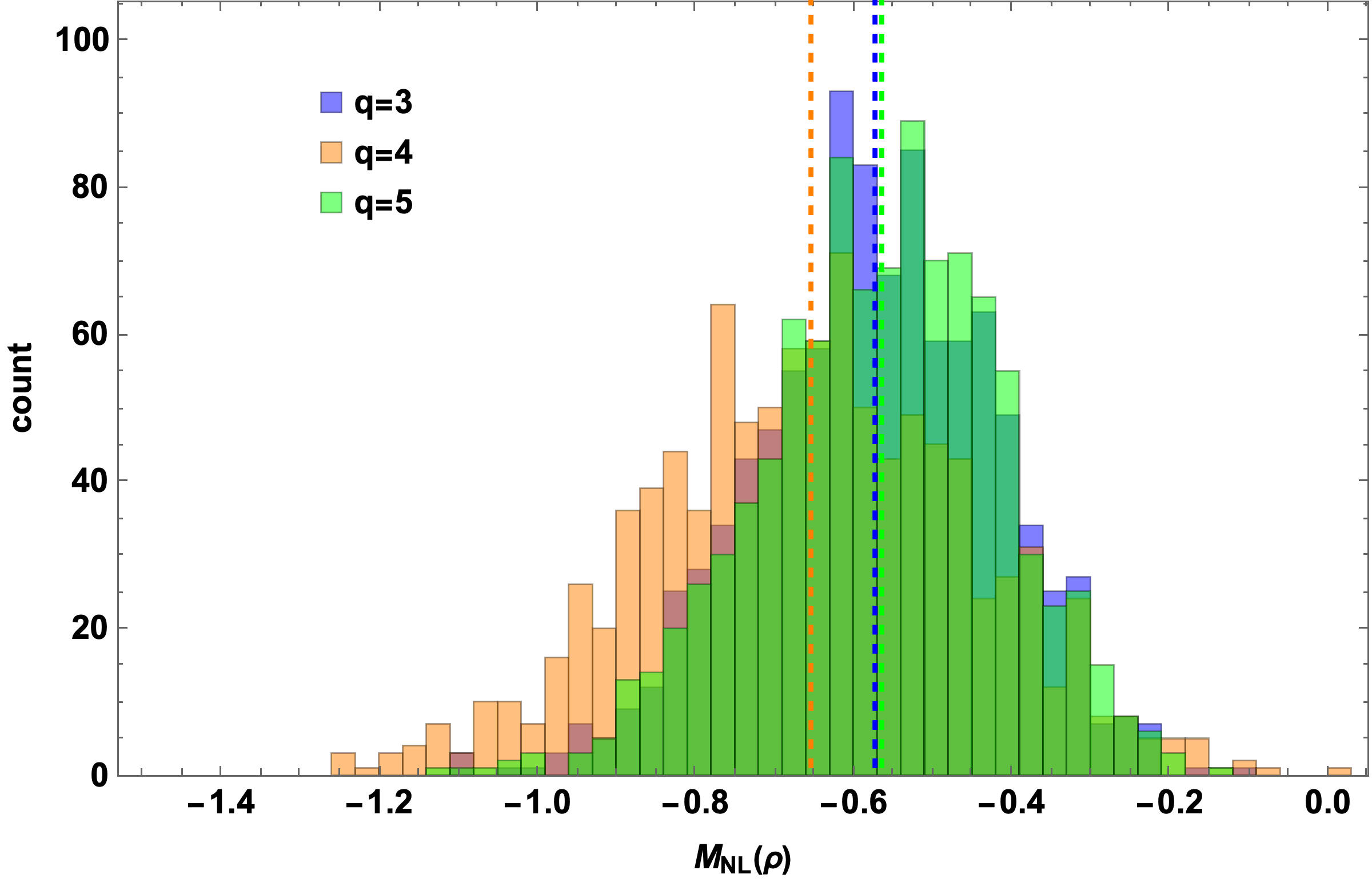}
    \subcaption{$N=8$}\label{NLSUSYBPSO2}
  \end{subfigure}\quad
  \begin{subfigure}{.3\linewidth}
    \includegraphics[height=3.5cm,width=\linewidth]{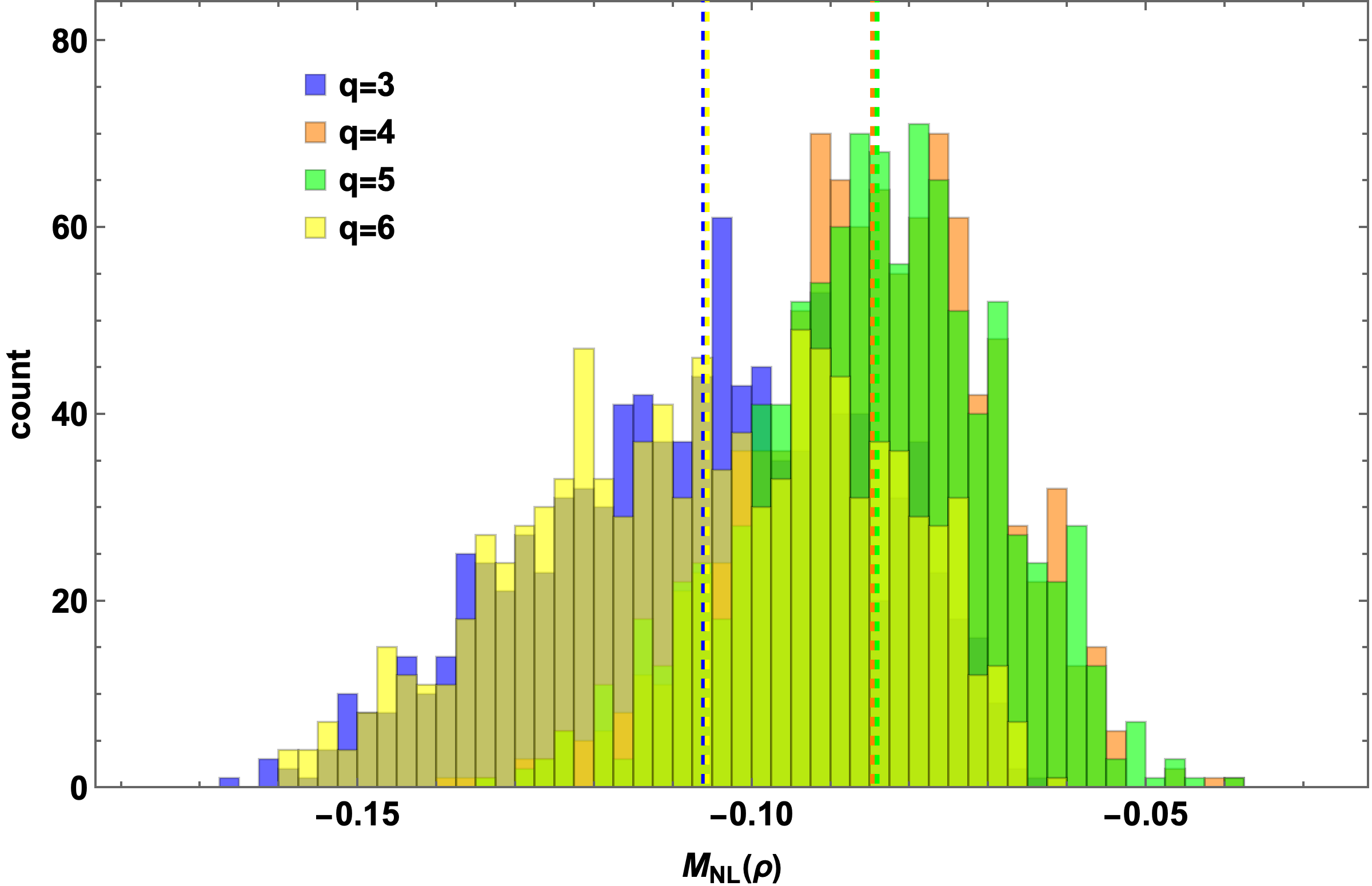}
    \subcaption{$N=9$}\label{NLSUSYBPSO3}
  \end{subfigure}

   \caption{\footnotesize{ The plot displays the distribution of the multipartite stabilizer Rényi entropy among typical BPS states for different $R$-charge sectors for $N=7,8,9$.
}}\label{NLBPS789}
\end{figure}

\begin{figure}[h]
  \centering
  \begin{subfigure}{.3\linewidth}
\includegraphics[height=3.5cm,width=\linewidth]{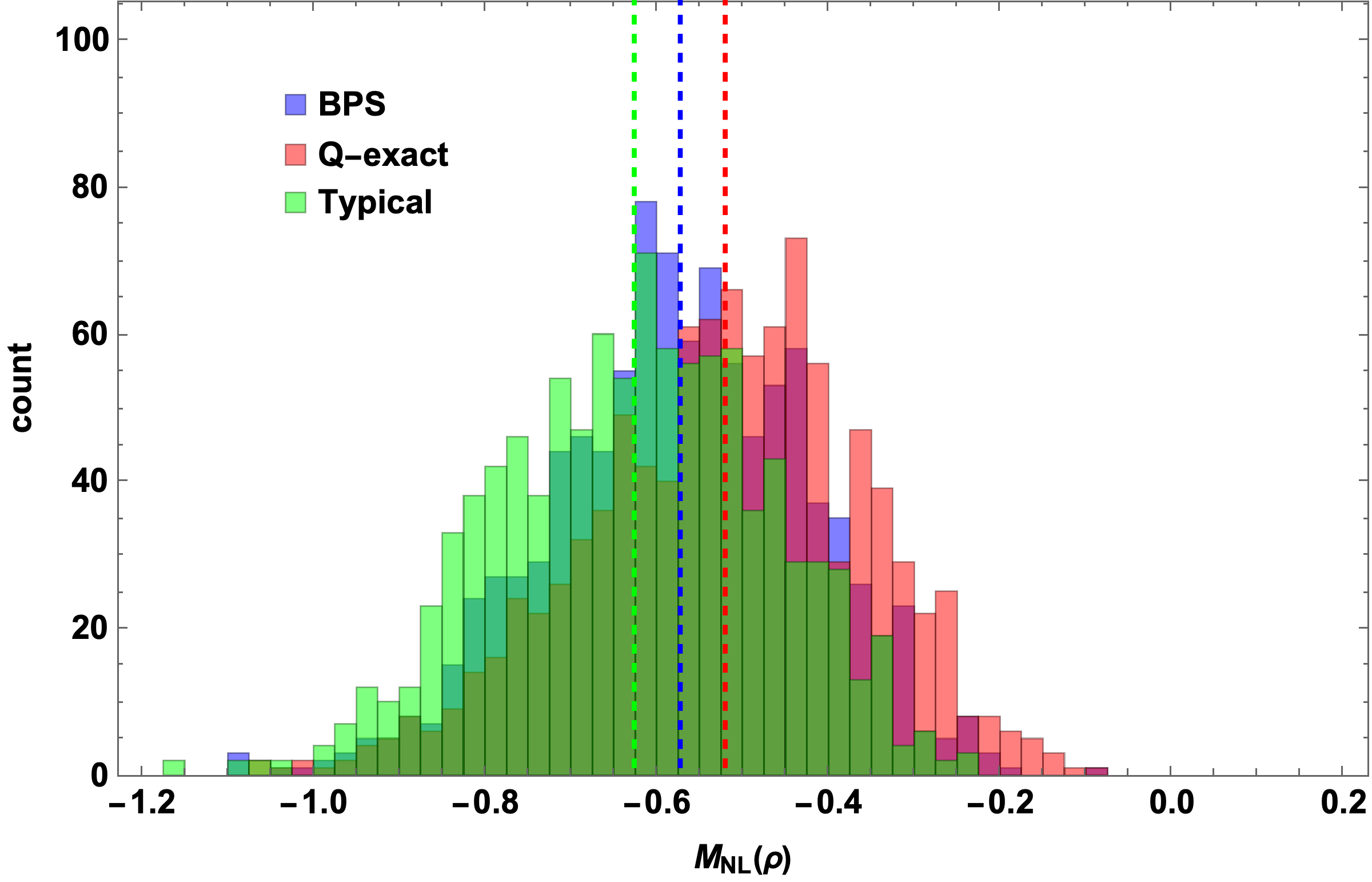}
    \caption{$N=8$, $q_R=3$}\label{NLSUSY83}
  \end{subfigure}\quad
  \begin{subfigure}{.3\linewidth}\includegraphics[height=3.5cm,width=\linewidth]{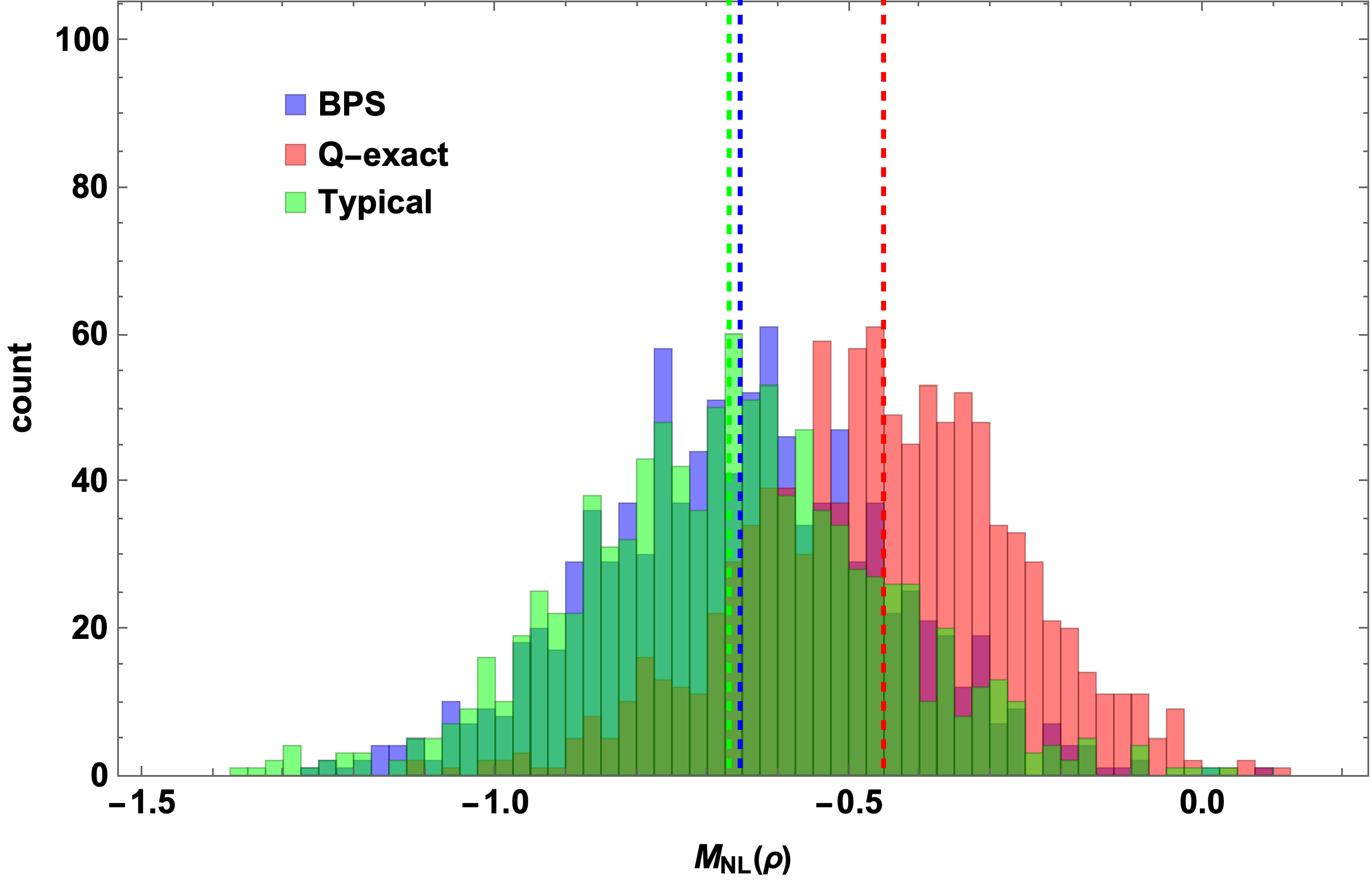}
    \subcaption{$N=8$, $q_R=4$}\label{NLSUSY84}
  \end{subfigure}\quad
    \begin{subfigure}{.3\linewidth}
\includegraphics[height=3.5cm,width=\linewidth]{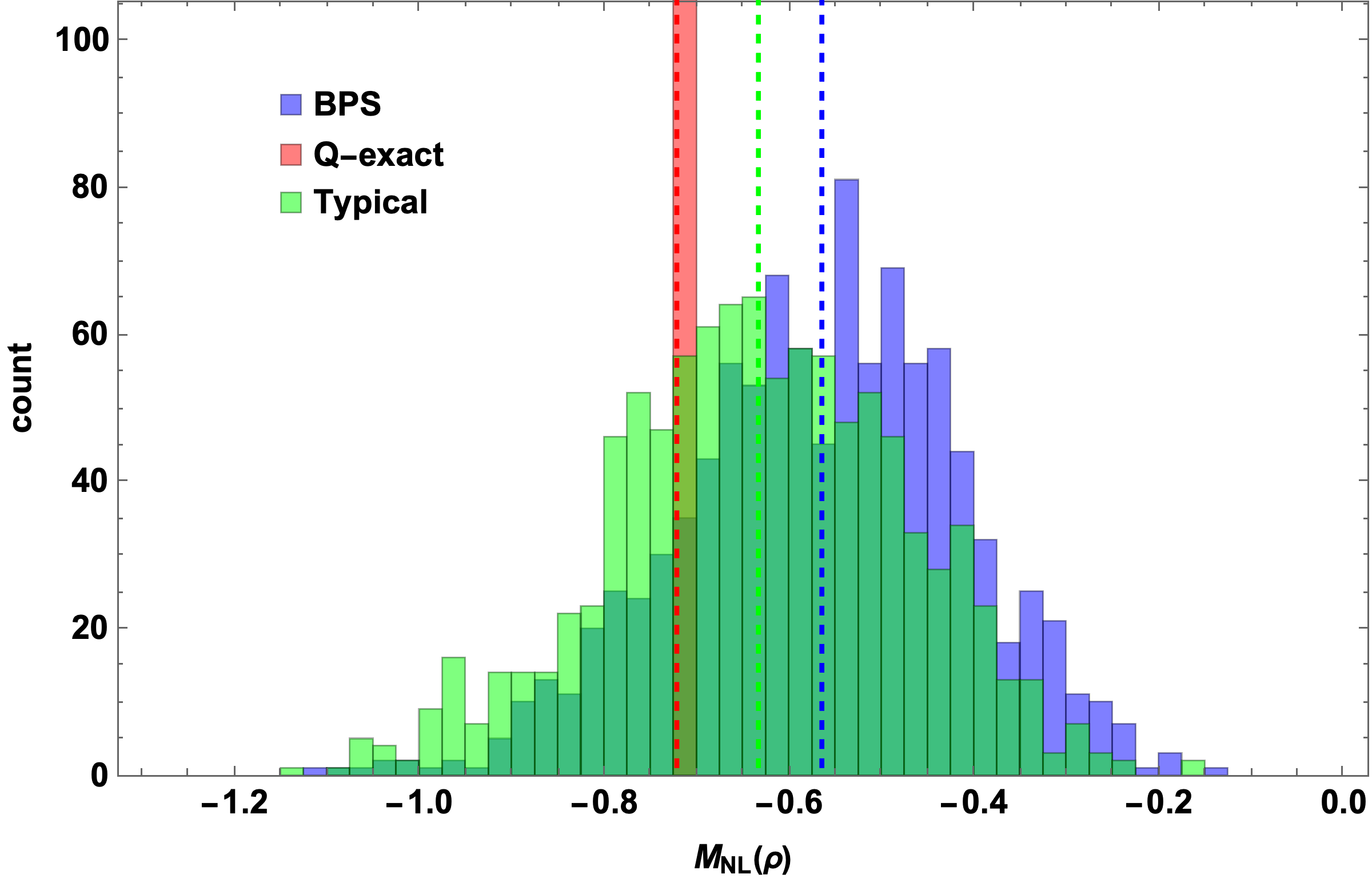}
    \caption{$N=8$, $q_R=5$}\label{NLSUSY85}
  \end{subfigure}
   \caption{\footnotesize{ The plot displays the distribution of the multipartite non-local stabilizer Rényi entropy among typical BPS states, typical $Q$-exact states, and states randomly sampled within the $R$-charge sector for $N=8$.
}}\label{NLSUSY8}
\end{figure}

We next present the histograms of multipartite SRE for typical states, BPS states, and $Q$-exact states across the allowed charge sectors in \Cref{NLSUSY8} for $N = 8$ and \Cref{NLSUSY9} for $N = 9$. A striking feature emerges in both system sizes: the BPS states located in the edge charge sectors $q_R = 3,5$ for $N = 8$ and $q_R = 4,6$ for $N = 9$ display larger multipartite SRE compared to the states in the corresponding central sectors ($q_R = 4$ for $N = 8$ and $q_R = 5$ for $N = 9$). This behavior contrasts with the ordinary stabilizer Rényi entropy, where central sectors often exhibited enhanced values. Within each edge charge sector, we further observe the ordering
\begin{align}
    |M_{\mathrm{NL}}(\text{typical})| > |M_{\mathrm{NL}}(\text{BPS})|,
\end{align}
indicating that typical states possess stronger global multipartite correlations than the BPS states in those sectors.

In the central charge sectors, however, the distinction is less pronounced; typical and BPS states yield nearly comparable multipartite SRE:
\begin{align}
    |M_{\mathrm{NL}}(\text{typical})| \approx |M_{\mathrm{NL}}(\text{BPS})|.
\end{align}
In contrast to both of these behaviors, the $Q$-exact states do not exhibit any consistent pattern in their multipartite SRE relative to the BPS or typical states. Their positions vary from sector to sector, and no stable ordering can be identified.

\begin{figure}[H]
  \centering
  \begin{subfigure}{.45\linewidth}
\includegraphics[height=4.5cm,width=\linewidth]{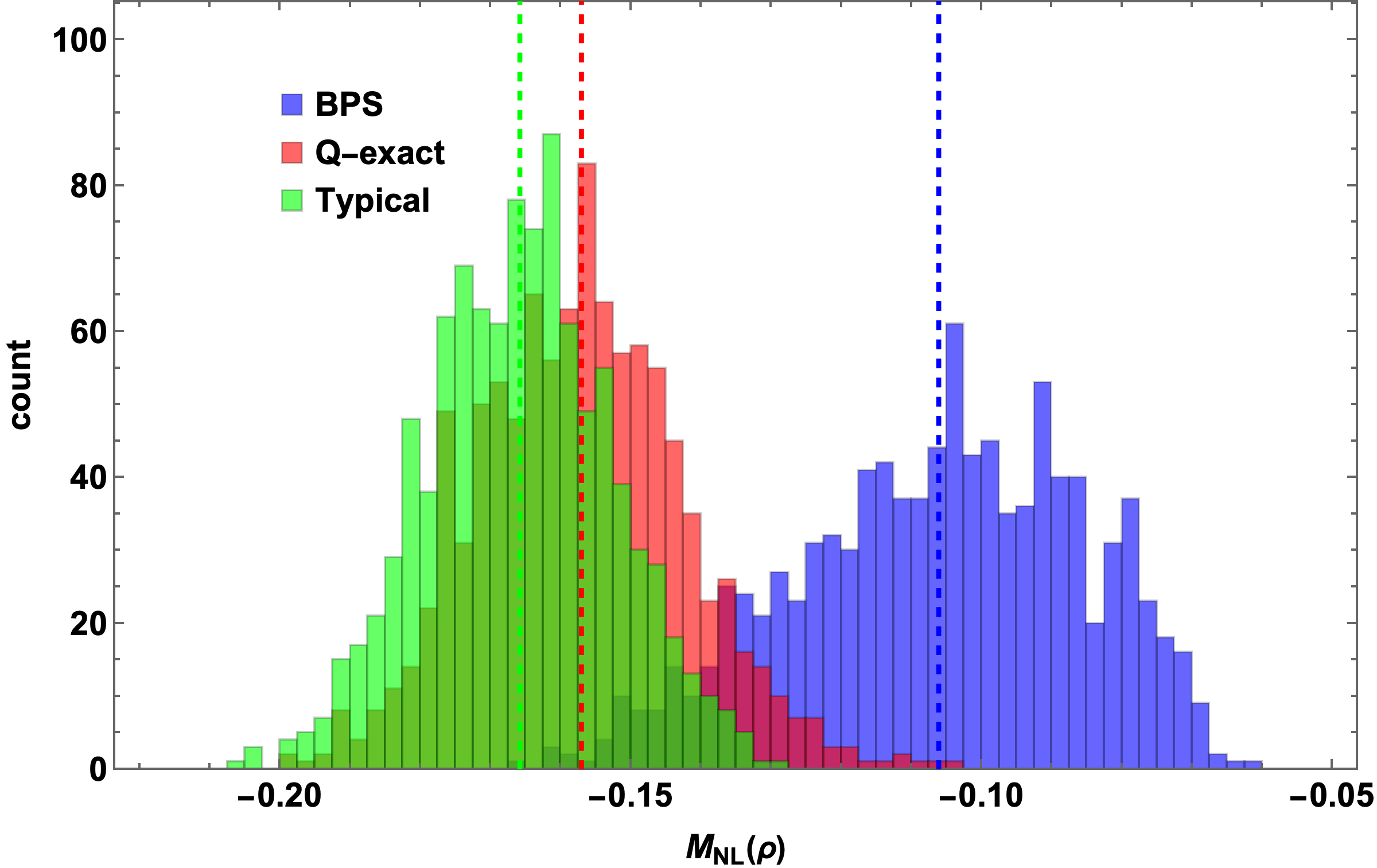}
    \caption{$N=9$, $q_R=3$}\label{NLSUSY93}
  \end{subfigure}\quad
  \begin{subfigure}{.45\linewidth}\includegraphics[height=4.5cm,width=\linewidth]{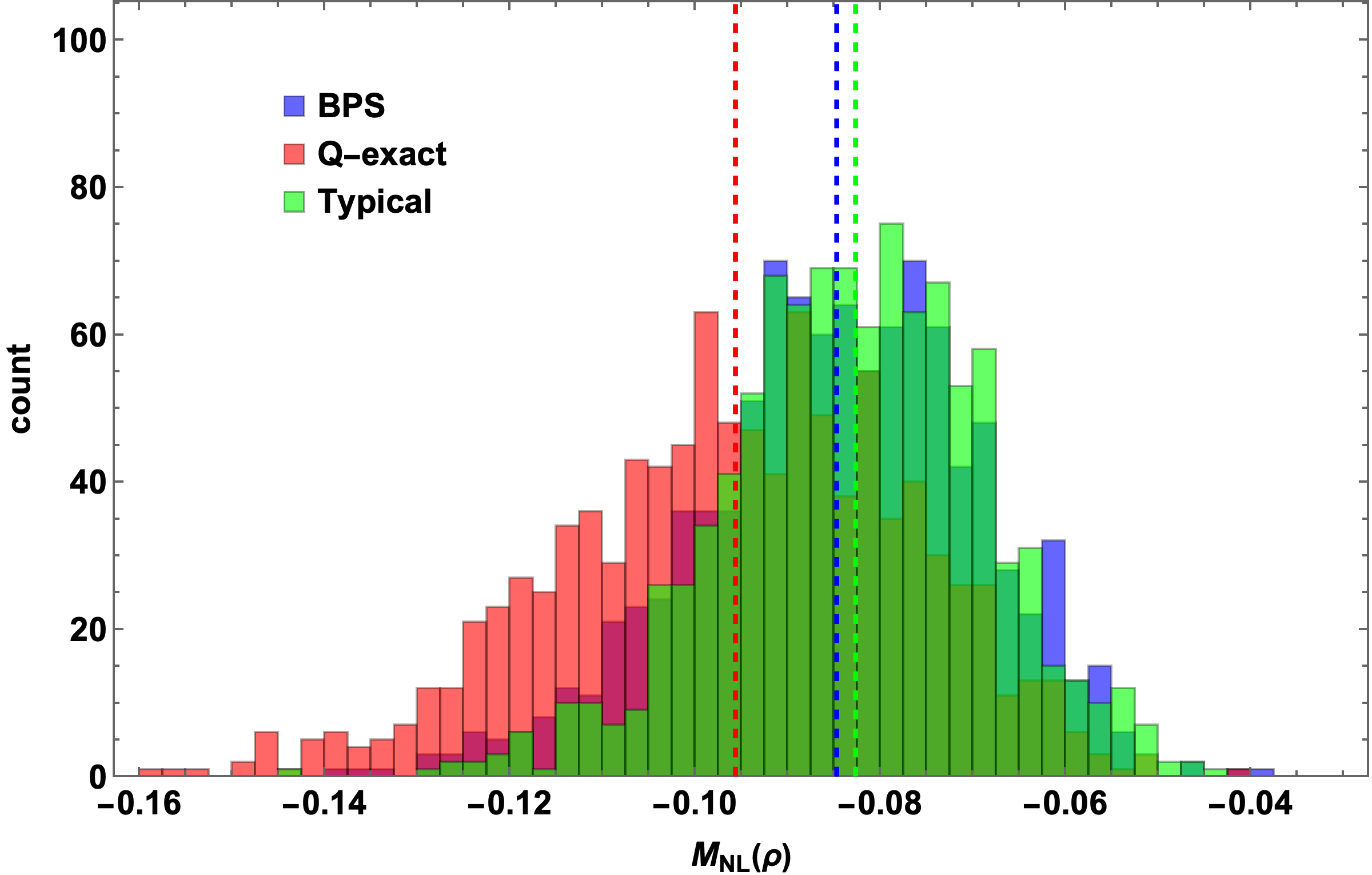}
    \subcaption{$N=9$, $q_R=4$}\label{NLSUSY94}
  \end{subfigure}\quad\\
    \begin{subfigure}{.45\linewidth}
\includegraphics[height=4.5cm,width=\linewidth]{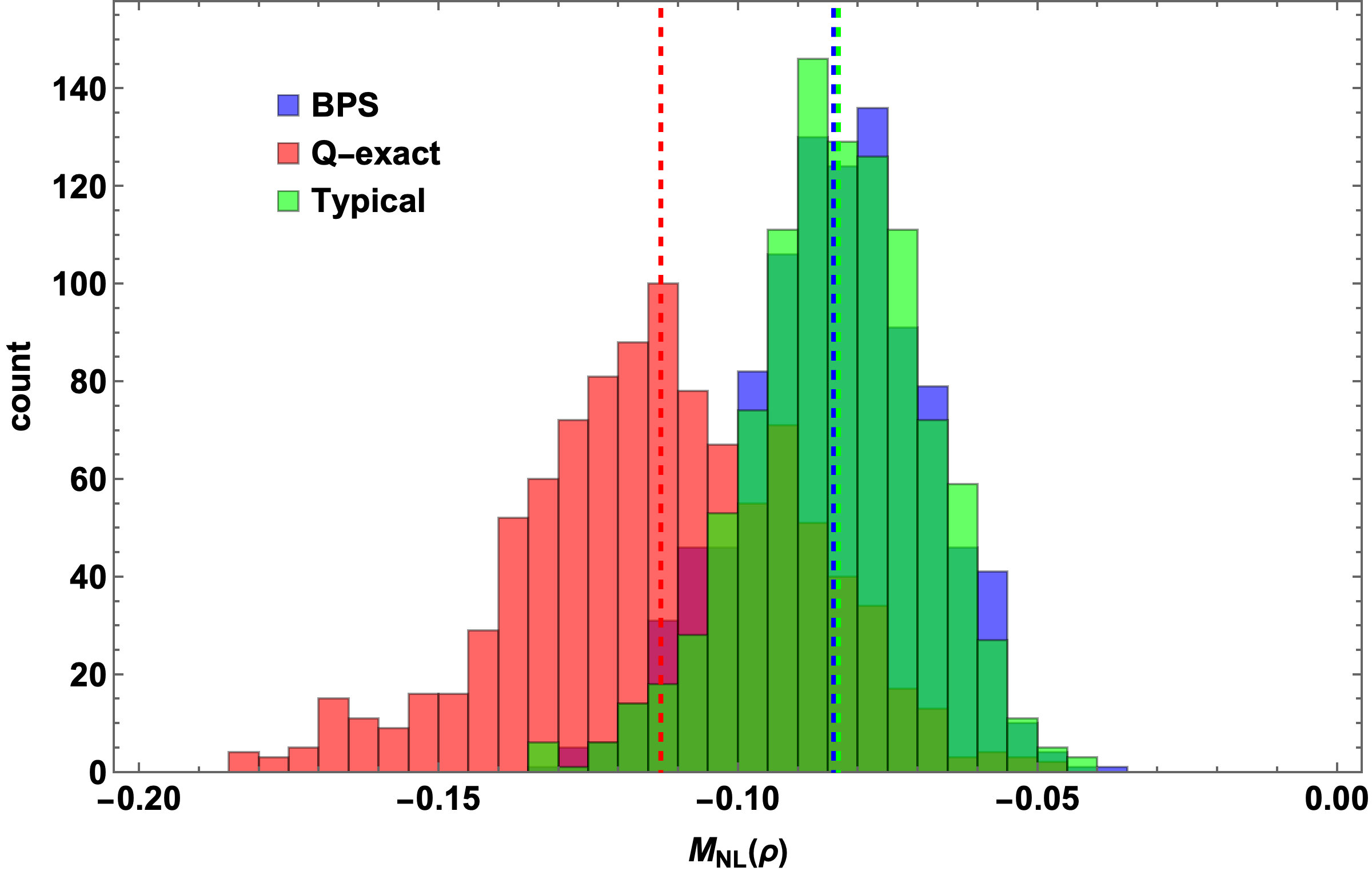}
    \caption{$N=9$, $q_R=5$}\label{NLSUSY95}
  \end{subfigure}
  \begin{subfigure}{.45\linewidth}
\includegraphics[height=4.5cm,width=\linewidth]{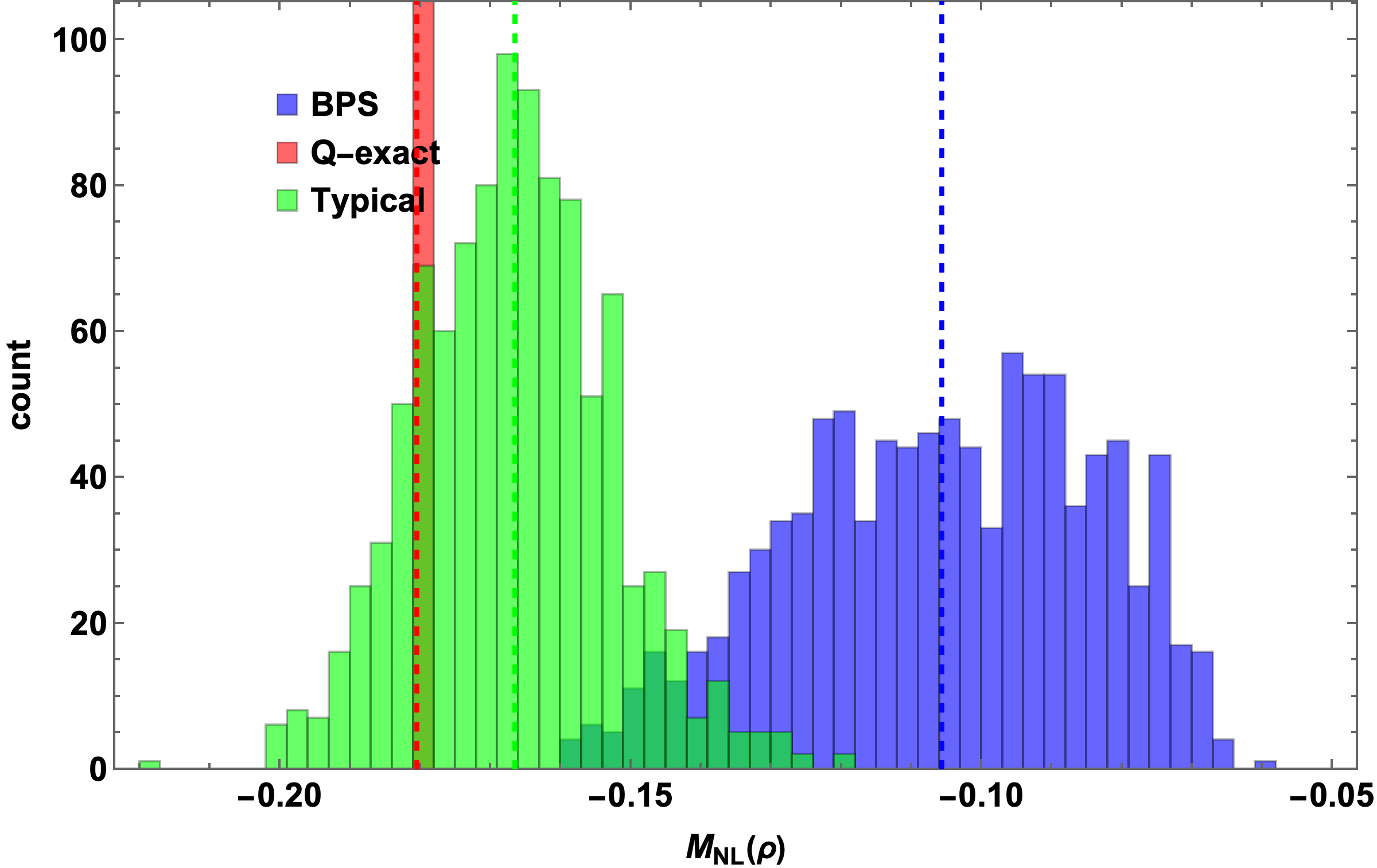}
    \caption{$N=9$, $q_R=6$}\label{NLSUSY96}
  \end{subfigure}
   \caption{\footnotesize{ The plot displays the distribution of the multipartite non-local stabilizer Rényi entropy among typical BPS states, typical $Q$-exact states, and states randomly sampled within the $R$-charge sector for $N=9$.
}}\label{NLSUSY9}
\end{figure}

\section{Conclusion}\label{sec 7}

To summarize, in this work we investigated the structure of quantum magic quantified through the stabilizer Rényi entropy and its multipartite non-local extension across the SYK model and several of its deformations. Our analysis began with a generalization of the notion of multipartite non-local stabilizer entropy, designed to isolate the component of magic arising specifically from genuinely global, multipartite correlations within a quantum state. We established that this measure satisfies several desirable properties, including non-negativity, additivity for independent $n$-party subsystems, invariance under Clifford unitaries, and the existence of natural reference states with vanishing multipartite non-local magic. To illustrate its behavior, we derived analytic expressions for the multipartite non-local magic of generalized GHZ states and complemented these results with numerical evaluations for $n$-partite $W$-states.

We next explored the behavior of the SRE and the multipartite SRE in the SYK model, analyzing both their time evolution and their dependence on temperature. In addition to these global measures, we examined how the average probabilities contributing to the SRE decompose according to structural properties of the underlying Majorana operators. Specifically, we studied their distribution as a function of the Majorana string length, as well as their dependence on the number of qubits on which the strings have support. This allowed us to identify which operator sectors contribute most prominently to the generation of magic in different dynamical and thermal regimes. Notably, our finite-temperature analysis revealed a significant disparity between the stabilizer Rényi entropy of Thermal Pure Quantum (TPQ) states and the thermal ensemble. This observation admits an intriguing interpretation in the context of holography: regarding TPQ states as representative black hole microstates, we uncover a \textit{concealed complexity} where the immense computational hardness is carried by the specific microstructure, hidden from the coarse-grained thermodynamic description. This implies that the quantumness required for unitary dynamics is intrinsically present in individual microstates but washed out in the ensemble average.

These findings suggest that the SRE serves as a fine-grained probe of the black hole interior, capable of detecting the magical resources that distinguish a unitary microstate from a semi-classical geometry. A compelling future direction is to investigate whether this concealed complexity correlates with other diagnostics of late-time quantum chaos, such as the spectral form factor or the emergence of replica wormholes, thereby bridging the gap between quantum information measures and gravitational path integrals.

Following that, we looked at the SRE and multipartite non-local SRE in variants of the SYK model namely the mass deformed SYK and the sparse SYK model. For the mass-deformed SYK$_4$ model, we observed that the SRE and multipartite SRE initially rise toward the SYK$_2$ saturation value and only later relax to the SYK$_4$ value, with the saturation time increasing monotonically with the deformation parameter $g$. The mass term modifies the operator content by activating contribution of Majorana strings of all even lengths, rather than only multiples of four as in pure SYK$_4$. The multipartite SRE also exhibits an early-time dip whose position shifts systematically with the deformation parameter. In the sparse SYK model, SRE decrease as the sparsity increases, with the saturation values suppressed for smaller $p$ (or $n_s$). The multipartite SRE displays a characteristic dip-and-rise structure, and the associated dynamical timescale  grow with increasing sparsity.

Finally, we analyzed the distribution of SRE and multipartite SRE in the ${\cal N}=2$ supersymmetric SYK model. Our focus was on the fortuitous BPS states, whose cohomological structure prevents a straightforward large-$N$ continuation and which are expected to represent black-hole microstates in the holographic dual. For each allowed $R$-charge sector, we computed the quantum magic through both SRE and its multipartite non-local extension for BPS states, $Q$-exact states, and Haar-typical states. We found a distinct structure in the stabilizer Rényi entropy (SRE) across the allowed $R$-charge sectors. For $N=8$ and $N=9$, the BPS states in the central charge sectors exhibit noticeably larger SRE than those in the edge sectors, while for $N=7$ the two sectors are nearly identical. Typical states consistently possess the largest SRE within each charge sector. In central charge sectors $Q$-exact states have the smallest, placing the BPS states at an intermediate level of stabilizer complexity. 

The multipartite non-local SRE, however, displays a more intricate pattern. For $N=7$ the two allowed charge sectors coincide, and for $N=9$ the multipartite SRE again peaks in the central sectors, mirroring the behavior of the ordinary SRE. In contrast, for $N=8$ the trend reverses: the edge charge sectors show larger multipartite non-local SRE than the central one. This illustrates that global multipartite correlations need not follow the same charge-sector hierarchy as the total stabilizer magic, revealing a richer structural dependence encoded in the multipartite measure.

An interesting future direction is to extend our multipartite magic diagnostics to larger system sizes and explore their behavior in the true large-$N$ limit of SYK-type models. It would also be valuable to investigate whether fortuitous BPS microstates exhibit distinctive signatures of magic dynamics in higher dimensions. Finally, applying these measures to other holographic systems may help clarify the role of quantum magic in gravitational dualities. We hope to come back to these interesting issues in near future.

\acknowledgments

V.M., Y.S. and J.Y. was supported by the National Research Foundation of Korea (NRF) grant funded by the Korean government (MSIT) (RS-2022-NR069038, RS-2025-25466315) and by the Brain Pool program funded by the Ministry of Science and ICT through the National Research Foundation of Korea (RS-2023-00261799).

\bibliographystyle{JHEP}

\bibliography{Manabib}

\end{document}